\DeclareSymbolFont{cyrillic}{T2A}{cmr}{m}{n}
\DeclareMathSymbol{\cL}{\mathalpha}{cyrillic}{203}
\newtheorem{theorem}{Theorem}
\newtheorem{conjecture}[theorem]{Conjecture}
\newtheorem{corollary}[theorem]{Corollary}
\newtheorem{definition}[theorem]{Definition}
\newtheorem{lemma}[theorem]{Lemma}
\newtheorem{proposition}[theorem]{Proposition}
\newtheorem{remark}[theorem]{Remark}
\newcommand{\R}{\mathbb{R}}
\newcommand{\Z}{\mathbb{Z}}
\renewcommand{\P}{\mathbf{P}}
\renewcommand{\C}{\mathbb{C}}
\renewcommand{\L}{\mathcal{L}}
\renewcommand{\O}{\mathcal{O}}
\renewcommand{\d}{\mathrm{d}}
\newcommand{\so}{\mathfrak{so}}
\newcommand{\SO}{\mathrm{SO}}
\newcommand{\rO}{\mathrm{O}}
\renewcommand{\sl}{\mathfrak{sl}}
\newcommand{\g}{\mathfrak{g}}
\newcommand{\Sym}{\mathrm{Sym}}
\newcommand{\sd}{\partial}
\newcommand{\Maps}{\mathrm{Maps}}
\newcommand{\Trig}{\mathrm{Trig}}
\newcommand{\Id}{\mathrm{Id}}
\newcommand{\GL}{\mathrm{GL}}
\newcommand{\tGr}{\widetilde{\mathrm{Gr}}}
\newcommand{\rd}{\mathrm{d}}
\newcommand{\ff}{\mathfrak{f}}
\newcommand{\fp}{\mathfrak{p}}
\newcommand{\tr}{\mathrm{tr}}
\newcommand{\e}{b}
\title{A note on  $\sigma$-model with the target  $S^n$}
\author{M.V.Movshev}
\begin{document}
\maketitle
\tableofcontents
\section{Introduction}

It is  well known that the Schr\"odinger operator derived from a general quantum field theory has to be defined on an infinite-dimensional space. For example if the  theory is a sigma-model with  target a sphere \[S^k=\{n\in \mathbb{R}^{k+1}|n\cdot n=R^2\}\] (in physics jargon the theory of $n$-field or $\rO(k+1)$-model) then  the Schr\"odinger operator $H$ has to be defined on the loop space $\L(S^k)=\mathrm{Maps}(S^1,S^k)$. It seems to be a formidable task to define   $H$ acting directly in  $L^2(\L(S^k))$ because of the famous quantum field theory divergencies.

In this paper we investigate the idea  to use instead of $\L(S^k)$ some finite-dimensional approximations  which we denote by $\L_N(S^k)$.
Our hope is that on these  $\L_N(S^k)$ the sigma-model's Schr\"odinger operator $H$ is much better defined and in the limit $N\to \infty$ we will be able to recover spectral property of the full theory.  More precisely, we let 
\begin{equation}\label{D:harmonics}
\Trig_N(\R^{k+1})=\{n( \theta)=v+\sum_{s=1}^Na[s]\cos(s\theta)+b[s]\sin(s\theta)=\sum_{s=0}^Nn_s(\theta)| v,a[s],b[s]\in \mathbb{R}^{k+1}\}
\end{equation} be the space of trigonometric polynomials of degree $N$. $\Trig_N$ contains a subspace  
\begin{equation}\label{E:LNdef}
\L_N(S^k)=\{n(z)\in \Trig_N| n(z)\cdot n(z)=R^2\ \forall z=\exp(\sqrt{-1}\theta)\in \mathbb{C}^{*}\}
\end{equation}
to approximate $\L(S^k)$. It follows immediately form the definition that $n(z)$ is a Laurent polynomial.  The spaces $\L_N(S^k)$ turns to be  Riemannian stratified spaces. For such spaces  definition  of $L^2(\L_N(S^k))$ is straightforward.   The tradeoff is that $\L_N(S^k)$ is singular and this complicates the analysis. 

The spaces $\L_N(S^k)$ are ordered by inclusion:
\[\L_0(S^k)\subset \L_1(S^k)\subset \cdots\]
It turns out that the space of smooth points $\tilde{\L}_N(S^k)$ coincides with $\L_N(S^k)\backslash \L_{N-1}(S^k)$. It is not clear at the moment what boundary conditions on functions $C^{\infty} (\tilde{M}_N(S^k))$ to make the closure of the operator self-adjoint. 

A possible round-about is to remove a tubular neighborhood  $U(\L_{N-1}(S^k))$ in $\L_{N}(S^k)$. The complements $\tilde{\L}_{N,U}(S^k)=   \L_{N}(S^k)\backslash U(\L_{N-1}(S^k))$ is a manifold with a boundary. A Dirichlet or Neumann eigenvalue problem can serve as a replacement to the unknown boundary conditions. The hope is that under certain assumptions in the limit when the size of $U$ goes to zero the spectra $\sigma(H_{\tilde{\L}_{N,U}(S^k)})$ of the operators converge to $\sigma(H_{\tilde{\L}_{N}(S^k)})$ and it is discrete.

One of the fundamental questions posed by physicists  is to establish  the mass gap estimates in the context of sigma models. In our language it would mean that we need to find gap estimates for $\sigma(H_{\tilde{\L}_{N,U}(S^k)})$ and $\sigma(H_{\tilde{\L}_{N}(S^k)})$ when $N\rightarrow \infty$.
\begin{conjecture}(C.f. \cite{DeligneKazhdan} p.1175) Let $\Gamma_N$ be the difference $\lambda_2-\lambda_1$ of the first two eigenvalues of  Neumann (Dirichlet?) problem on $\tilde{\L}_{N,U}(S^k)$. Then there are constant $C_1,C_2>0$ such that \[\Gamma(N)\geq 
C_1{ N^2}e^{-C_2R^2}\]
\end{conjecture}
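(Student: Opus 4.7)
The plan is to set $H=-\Delta_g+V$ with $V(n)=\int_{S^1}|\partial_\theta n|^2\,\mathrm{d}\theta$ on $\tilde{\L}_{N,U}(S^k)$, where $\Delta_g$ is the Laplace--Beltrami operator for the metric induced from $L^2(S^1,\R^{k+1})$. The critical set $\{V=0\}$ is the submanifold of constant loops, a copy of $S^k$ of radius $R$ sitting diagonally in $\L_N(S^k)$. In the Fourier coordinates $n(\theta)=v+\sum_{s=1}^N(a[s]\cos s\theta+b[s]\sin s\theta)$, expansion of the constraint $|n(\theta)|^2=R^2$ forces $a[s],b[s]\in v^\perp$ to leading order near the minimum, and $V$ restricts to $V_0=\pi\sum_s s^2(|a[s]|^2+|b[s]|^2)$. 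The rescaling $a[s]=R\tilde a[s]$ produces $V=\pi R^2\sum s^2(|\tilde a[s]|^2+|\tilde b[s]|^2)$, exhibiting $R^2$ as the semiclassical large parameter.

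The second step is a symmetry reduction. The group $\rO(k+1)\times S^1$ (target rotations together with loop reparameterisations) acts isometrically on $\L_N(S^k)$ and commutes with $H$, so one decomposes $L^2(\tilde{\L}_{N,U}(S^k))=\bigoplus_\rho \H_\rho$ into isotypic components. The ground state sits in the trivial component, and a min--max comparison then reduces the bound on $\Gamma_N$ to the smaller of two candidate channels: a \emph{base} excitation on the zero-mode $S^k$ (with eigenvalue $k/R^2$ from the round Laplacian) and a \emph{fibre} excitation along one of the Fourier modes $s=1,\dots,N$ (with semiclassical harmonic frequency scaling like $s$ in the rescaled coordinates).

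The exponential factor $e^{-C_2 R^2}$ should be extracted from an Agmon / Witten-deformation estimate. The $R^2$ growth of $V$ transverse to $S^k$ forces low-lying eigenfunctions to decay like $\exp(-R\,\rho_{\mathrm{Ag}})$ away from the zero-mode sphere, where $\rho_{\mathrm{Ag}}$ is the Agmon distance in the rescaled metric; since the available well width in the unrescaled coordinates is itself of order $R$, the Agmon integral is of order $R^2$. This suppresses the tunneling between distant classical configurations on $S^k$ and is the source of the $e^{-C_2R^2}$ gap. The $N^2$ prefactor should appear from the stiffest transverse direction, namely mode $s=N$, which dominates the Hessian of $V$ along the fibre and sets the ultraviolet scale of the harmonic approximation feeding the WKB prefactor.

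The hardest step, in my estimation, is the synthesis of these local ingredients into a uniform lower bound on the singular stratified space $\L_N(S^k)$. The constraint $|n(\theta)|^2=R^2$ couples the Fourier modes beyond the leading order, so there is no globally adapted coordinate system, and one must show that the boundary condition on $\sd U(\L_{N-1}(S^k))$ does not introduce spurious low eigenvalues as $|U|\to 0$. More fundamentally, the quantitative balance $C_1 N^2 e^{-C_2 R^2}$ between the ultraviolet scale $N$ and the infrared scale $R$ is precisely the content of the dynamically generated mass gap of the sigma model, and rigorously extracting it uniformly in $N$ is likely to require tools beyond mode-by-mode WKB---most plausibly a Witten-deformation of the de~Rham complex or a cluster / renormalisation-group expansion in the Fourier modes adapted to this finite-dimensional regularisation.
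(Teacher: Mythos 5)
The statement you are addressing is, in the paper, only a conjecture: the paper offers no proof of the bound $\Gamma(N)\geq C_1N^2e^{-C_2R^2}$. The most it establishes in this direction are the curvature and second-fundamental-form computations for $\L_N(S^k)$ (intended to feed the general Neumann-gap theorems of Chen and Meyer quoted in the introduction) and, for $N=1$, a radial Sturm--Liouville analysis on $\L_1(S^k)$ which, via Lavine's convex-potential theorem, yields $\lambda_2-\lambda_1\geq 12/R^2$ for $k\geq 5$ and $R$ small. So there is no paper proof to compare your route against; your proposal must stand on its own, and, as you yourself concede, it is a program rather than a proof: the Agmon/Witten estimates, the isotypic reduction of the ground state, the $N^2$ prefactor, and above all uniformity in $N$ are asserted, not established.

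Beyond this admitted incompleteness there is a concrete mismatch with the eigenvalue problem actually posed. The operator lives on $\tilde{\L}_{N,U}(S^k)=\L_N(S^k)\setminus U(\L_{N-1}(S^k))$, and $\L_{N-1}(S^k)$ contains $\L_0(S^k)$, the constant loops. Hence the zero set of your potential $V(n)=\int_{S^1}|\partial_\theta n|^2$ --- the zero-mode sphere around which your entire semiclassical picture (harmonic well, trivial-isotypic ground-state channel, Agmon decay away from the well) is built --- has been excised from the domain, together with a neighborhood of every lower stratum. On the actual domain the minimum of $V$ sits on the Neumann boundary $\sd U(\L_{N-1}(S^k))$, low-lying eigenfunctions concentrate there, and the gap is governed by the boundary-value problem near the tube; this is exactly what the paper's $N=1$ analysis probes, and its partial answer, $\geq 12/R^2$, notably carries no exponential factor, suggesting a mechanism different from your tunneling picture at least for small $N$. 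Your framework would need to be rebuilt around this boundary concentration (or around a controlled limit as the tube shrinks, with the singular strata handled), and the danger you flag --- spurious low Neumann eigenvalues as $|U|\to 0$ --- is precisely the step for which you give no argument. As it stands the proposal is a plausible heuristic roadmap for why a bound of the conjectured shape might hold, but it does not close any of the gaps that the paper itself leaves open.
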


 It is well known that the sets of eigenvalues $\{\eta_k\}$ of  a positive self-adjoint operator $H_{\tilde{\L}_{N,U}(S^k)}$ can be arranged in a nondecreasing order of magnitude as follows:\[\eta_1\leq \eta_2 \leq\cdots \leq \eta_k \leq\cdots \rightarrow \infty\]
 Much of the work has been done in mathematical community to estimate the difference $\Gamma_k=\eta_k-\eta_1$  for the Neumann problem for a general Schr\"{o}dinger operator in terms of differential geometry of the underlying manifold \cite{Chen},\cite{Chen1990},\cite{WangP}. 
 
Here is some relevant definitions: let $(M^m, g_{ij})$ be an $m$-dimensional compact Riemannian manifold with metric $g_{ij}$ and $\sd M \neq  \emptyset$ be the boundary of $M$. Let $\Delta$ be the Laplacian operator associated to $g_{ij}$ on $M$ then we define on $M$ a Schr\"{o}dinger operator by $\Delta - q(x)$, where $q(x) \in  C^2(M)$. We consider the following Neumann
eigenvalue problem: 
 \[\begin{split}
& 	\Delta u-qu=-\eta u \text{ in } M\\
& \frac{\sd u}{\sd \nu} = 0	 \text{ on } \sd M.
\end{split}
\]

\begin{definition}{\rm 
$\sd M$ is said to satisfy the "interior rolling $R$-ball" condition if for each point $p\in  \sd M$ there is a geodesic ball $B_q(\frac{R}{2})$, centered at $
q\in M$ with radius $R/2$,such that $p=B_q(\frac{R}{2})\cap\sd M$ and $B_q(\frac{R}{2})\subset M$.
}\end{definition}
\begin{theorem}\cite{Chen}\label{E:ch1}	Let $M^m$, $m \geq 3$, be an $m$-dimensional compact manifold with boundary $\sd M$ and $\omega(q) = \sup q - \inf q$ denote the oscillation of $q$ over $M$. Suppose that the Ricci curvature of $M$ satisfies $R_{ij} +Kg_{ij}$ is positive-definite, $K \geq 0$ and the second fundamental form $\alpha_{ij}$ of $\sd M$ with respect to outward pointing unit normal $\nu$ satisfies $\alpha +Hg_{ij}|_{\sd M}$ positive-definite $ H \geq 0$. Suppose that $\sd M$ also satisfies the "interior rolling $r$-ball condition" with $r$ chosen small (see \cite{Chen1990}, \cite{WangP} for the choice of r). Then the gap $\Gamma_k$ of $k$-th Neumann eigenvalues $\eta_k$ and $\eta_1$ of $M$ satisfies
$\Gamma_k \geq \alpha_1k^{\frac{2}{m}}$
for all $k = 2,3,\dots$ and some explicitly computable constant $\alpha_1$ depending on $m, K, H, r$ diameter of $M$ and $\omega(q)$.
\end{theorem}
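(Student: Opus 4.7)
The plan is to combine a Bochner/Reilly-type gradient estimate for the ground state with a heat-kernel trace estimate, and then invert the heat trace to read off $k^{2/m}$ growth of the gap.

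First, I would carry out a ground-state substitution. Let $\phi_1>0$ be the first Neumann eigenfunction of $L=\Delta-q$ with eigenvalue $\eta_1$; positivity follows from the maximum principle since $q\in C^2(M)$. For any other eigenfunction write $u=\phi_1 v$. A direct computation converts $Lu=-\eta u$ into
$$
\tilde L v \;:=\; \Delta v + 2\nabla\log\phi_1\cdot\nabla v \;=\; -(\eta-\eta_1)v,
$$
with $\partial_\nu v=0$ inherited from $\partial_\nu\phi_1=0$. Thus $\Gamma_k$ is exactly the $k$-th Neumann eigenvalue of the weighted Laplacian $\tilde L$ on $(M,\phi_1^2\,dV)$, and estimating it from below is reduced to Weyl-type bounds for $\tilde L$.

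Second, I would establish uniform control on the drift $\nabla\log\phi_1$. Setting $f=\log\phi_1$, one has $\Delta f+|\nabla f|^2=q-\eta_1$. Applying Bochner,
$$
\tfrac12\Delta|\nabla f|^2 \;\geq\; \tfrac{(\Delta f)^2}{m}+\nabla f\cdot\nabla\Delta f+\mathrm{Ric}(\nabla f,\nabla f),
$$
and combining with Reilly's formula for the boundary terms, the hypothesis $\mathrm{Ric}+Kg\geq 0$ bounds the curvature term, while $\alpha+Hg|_{\partial M}\geq 0$ together with $\partial_\nu\phi_1=0$ absorbs the boundary contribution with the correct sign. The oscillation $\omega(q)$ controls the $q$-term. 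The interior rolling $r$-ball condition is used precisely here: near each boundary point it provides a one-parameter family of inscribed geodesic balls against which one can localize the Bochner inequality and run a barrier comparison, yielding a quantitative gradient bound $|\nabla\log\phi_1|\leq C(m,K,H,r,\omega(q))$. Integrating along minimizing geodesics then gives a two-sided bound $c\leq \phi_1\leq c^{-1}$ in terms of $\mathrm{diam}(M)$.

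Third, with bounded drift $\nabla\log\phi_1$, the operator $\tilde L$ is uniformly elliptic with the same principal symbol as $\Delta$ on a compact manifold-with-boundary satisfying two-sided curvature and boundary-convexity bounds. Standard Li--Yau/Wang heat-kernel estimates (valid under exactly these hypotheses once the rolling ball is in place) yield
$$
\sum_{j\geq 1}e^{-t\Gamma_j}\;=\;\int_M p_t^{\tilde L}(x,x)\,\phi_1(x)^2\,dV(x)\;\leq\;\frac{C_0\,\mathrm{vol}(M)}{t^{m/2}},\qquad 0<t\leq t_0,
$$
with $C_0,t_0$ depending only on $m,K,H,r,\mathrm{diam}(M),\omega(q)$. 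Using the elementary inequality $e^{-t\Gamma_j}\geq e^{-1}$ for $t\Gamma_j\leq 1$ gives $\#\{j:\Gamma_j\leq 1/t\}\leq eC_0\,\mathrm{vol}(M)t^{-m/2}$, which upon inversion becomes $\Gamma_k\geq \alpha_1 k^{2/m}$ with $\alpha_1$ explicit in the stated data.

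The main obstacle is the second step: the boundary contribution in Reilly's formula involves the second fundamental form contracted with $\nabla u$, and without a rolling-ball hypothesis the boundary geometry can concentrate badly enough to ruin the gradient bound. Choosing $r$ small (in the precise sense of \cite{Chen1990,WangP}) is what makes the boundary terms absorbable into the interior Bochner term, and this is where the explicit dependence of $\alpha_1$ on $r,H,K$ is actually pinned down; the heat-kernel step is then comparatively routine.
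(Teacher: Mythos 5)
This statement is quoted verbatim from \cite{Chen}; the paper itself gives no proof of it (it is background used to motivate the curvature computations), so there is no internal argument to compare against --- the relevant comparison is with the cited source. Your strategy is the standard one and, as far as the architecture goes, it is the route the cited literature takes: ground-state transformation $u=\phi_1 v$ turning the gap spectrum into the Neumann spectrum of the drift Laplacian on $(M,\phi_1^2\,dV)$, a gradient estimate for $\log\phi_1$ in which the interior rolling $r$-ball condition compensates for the possibly concave boundary ($\alpha\geq -Hg$ only), and a heat-kernel trace bound followed by a Weyl-type inversion to extract $k^{2/m}$. The reductions you do carry out (the identity $\Delta f+|\nabla f|^2=q-\eta_1$, the inheritance of the Neumann condition by $v$, the trace-to-counting step) are correct.

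Two places, however, are asserted rather than proved, and they are exactly where the content of the theorem lives. First, the gradient bound $|\nabla\log\phi_1|\leq C(m,K,H,r,\omega(q))$: saying that the rolling ball lets you ``absorb'' the boundary term is a statement of the goal, not an argument. With $\partial_\nu\phi_1=0$ one gets $\partial_\nu|\nabla f|^2=-2\alpha(\nabla f,\nabla f)$ on $\partial M$, which has the wrong sign when the boundary is concave; the known fix (as in \cite{Chen1990} and \cite{WangP}) is to build an explicit auxiliary function supported in the rolling-ball collar and maximize a modified quantity so that the Hopf-lemma sign works out, and the quantitative smallness condition on $r$ enters precisely there. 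Without reproducing that construction your step two is a sketch, not a proof. Second, the inversion of the trace bound only yields $\Gamma_k\geq c\,k^{2/m}$ for those $k$ with $\Gamma_k\geq 1/t_0$; for the (boundedly many) remaining indices you must invoke a positive lower bound on the first gap $\Gamma_2$ --- itself a nontrivial theorem of the same family, essentially the second quoted result of \cite{Chen1990} --- before you can shrink $\alpha_1$ to cover all $k\geq2$. You should also note that $\mathrm{vol}(M)$ in your trace bound has to be traded for the diameter via Bishop--Gromov (using $\mathrm{Ric}\geq-(m-1)K'$) so that $\alpha_1$ depends only on the data listed in the statement. None of this breaks your outline, but as written it would not stand alone as a proof.
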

Here are some  formulations for the problem with zero potential.
\begin{theorem}\cite{Chen1990}{\rm
Let $M$ be as in Theorem \ref{E:ch1}. Then the first nonzero Neumann
eigenvalue $\lambda_1$ of Laplacian operator has a lower bound given by

\[\lambda_1\geq C_p=\frac{1}{1+H^2}\left(\frac{1-\alpha^2}{4(m-1)d^2}B_1^2-B_2\right)\exp(-B_1)\]
$\alpha$and $R$ are positive constants less then one,
\[d=\text{ diameter of } M\]
\[B_1=1+\left(1+\frac{4(m-1)d^2B_2}{1-\alpha^2}\right)^{\frac{1}{2}}\]
\[B_2=(1+H)B_3+\frac{((2m-3)^2+(4m-5)\alpha^2)H^2}{(m-1)R^2\alpha^2}+(1+H)^2K\]
\[B_3=\frac{2(m-1)H(1+H)(1+3H)}{R}+\frac{H(1+H)}{R^2}\]
}\end{theorem}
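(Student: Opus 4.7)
The plan is to obtain a pointwise gradient estimate for a first nontrivial Neumann eigenfunction $u$ (satisfying $\Delta u=-\lambda_1 u$ and $\partial u/\partial\nu=0$ on $\sd M$) by the Bochner--Weitzenb\"ock technique, and then convert that estimate into a lower bound on $\lambda_1$ by integrating along a minimizing geodesic joining the extrema of $u$. This is the standard Li--Yau / Zhong--Yang scheme adapted to a manifold with boundary.

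First I would normalize $u$ so that $\max u=1$ and $\min u=-\alpha$ with $\alpha\in(0,1]$ (this is where the parameter $\alpha$ in the statement enters), and introduce an auxiliary function of the form
\[P=|\nabla u|^2+A(u-c)^2,\]
with constants $A$ and $c$ chosen so that $P$ is amenable to the maximum principle. Applying Bochner's formula
\[\tfrac{1}{2}\Delta|\nabla u|^2=|\nabla^2 u|^2+\mathrm{Ric}(\nabla u,\nabla u)+\langle\nabla u,\nabla\Delta u\rangle,\]
together with the Ricci lower bound $\mathrm{Ric}\ge -K$ and the eigenfunction equation, yields in the interior of $M$ a differential inequality of the Cauchy--Lipschitz type satisfied by $P$, whose coefficients depend on $m$, $K$ and $\lambda_1$.

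The main obstacle lies at the boundary. If the maximum of $P$ is attained at $p\in\sd M$ then Hopf's lemma forces $\partial P/\partial\nu(p)\ge0$. A direct calculation, using $\partial u/\partial\nu=0$ and commuting $\nabla$ with $\nu$, produces a term $-2\alpha_{ij}u^iu^j$ involving the second fundamental form of $\sd M$. The hypothesis $\alpha+Hg|_{\sd M}\ge0$ converts this into a bound $\ge -2H|\nabla u|^2$, but this only buys the factor $1+H$ repeatedly; to close the argument one has to rule out the possibility that $P$ concentrates in a thin boundary layer. This is exactly where the interior rolling $R$-ball condition is used: one compares $u$ with a model function on a geodesic ball $B_q(R/2)\subset M$ that touches $\sd M$ at $p$, and the penalty of transferring the estimate across that ball is the source of the terms
\[B_3=\tfrac{2(m-1)H(1+H)(1+3H)}{R}+\tfrac{H(1+H)}{R^2},\quad B_2=(1+H)B_3+\tfrac{((2m-3)^2+(4m-5)\alpha^2)H^2}{(m-1)R^2\alpha^2}+(1+H)^2K,\]
and of the prefactor $1/(1+H^2)$.

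Once the pointwise bound $|\nabla u|^2\le \lambda_1(C-u^2)$ (with $C$ depending on $B_1,B_2$) is in force on all of $M$, the lower bound on $\lambda_1$ is extracted by choosing points $x_\pm$ with $u(x_\pm)=\pm 1,-\alpha$, joining them by a minimizing geodesic $\gamma$ of length at most $d$, and integrating $du/\sqrt{C-u^2}$ along $\gamma$. The differential inequality from Bochner is not linear in $P$, so the integration produces a Gronwall factor rather than a plain arithmetic one; this is how the explicit $\exp(-B_1)$ and the nesting of $d$ inside $B_1=1+(1+4(m-1)d^2B_2/(1-\alpha^2))^{1/2}$ appear. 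I expect the boundary estimate, namely controlling the sign of $\sd P/\sd\nu$ at a boundary extremum using only the one-sided bound $\alpha\ge -Hg$, to be the hard step; everything else is bookkeeping of constants that the rolling-ball parameter $R$ and the diameter $d$ enter through.
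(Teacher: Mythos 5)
This statement is not proved in the paper at all: it is quoted verbatim from the cited reference (R.~Chen, \emph{Neumann eigenvalue estimate on a compact Riemannian manifold}, Proc.\ AMS 1990) and is used only as motivation for the curvature computations later in the text. So there is no in-paper proof to compare against; what can be assessed is whether your proposal would actually establish the quoted bound. Your identification of the method is correct and is indeed the method of the cited source: a Bochner--Weitzenb\"ock gradient estimate for a first Neumann eigenfunction, a maximum-principle argument for an auxiliary function $P=|\nabla u|^2+A(u-c)^2$, a boundary analysis in which $\sd u/\sd\nu=0$ and the second fundamental form bound $\alpha_{ij}+Hg_{ij}\ge 0$ control $\sd P/\sd\nu$, the interior rolling $R$-ball condition to handle the boundary layer, and finally integration of $du/\sqrt{C-u^2}$ along a minimizing geodesic of length $\le d$.

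The gap is that the decisive quantitative content of the theorem is exactly the part you defer. The theorem is not the qualitative assertion ``$\lambda_1$ is bounded below in terms of $m,K,H,R,d$''; it is the explicit constant $C_p$ with the specific expressions $B_1,B_2,B_3$, the prefactor $1/(1+H^2)$, and the factor $\exp(-B_1)$. Your outline never carries out (i) the boundary computation showing that at a boundary maximum of the auxiliary function the normal derivative inequality can be closed using only $\alpha_{ij}\ge -Hg_{ij}$ together with the rolling-ball comparison --- in Chen's argument this requires modifying $P$ by an explicit cutoff/auxiliary function built from the distance to $\sd M$ inside the rolling ball, and it is this modification, not a bare Hopf-lemma sign argument, that produces $B_3$ and the $H^2/(R^2\alpha^2)$ term in $B_2$; and (ii) the bookkeeping that turns the resulting differential inequality into the stated $B_1$ and the exponential loss. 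You also leave the role of $\alpha$ ambiguous: in the statement $\alpha$ is a free parameter in $(0,1)$ entering the constants, whereas you introduce it as the normalized minimum of $u$; these must be reconciled (in the source it enters through the choice of the auxiliary function, with the final bound then optimized or stated for arbitrary admissible $\alpha$). As it stands the proposal is a correct strategy sketch of the known proof, but it does not constitute a proof of the stated inequality, since every explicit constant in the conclusion originates in the steps you have only described and not executed.
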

\begin{theorem}\label{T:M}
\cite{Meyer}{\rm \
Let $n\geq 2$ be an integer and let $D,a,e,K$ be real numbers with $D>0$ and $a>0$. Then a continuous function $C(n,e,a,D,K)$ of these numbers can be constructed, such that for any compact, Riemannian manifold $(M,\sd M,g)$ with diameter $d$ the inequality \[\lambda_1 d^2\geq C(n,e,a,K,D) \]holds, provided that the following conditions are satisfied: 
\begin{enumerate}
\item $d\leq D$, 
\item  the Ricci curvature is $\geq (n-1)K$, 
\item  the so-called "sectional radius'' (defined as the radius of injectivity of the exponential map of the normal bundle of $\sd M$ into $M$ is $\geq a$, and 
\item  the principal curvatures of the boundary are $\leq e$.
Where
\[C(n,e,a,K,D)=\alpha(n)\exp\left( -\beta(n)D\max\left(e,\frac{1}{a}\right)\right)\text{ if } K\geq 0\]
\[C(n,e,a,K,D)=\alpha(n)\exp\left( -\beta(n)D\max\left(e,\frac{1}{a}\right)-\gamma(n)D\frac{\sqrt{|K|}}{Th\left(\frac{a\sqrt{|K|}}{2}\right)}\right) \text{ if } K< 0\]
with
\[\alpha(n)=\frac{e^{-2}}{4(n-1)}\]
\[\beta(n)=16n\]
\[\gamma(n)=2(n-1)^{\frac{3}{2}}\]
\end{enumerate}
}\end{theorem}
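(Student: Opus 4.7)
The plan is to adapt the Li--Yau / Zhong--Yang gradient estimate technique, originally developed for closed manifolds, to the Neumann problem on $(M,\sd M,g)$. Let $u$ be a first nonzero Neumann eigenfunction, $\Delta u=-\lambda_1 u$ with $\sd u/\sd \nu=0$ on $\sd M$, normalized so that $\sup u=1$ and $\inf u=-k$ with $0<k\leq 1$. Introduce an auxiliary $P$-function of the form $P=|\nabla u|^2+\Psi(u)$, where $\Psi$ is calibrated (Zhong--Yang style) so as to cancel the leading-order contribution of the Hessian term when Bochner's identity is applied. The strategy is to prove a pointwise bound of the type $P\leq \lambda_1\cdot G(n,e,a,K,D)$; integration of the resulting estimate $|\nabla u|\leq\sqrt{\lambda_1}\,\sqrt{G-\Psi(u)}$ along a minimizing geodesic joining the max and min points of $u$, combined with $d\leq D$, then produces a lower bound for $\lambda_1 d^2$ of the stated form.

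The interior step is standard. Apply
\[\tfrac{1}{2}\Delta|\nabla u|^2=|\nabla^2 u|^2+\langle\nabla\Delta u,\nabla u\rangle+\mathrm{Ric}(\nabla u,\nabla u),\]
insert $\Delta u=-\lambda_1 u$ and the Ricci bound $\mathrm{Ric}\geq (n-1)K$, and use Kato's inequality $|\nabla^2 u|^2\geq (\Delta u)^2/n$. At an interior maximum of $P$ one has $\nabla P=0$ and $\Delta P\leq 0$, which after elimination of $|\nabla^2 u|^2$ gives a purely algebraic relation between $P$, $u$, $\lambda_1$ and $K$.

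The boundary step is the substantive part. On $\sd M$ the Neumann condition together with the Weingarten identity yields
\[\tfrac{1}{2}\frac{\sd}{\sd \nu}|\nabla u|^2=-\alpha(\nabla u,\nabla u),\]
where $\alpha$ is the second fundamental form. Since the principal curvatures are only bounded above by $e$ and may be negative, one cannot directly conclude $\sd P/\sd \nu\leq 0$, so a boundary maximum of $P$ is a priori possible. The remedy is to modify $P$ by a barrier $\phi(\rho)$ depending on the distance $\rho(\cdot)=\mathrm{dist}(\cdot,\sd M)$. Within the tubular neighborhood of radius $a$ guaranteed by the sectional-radius hypothesis, $\rho$ is smooth and its Hessian is controlled by Jacobi-field comparison with the $K$-model; choosing $\phi$ so that $\phi'(0)$ dominates $e$ forces $\sd(P+\phi\circ\rho)/\sd \nu\leq 0$, pushing the global maximum into the interior. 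The necessary decay rate of $\phi$ is $\sim\exp(-\beta(n)\max(e,1/a)\,\rho)$, and integrating across the diameter $\leq D$ produces precisely the factor $\exp(-\beta(n)D\max(e,1/a))$ in $C(n,e,a,K,D)$; when $K<0$ the comparison theorem replaces the model by a hyperbolic one, and the Jacobi factor $\tanh(a\sqrt{|K|}/2)$ enters the denominator of the decay exponent, producing the second exponential.

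The main obstacle is the barrier construction. One has to engineer $\phi\circ\rho$ so that $P+\phi\circ\rho$ simultaneously (i) absorbs the bad boundary term $-\alpha(\nabla u,\nabla u)$, (ii) still satisfies a usable maximum-principle differential inequality in the interior, and (iii) remains well defined on the set $\{\rho\geq a\}$ where $\rho$ may fail to be smooth --- typically one replaces $\rho$ there by a smooth extension bounded by $a$. The tension between these three requirements is what forces the exponential dependence on $D\max(e,1/a)$ in the final constant and is the technical crux of Meyer's argument; once it is resolved, the endgame is a one-dimensional ODE integration along a geodesic, yielding the explicit form of $C(n,e,a,K,D)$ recorded in the statement.
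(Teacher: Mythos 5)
First, a point of orientation: the paper does not prove this statement at all --- Theorem \ref{T:M} is quoted verbatim from Meyer's article \cite{Meyer} as background motivating the curvature computations of Section \ref{S:ricci}, so there is no in-paper argument to compare yours against; your attempt has to stand or fall as a proof of Meyer's theorem itself.

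As such it does not yet stand. What you have written is a strategy outline (Bochner identity plus a Zhong--Yang type $P$-function in the interior, a barrier $\phi\circ\rho$ built from the distance to $\sd M$ to neutralize the boundary term $-\alpha(\nabla u,\nabla u)$ coming from $\tfrac12\,\sd_\nu|\nabla u|^2$), and you yourself flag the barrier construction as ``the technical crux'' --- but you never carry it out. No explicit $\phi$ is exhibited; there is no verification that $P+\phi\circ\rho$ actually satisfies a maximum-principle inequality once the Hessian of $\rho$ (controlled only inside the tube of width $a$, and only in the barrier sense past the cut locus of $\sd M$) is fed back into the Bochner computation; and the extension of $\rho$ beyond the tube is waved at rather than constructed. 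Since the whole content of the theorem is that a \emph{specific, explicitly constructible} function $C(n,e,a,K,D)$ works --- with $\alpha(n)=e^{-2}/4(n-1)$, $\beta(n)=16n$, $\gamma(n)=2(n-1)^{3/2}$ and the precise factor $\sqrt{|K|}/\mathrm{Th}\bigl(a\sqrt{|K|}/2\bigr)$ in the negative-curvature case --- asserting that exponential factors ``of this shape would emerge'' from an unspecified barrier is exactly the part that cannot be left to the reader. Two smaller issues: the inequality $|\nabla^2u|^2\geq(\Delta u)^2/n$ is Cauchy--Schwarz, not Kato; and the final ``integrate along a minimizing geodesic from the max to the min of $u$'' step needs justification on a manifold with boundary, where such a segment may hit $\sd M$ and the gradient bound must be arranged to survive there. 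Until the barrier is constructed and the constants are actually extracted from it, the argument is a plausible plan rather than a proof.
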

This theorem motivates our interest in differential geometry of $\tilde{\L}_{N,U}(S^k)$. In this paper (Section \ref{S:ricci}) we are going to present explicit  formulas for Ricci curvature of $\tilde{\L}_{N,U}(S^k)$ the second quadratic form $\alpha$.

We also going to study (Section \ref{S:Schr}) Schr\"odinger equation on the space $\L_1(S^k)$. We derive asymptotic of radial eigenfunctions  near singular locus of $\L_1(S^k)$.

\section{The classical $\rO(k+1)$-sigma model}
In this section we are goin to remind the reader the basic definition related to the theory of n-field. First of all the field of the theory  is the smooth map $n:\R\times S^1\to \R^{k+1}$ subject to a constraint. To formulate the constraint we choose coordinates $t,\theta$ on $\R\times S^1$. Also we equip $\R^{k+1}$ with the dot-product such that for $x,y\in \R^{k+1}, x\cdot y=\sum_{i=1}^{n+1}x^iy^i$.  We will usually use an abbreviation $x^2:=x\cdot x$. We assume that $n$ satisfies
\begin{equation}\label{E:constraint}
n(t,\theta)^2-R^2=0
\end{equation}
\begin{equation}\label{E:period}
n(t,\theta+L)=n(t,\theta)
\end{equation}

Thus, effectively, $n$ is a map $\R\times S^1\to S^k$, where $S^1$ is a circle of radius $\frac{L}{2\pi}$ and $S^k$ is a sphere of radius $R$.
Lagrangian $L(n)\d t \d \theta$ of the theory is 
\begin{equation}\label{E:lag0}
L(n)\d t \d \theta=\frac{1}{2}\left(n_t^2-n_{\theta}^2\right)\d t \d \theta
\end{equation}
In our finite approximation the space of fields $\Maps(\R\times S^1, S^k)=\Maps(\R, \L(S^k))$ gets replaced by $\Maps(\R, \L_N(S^k))$.
The circle of questions typically discussed when Lagrangian has been written is what are the solutions of the equation of motion and what are spectral properties of the associated Schr\"{o}dinger operator. In the present situation we have to postpone answering these questions there is more pressing one: what are the singularities of  $\L_N(S^k)$ and is it possible to resolve them. We will answer these question in the next sections

\section{Resolution of singularities}
$\L(S^k)$ is a homogenous space of the loop groups $\L(\SO(k+1))\subset \L(\rO(k+1))$. It means that we have the action map 
\begin{equation}\label{E:faction}
\L(\rO(k+1))\times \L(S^k)\rightarrow  \L(S^k)
\end{equation}
and the product map
\begin{equation}\label{E:product}
\L(\rO(k+1))\times \L(\rO(k+1))\rightarrow \L(\rO(k+1))
\end{equation}
Only the action of constants $\rO(k+1)\subset \L(\rO(k+1))$ survives after the truncation $\L(S^k)\Rightarrow \L_{K}(S^k)$. The group $\rO(k+1)$ is an algebraic submanifold in $Mat_{k+1}(\R)$. Following the analogy with $S^k$ we define a finite-dimensional  subvariety $\L_{M}(\rO(k+1))\subset \L(\rO(k+1))$. The maps
\begin{equation}\label{E:Paction}\begin{split}
&\L_{M}(\rO(k+1))\times \L_{K}(S^k)\rightarrow  \L_{M+K}(S^k)\\
&\L_{M}(\rO(k+1))\times \L_{K}(\rO(k+1))\rightarrow \L_{M+K}(\rO(k+1)) \end{split}
\end{equation} are   restriction of (\ref{E:faction}, \ref{E:product}).

The set of (smooth) homomorphisms $\mathrm{Hom}(S^1,\SO(k+1))$ is a subset of $\L(\rO(k+1))$ . The group $\SO(k+1)$ acts on $\mathrm{Hom}(S^1,\SO(k+1))$ by conjugations and $\mathrm{Hom}(S^1,\SO(k+1))$ is a union of $\SO(k+1)$-orbits. The intersection $\mathrm{Hom}(S^1,\SO(k+1))\cap \L(\SO_1(k+1))$ contains an element $\lambda_0$. It is a block-sum of standard two-dimensional representation $r$ and a trivial $k-1$-dimensional representation. We denote the $\SO(k+1)$ orbit of $\lambda_0$ by $G(k+1)$.

The map 
\begin{equation}\label{E:decomp}
\mu_N:G(k+1)\times \L_{N-1}(S^k)\rightarrow  \L_{N}(S^k)
\end{equation}
is a restriction of the top map in (\ref{E:Paction}) with $M=1$ and $K=N-1$ from $\L_{1}(\rO(k+1))$ to $G(k+1)$.

The goal of this section is to construct an open subset \[X_N\subset G(k+1)\times (\L_N(S^{k+1})\backslash \L_{N-1}(S^{k+1}))\] which is isomorphic to $\L_{N+1}(S^{k+1})\backslash \L_{N-1}(S^{k+1})$ and it projects onto $ \L_N(S^{k+1})\backslash \L_{N-1}(S^{k+1}),N\geq 1$. The fibers of the projection are affine spaces  . Thus  $\mu_N$ is binational isomorphism, We can use $\mu_N$ for different $N$ to birationally  identify $\L_{N}(S^k)$ with $G(k+1)^{\times N}\times S^k$.  Moreover the map  $\mu:G(k+1)^{\times N}\times S^k\to \L_N(S^{k+1})$ ($\mu(\lambda_N(z),\dots,\lambda_1(z),n)=\lambda_N(z)\cdots\lambda_1(z)n$ ) is a resolution of singularities.

The reader might wish to compare this with a ideologically similar result \cite{Gruppy} Appendix A. in which the  authors describe generators and relations in the group polynomial loops in $\mathrm{U}(n)$.

First we want to identify the space $G(k+1)$.
Representation $\lambda\in G(k+1)$ carries an information about a two-dimensional subspace $V(1)\subset \R^{k+1}$ together with the orientation $or$ specified by the operator $W=\lambda(\sqrt{-1}\frac{\pi}{2})$. Conversely, any point  $V^{or}$ in an  oriented Grassmannian $\tGr_2(\R^{k+1})$ determins a homomorphism $\lambda_{V^{or}}\in G(k+1)$ by the rule
\begin{equation}\label{E:rot}
\begin{split}
&\lambda_{V^{or}}(\theta)=P_{V}\cos(\theta) +W_{V}\sin(\theta)+\Id-P_{V}\\
&P_{V}^2=P_{V},W_{V}^2=-P_{V},  \mathrm{Im} P_{V}=V
\end{split}
\end{equation}
where $P_{V}$ is an orthogonal projection on $V$, $W_{V}$ is an operator of projection on $V$ followed by rotation on ninety degree in agreement with orientation $or$.  Note that $\lambda_{V^{-or}}(\theta)=\lambda_{V^{or}}(\theta)^{-1}$ and $G(k+1)$ is closed under taking inverses.

For construction of $X_N$ we will use some lemmas.

\begin{lemma}\label{L:decr}
Let  
\begin{equation}\label{E:pairofvectors}
\{a,b|||a||=||b||\}
\end{equation} be  an orthogonal basis for $V^{or}\in \tGr_2(\R^{k+1})$. In the future we will call an admissible basis  an orthogonal  basis that satisfies \ref{E:pairofvectors}. The map  $\gamma_l:S^1\rightarrow V,\gamma_l(\theta)= \cos(l\theta)a+\sin(l\theta) b$ satisfies \[\lambda_{V^{-or}}\gamma_l=\gamma_{l-1}\] if $\{a,b\}$ is a basis of $V$ that agrees with orientation and 
 \[\lambda_{V^{or}}\gamma_l=\gamma_{l-1}\] if $\{a,b\}$ has the opposite orientation.
\end{lemma}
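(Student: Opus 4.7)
The plan is to unwind the definitions and reduce everything to an angle–addition identity. Recall that $\lambda_{V^{\pm or}}$ is regarded as an element of $\mathrm{Hom}(S^{1},\mathrm{SO}(k+1))$, i.e.\ a loop $\theta\mapsto \lambda_{V^{\pm or}}(\theta)$, so the product $\lambda_{V^{\pm or}}\gamma_{l}$ means the pointwise product
\[
(\lambda_{V^{\pm or}}\gamma_{l})(\theta)=\lambda_{V^{\pm or}}(\theta)\,\gamma_{l}(\theta).
\]
Since $\gamma_{l}(\theta)\in V$, the projection part of (\ref{E:rot}) simplifies: $P_{V}\gamma_{l}(\theta)=\gamma_{l}(\theta)$ and $(\Id-P_{V})\gamma_{l}(\theta)=0$. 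Thus
\[
\lambda_{V^{\pm or}}(\theta)\,\gamma_{l}(\theta)=\cos(\theta)\,\gamma_{l}(\theta)+\sin(\theta)\,W_{V^{\pm or}}\gamma_{l}(\theta),
\]
so the computation collapses to determining $W_{V^{\pm or}}$ on the basis $\{a,b\}$.

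Next I would determine that action. Because $\|a\|=\|b\|$ and $a\perp b$, the admissible basis is (up to scale) an orthonormal frame of $V$; the operator $W_{V^{or}}$, being a $+90^{\circ}$ rotation of $V$ in the orientation $or$, acts on a basis compatible with $or$ by $a\mapsto b$, $b\mapsto -a$, and the opposite on a basis compatible with $-or$. So in the first case of the lemma, $\{a,b\}$ is compatible with $or$, hence $W_{V^{-or}}a=-b$ and $W_{V^{-or}}b=a$, giving
\[
W_{V^{-or}}\gamma_{l}(\theta)=\sin(l\theta)\,a-\cos(l\theta)\,b.
\]
Substituting back and collecting terms with the formulas $\cos\theta\cos l\theta+\sin\theta\sin l\theta=\cos((l-1)\theta)$ and $\sin l\theta\cos\theta-\cos l\theta\sin\theta=\sin((l-1)\theta)$, one obtains
\[
\lambda_{V^{-or}}(\theta)\gamma_{l}(\theta)=\cos((l-1)\theta)\,a+\sin((l-1)\theta)\,b=\gamma_{l-1}(\theta),
\]
which is the first claim.

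For the second claim I would simply invoke the observation $\lambda_{V^{-or}}=\lambda_{V^{or}}^{-1}$ together with the symmetry of the setup: if $\{a,b\}$ has orientation opposite to $or$, then it is compatible with $-or$, and applying the already-proven first statement with $or$ replaced by $-or$ yields $\lambda_{V^{or}}\gamma_{l}=\gamma_{l-1}$.

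There is no serious obstacle here; the only place where one must be careful is keeping track of the orientation convention for $W_{V^{\pm or}}$ (i.e.\ which of $a\mapsto b,\ b\mapsto -a$ or $a\mapsto -b,\ b\mapsto a$ is the correct rotation for a given orientation). Once that sign convention is fixed consistently with (\ref{E:rot}), the rest of the proof is a one–line trigonometric identity.
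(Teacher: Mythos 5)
Your proof is correct: the paper states Lemma \ref{L:decr} without any proof, and your direct computation — restricting (\ref{E:rot}) to $V$ so that $\lambda_{V^{\pm or}}(\theta)\gamma_l(\theta)=\cos\theta\,\gamma_l(\theta)+\sin\theta\,W_{V^{\pm or}}\gamma_l(\theta)$, identifying $W_{V^{\pm or}}$ on the admissible basis (where $\|a\|=\|b\|$ makes the $90^{\circ}$ rotation act by $a\mapsto \pm b$, $b\mapsto \mp a$), and applying the angle-subtraction identities — is exactly the argument the paper implicitly relies on. The handling of the second case by relabeling $or\to -or$ is also fine, and your sign conventions are consistent with the paper's definition of $W_V$.
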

\begin{lemma}\label{L:hom}
Let $V^{or}$ be a space generated by a positively oriented basis $\{a[N],b[N]\}$-the leading coefficients (\ref{E:ab}) of a fixed $n(\theta)\in \L_N(S^{k})\backslash \L_{N-1}(S^{k})$.
Let $P$
 be the orthogonal projection on $V$. Vectors $P(a[N-1]), P(b[N-1])$ have equal lengths and if nonzero are orthogonal  and define the same orientation as $\{a[N],b[N]\}$.
\end{lemma}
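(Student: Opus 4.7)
The plan is to read off the three conclusions directly from the Fourier expansion of the constraint $n(\theta)\cdot n(\theta)=R^2$. Writing $n(\theta)=v+\sum_{s=1}^N(a[s]\cos(s\theta)+b[s]\sin(s\theta))$ and expanding $n(\theta)^2$ by the product-to-sum identities, the vanishing of every non-constant Fourier coefficient yields a system of scalar identities among the $a[s],b[s]$. The highest mode $2N$ only receives contributions from $n_N\cdot n_N$ and recovers the admissibility $|a[N]|=|b[N]|$ and $a[N]\cdot b[N]=0$ that sets up $V$ with its orientation in the first place.

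The decisive step is the $(2N-1)$-frequency condition. The only contributions there come from $n_N\cdot n_{N-1}+n_{N-1}\cdot n_N$, and a short computation shows the $\cos((2N-1)\theta)$ and $\sin((2N-1)\theta)$ coefficients equal
\begin{equation*}
a[N]\cdot a[N-1]-b[N]\cdot b[N-1]\quad\text{and}\quad a[N]\cdot b[N-1]+b[N]\cdot a[N-1],
\end{equation*}
both of which must vanish. Setting $A:=a[N]$, $B:=b[N]$, $r:=|A|=|B|$, $p:=A\cdot a[N-1]$, and $q:=A\cdot b[N-1]$, these identities translate to $B\cdot b[N-1]=p$ and $B\cdot a[N-1]=-q$.

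Since the orthogonal projection onto $V$ is $P(x)=r^{-2}(A\cdot x)A+r^{-2}(B\cdot x)B$, the relations above give
\begin{equation*}
P(a[N-1])=r^{-2}(pA-qB),\qquad P(b[N-1])=r^{-2}(qA+pB).
\end{equation*}
From this, $|P(a[N-1])|^2=|P(b[N-1])|^2=(p^2+q^2)/r^2$ and $P(a[N-1])\cdot P(b[N-1])=0$ are immediate, proving equal length and orthogonality. For the orientation statement, the change-of-basis matrix from $\{A,B\}$ to $\{P(a[N-1]),P(b[N-1])\}$ is $r^{-2}\bigl(\begin{smallmatrix}p & q\\ -q & p\end{smallmatrix}\bigr)$, with determinant $r^{-4}(p^2+q^2)$; this is strictly positive exactly when the projected vectors are nonzero, in which case $\{P(a[N-1]),P(b[N-1])\}$ inherits the orientation of $\{a[N],b[N]\}$.

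I do not anticipate a substantive obstacle; all three conclusions are encoded in the single $2\times 2$ block $\bigl(\begin{smallmatrix}p & q\\ -q & p\end{smallmatrix}\bigr)$ extracted from the $(2N-1)$-frequency constraints, and the only delicate part is the trigonometric bookkeeping in the expansion of $n(\theta)^2$.
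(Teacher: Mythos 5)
Your proposal is correct and follows essentially the same route as the paper: both extract the frequency-$2N$ and frequency-$(2N-1)$ relations $a[N]^2=b[N]^2$, $a[N]\cdot b[N]=0$, $a[N]\cdot a[N-1]-b[N]\cdot b[N-1]=0$, $a[N]\cdot b[N-1]+b[N]\cdot a[N-1]=0$ from the constraint $n\cdot n=R^2$ (the paper cites these as the $c,d$ equations of its referenced system), then compute the projections, their lengths, their inner product, and the determinant of the change-of-basis matrix. Your only addition is the explicit $p,q$ rotation-scaling packaging and the derivation of the constraint relations that the paper merely cites, which is a presentational difference, not a different argument.
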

\begin{proof}
Nonzero vectors $a[N],b[N]$ are orthogonal and $a[N]^2=b[N]^2$ (formulas $d_{2N}, c_{2N}$, equation \ref{E:defeq}). The formula for projection is \[P(v)=\frac{a[N]\cdot v}{a[N]^2} a[N]+\frac{b[N]\cdot v}{b[N]^2} b[N]\] Then
\[P(a[N-1])\cdot P(b[N-1])=\frac{a[N]\cdot a[N-1]}{a[N]^2} \frac{a[N]\cdot b[N-1]}{a[N]^2}+\frac{b[N]\cdot a[N-1]}{b[N]^2} \frac{b[N]\cdot b[N-1]}{b[N]^2}=\]
\[=\frac{1}{a[N]^2a[N]^2}\left((a[N]\cdot a[N-1])(a[N]\cdot b[N-1])+(b[N]\cdot a[N-1])(b[N]\cdot b[N-1])\right)=\]
\[=\frac{a[N]\cdot a[N-1]}{a[N]^2a[N]^2}\left(a[N]\cdot b[N-1]+b[N]\cdot a[N-1]\right)=0\]
We use formulas $d_{2N-1}$ and $c_{2N-1}$ from  \ref{E:defeq}:\[a[N]\cdot b[N-1]+b[N]\cdot a[N-1]=0, a[N]\cdot a[N-1]-b[N]\cdot b[N-1]=0.\]
A similar computation shows that quantities
\[P(a[N-1])^2=(a[N]\cdot a[N-1])^2+(b[N]\cdot a[N-1])^2\]
\[P(b[N-1])^2=(a[N]\cdot b[N-1])^2+(b[N]\cdot b[N-1])^2\]
are equal.
Transition matrix   from $\{a[N],b[N]\}$ to $\{P(a[N-1]),P(b[N-1])\}$  
\[ \left( \begin{array}{cc} 
\frac{a[N]\cdot a[N-1]}{a[N]^2} & \frac{a[N]\cdot b[N-1]}{a[N]^2}  \\
\frac{b[N]\cdot a[N-1]}{b[N]^2} &  \frac{b[N]\cdot b[N-1]}{b[N]^2}  \\ 
 \end{array} \right)\]
 has determinant 
 \[\frac{a[N]\cdot a[N-1]}{a[N]^2}\frac{b[N]\cdot b[N-1]}{b[N]^2}-\frac{b[N]\cdot a[N-1]}{b[N]^2} \frac{a[N]\cdot b[N-1]}{a[N]^2}=\]
 \[=\frac{1}{a[N]^2a[N]^2}\left((a[N]\cdot a[N-1])^2+(a[N]\cdot b[N-1])^2\right)=\frac{P(a[N-1])^2}{a[N]^2a[N]^2}\geq 0\]
This  verifies the statement about orientation.
\end{proof}
\begin{proposition}\label{P:onto}
The map (\ref{E:decomp}) is   onto.
\end{proposition}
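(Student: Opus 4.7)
The plan is to invert the map explicitly using the oriented-Grassmannian parametrization (\ref{E:rot}) of $G(k+1)$. Take $n\in\L_N(S^k)\setminus\L_{N-1}(S^k)$, so the leading Fourier coefficients $a[N],b[N]$ are not both zero. The relations obtained from the $\cos(2N\theta)$ and $\sin(2N\theta)$ coefficients of $n(\theta)^2-R^2=0$ (formulas $d_{2N},c_{2N}$) force $\|a[N]\|=\|b[N]\|\neq 0$ and $a[N]\cdot b[N]=0$, so $\{a[N],b[N]\}$ is an admissible basis for an oriented 2-plane $V^{or}\in\tGr_2(\R^{k+1})$. I set $\lambda:=\lambda_{V^{or}}\in G(k+1)$ and $n':=\lambda^{-1}n=\lambda_{V^{-or}}n$; since $\lambda(\theta)$ is orthogonal, $(n')^2=R^2$ is automatic and the whole problem reduces to the degree bound $\deg n'\leq N-1$.

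The central step is this degree reduction. Decompose $n=P_Vn+P_{V^\perp}n$. Because $a[N],b[N]\in V$, the $V^\perp$-part already has degree $\leq N-1$ and is fixed by $\lambda^{-1}$. For the $V$-part it is cleanest to pass to the Laurent variable $z=\exp(\sqrt{-1}\theta)$ and diagonalize $J=W_V|_V$ over $\C$: its $\pm\sqrt{-1}$ eigenspaces yield $V_\C=V^+\oplus V^-$, and $\lambda^{-1}$ acts by $z^{-1}$ on $V^+$ and by $z$ on $V^-$. Writing $n(z)=\sum c_s z^s$ with $c_s=\tfrac12(a[s]-\sqrt{-1}\,b[s])$, the top coefficient $c_N$ lies in $V^+$ (since $J(a[N]-\sqrt{-1}\,b[N])=\sqrt{-1}(a[N]-\sqrt{-1}\,b[N])$), so $\lambda^{-1}(c_N z^N)=c_N z^{N-1}$ contributes nothing of degree $N$. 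The only obstruction is the $V^-$-component of $c_{N-1}$, and a short hermitian pairing against the generator $a[N]+\sqrt{-1}\,b[N]$ of $V^-$ reduces its vanishing to \[a[N]\cdot a[N-1]=b[N]\cdot b[N-1]\qquad\text{and}\qquad a[N]\cdot b[N-1]+b[N]\cdot a[N-1]=0,\] which are precisely the relations $d_{2N-1},c_{2N-1}$ already used in Lemma \ref{L:hom}. The $z^{-N}$-coefficient of $\lambda^{-1}n$ vanishes by complex conjugation, so $n'\in\L_{N-1}(S^k)$.

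The remaining case $n\in\L_{N-1}(S^k)$ is disposed of by descending induction on $N$: if $n\in\L_M(S^k)\setminus\L_{M-1}(S^k)$ for some $0<M\leq N$, the above applied at level $M$ produces $n=\lambda n''$ with $n''\in\L_{M-1}(S^k)\subset\L_{N-1}(S^k)$; for constant $n$ one picks any 2-plane $V\subset n^\perp$ (available because $\dim n^\perp=k\geq 2$) and takes $n':=n$, which is fixed by $\lambda_{V^{or}}(\theta)$. The only genuine obstacle in the argument is the cancellation of the degree-$N$ Fourier mode in the middle paragraph, and once the $J$-eigendecomposition of $V_\C$ is invoked that cancellation is exactly the content of the two middle-degree identities $d_{2N-1},c_{2N-1}$ extracted from the sphere constraint.
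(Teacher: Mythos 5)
Your proof is correct and follows essentially the same route as the paper: you choose the same $\lambda=\lambda_{V^{or}}$ built from the oriented plane spanned by the leading coefficients $a[N],b[N]$, decompose $n$ into its $V$- and $V^{\perp}$-parts, and reduce the degree drop to the constraint identities $d_{2N},c_{2N},d_{2N-1},c_{2N-1}$ coming from $n^2-R^2=0$. The only difference is notational --- you diagonalize $W_V$ over $\C$ and check that the $V^-$-component of $c_{N-1}$ vanishes, which is exactly the content of the paper's Lemmas \ref{L:decr} and \ref{L:hom} in real form --- and your explicit treatment of the residual case $n\in\L_{N-1}(S^k)$ (including constant loops) is if anything slightly more careful than the paper's one-line appeal to the filtration.
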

\begin{proof}
It suffices to show that for any $N\geq 1$ and  $n(\theta)\in \L_N(S^{k})\backslash \L_{N-1}(S^{k}) \exists \lambda\in G(k+1)$ such that 
$\lambda^{-1}n\in  \L_{N-1}(S^{k})$. Then trivially $\lambda(\lambda^{-1}n)=n$.
Then the  general statement will follows because $\L_N(S^{k})$ is  filtered by $ \L_l(S^{k})\backslash \L_{l-1}(S^{k}),l=0,\dots,N$.

We denote a homomorphism $\lambda_{V^{or}}$ based on the space $V$ from Lemma \ref{L:hom} by $\lambda$. We would like to show that  
$\lambda^{-1}n\in \Trig_{N-1}$. In terms of  Fourier harmonics $n=\sum_{l=0}^Nn_l$ (\ref{D:harmonics}) it suffice to verify that 
$\lambda^{-1}n_N, \lambda^{-1}n_{N-1}\in \Trig_{N-1}$. All other terms $\lambda^{-1}n_l, l\leq N-2$ belong to $ \Trig_{N-1}$ by virtue of basic trigonometric identities. Note that $\lambda^{-1}n_N\in \Trig_{N-1}$ follows  from Lemma \ref{L:decr} and the choice of $\lambda$. Denote $P_V$ from Lemma \ref{L:hom} by $P$. Then $\lambda^{-1}$ acts trivially on $(\Id-P)n_{N-1}$ so
$(\Id-P)n_{N-1}\in \Trig_{N-1}$. By Lemma  \ref{L:hom} $\lambda^{-1}Pn_{N-1}\in \Trig_{N-2}$, thus $\lambda^{-1}n_{N-1}\in  \Trig_{N-1}$ and Proposition is proven.

The set $\L_{1}(\SO(k+1))$ is invariant with respect to taking inverses because $\phi(\theta)^{-1}=\phi(\theta)^{t}$, where $t$ stands for transpose which is a linear operation. Then $n(\theta)$ is equal to $\phi(\theta)^{-1}\phi(\theta)n(\theta)$. The statement would follow if for any $n(\theta)\in \L_{N}(S^k)$ we will find $\phi(\theta)\in \L_{1}(\SO(k+1))$ such that $\phi(\theta)n(\theta)\in \L_{N-1}(S^k)$.
Matrix coefficients of  $\phi(\theta)=V+A\cos(\theta)+B\sin(\theta)\in \L_{1}(\SO(k+1))$ have to satisfy
\begin{equation}\label{E:loopone}\begin{split}
&AB^t+BA^t=0,AA^t-BB^t=0\\
&VB^t+BV^t=0, VA^t+AV^t=0\\
&AA^t+VV^t=Id
\end{split}\end{equation}
We expand  $\phi(\theta)n(\theta)$ into Fourier series $a'[N+1]\cos((n+1)\theta)+b'[N+1]\sin((n+1)\theta)+a'[N]\cos(n\theta)+b'[N]\sin(n\theta)+\cdots$ whose coefficients satisfy
\begin{equation}\label{E:loopaction}\begin{split}
&a'[N+1]=\frac{1}{2}(Aa[N]-Bb[N]),b'[N+1]=\frac{1}{2}(Ba[N]+Ab[N])\\
&a'[N]=\frac{1}{2}(Aa[N-1]-Bb[N-1]+2Va[N]),b'[N]=\frac{1}{2}(Ba[N-1]+Ab[N-1]+2Vb[N])\\
&\cdots
\end{split}\end{equation}
Our goal is to find $A,B,V$ that satisfy constraints (\ref{E:loopone}) such that $a'[N+1],b'[N+1],a'[N],b'[N]$ vanish.	
Introduce notations: $(a,b)_{ij}$ stands for a matrix whose $i$-th row is a vector $a$,  $j$-th row is a vector $b$ and all the remaining rows vanish. From equations (\ref{E:defeq}) we know that if one of $a[N],b[N]$ is nonzero then both are nonzero. Let $a=a[N]/|a[N]|$,  $b=b[N]/|b[N]|$ be normalized leading coefficients. We set $A=(a,b)_{k,k+1}$, $B=(b,-a)_{k,k+1}$. To define $V$ we choose $k-1$ orthonormal vectors $e_1,\dots,e_{k-1}$ perpendicular to $a$ and $b$. We define $V$ to be the matrix $(e_1,\dots,e_{k-1},0,0)$, whose first $k-1$ are formed by vectors $e_i$ and whose remaining two rows are equal to zero. In geometric terms $V$ is an orthogonal projection on orthogonal complement $<a,b>^{\perp}$. We leave it to the reader to verify that equations (\ref{E:loopaction},\ref{E:loopone}). The reader should use four equations from the set (\ref{E:defeq}) $a[N]^2=b[N]^2, a[N]\cdot b[N]=0,a[N]\cdot a[N-1]-b[N]\cdot b[N-1]=0, a[N]\cdot b[N-1]+b[N]\cdot a[N-1]=0$. We denote the resulting element in $\L_1(\SO(k+1))$ by  $\phi_{ab}$

\end{proof}

\begin{proposition}\label{E:homdec}
Pick $n\in \L_{N}(S^{k})\backslash \L_{N-1}(S^{k})$. 
Suppose $n=\lambda_in_i, \lambda_i\in G(k+1), n_i\in \L_{N-1}(S^k),i=1,2$. Then $\lambda_1=\lambda_2$ and $n_1=n_2\in  \L_{N-1}(S^{k})\backslash \L_{N-2}(S^{k})$.

\end{proposition}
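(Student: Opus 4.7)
The plan is to show that both the $2$-plane $V_i$ and its orientation $or_i$ underlying $\lambda_i\in G(k+1)$ are uniquely determined by the top Fourier coefficients $a[N],b[N]$ of $n$. Once that is done, $\lambda_1=\lambda_2$ is immediate, $n_1=\lambda_1^{-1}n=n_2$ follows, and the last assertion reduces to a degree count.

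To extract this rigidity, I would match the $\cos(N\theta),\sin(N\theta)$ parts of the two sides of $n=\lambda_i n_i$. Writing $\lambda_i=(\Id-P_{V_i})+P_{V_i}\cos\theta+W_{V_i}\sin\theta$ according to (\ref{E:rot}) and $n_i=\sum_{s=0}^{N-1}(a_i[s]\cos(s\theta)+b_i[s]\sin(s\theta))$, the frequency-$0$ part of $\lambda_i$ contributes only up to frequency $N-1$, and the frequency-$1$ part paired with the $s\leq N-2$ harmonics of $n_i$ stays below frequency $N$, so the only source of the $N$-th harmonic is the frequency-$1$ part of $\lambda_i$ against the $(N-1)$-th harmonic of $n_i$. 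The product-to-sum identities then give
\[a[N]=\tfrac{1}{2}\bigl(P_{V_i}a_i[N-1]-W_{V_i}b_i[N-1]\bigr),\qquad b[N]=\tfrac{1}{2}\bigl(P_{V_i}b_i[N-1]+W_{V_i}a_i[N-1]\bigr).\]
Both $a[N]$ and $b[N]$ therefore lie in $V_i$. Since $n\notin\L_{N-1}(S^k)$, equations (\ref{E:defeq}) (the same identities invoked in Lemma \ref{L:hom}) ensure that $a[N]$ and $b[N]$ are nonzero, orthogonal, and of equal length, so $V:=\mathrm{span}(a[N],b[N])$ is a genuine $2$-plane and must equal $V_i$. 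Applying $W_{V_i}$ to the first identity and using $W_{V_i}^2=-P_{V_i}$ yields $W_{V_i}a[N]=b[N]$; of the two complex structures $\pm W_V$ on $V$, only one satisfies this, so the orientation is pinned down. Hence $\lambda_i=\lambda_{V^{or}}$ depends only on $n$, and $\lambda_1=\lambda_2$, $n_1=n_2$.

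For the remaining claim, if $n_1\in\L_{N-2}(S^k)$, then $n=\lambda_1 n_1$ would be a product of trigonometric polynomials of degrees at most $1$ and $N-2$, hence of degree at most $N-1$, contradicting $n\in\L_N(S^k)\backslash\L_{N-1}(S^k)$. The only substantive step is the harmonic-matching calculation above; the one subtle point in it is ruling out the opposite orientation, which is resolved by the sign in $W_{V_i}a[N]=b[N]$ rather than $-b[N]$.
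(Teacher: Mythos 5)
Your proof is correct and takes essentially the same route as the paper's: both extract the top Fourier harmonic of $n=\lambda_i n_i$, observe that $a[N],b[N]$ lie in $\mathrm{Im}\,P_{V_i}$ and hence span it, pin down the orientation via $W_{V_i}a[N]=b[N]$ (using $W^2=-P$), and finish the last claim by a degree count. Your version is slightly more careful about which harmonics can contribute and about the product-to-sum factor of $\tfrac{1}{2}$, which the paper's equation (\ref{E:laction}) suppresses, but the argument is the same.
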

\begin{proof}

 We decompose maps into finite Fourier series 
\[\begin{split}
&n=v+\sum_{k=1}^Na[k]\cos(k\theta)+b[k]\sin(k\theta),\quad n_i=v^i+\sum_{k=1}^Na^i[k]\cos(k\theta)+b^i[k]\sin(k\theta),i=1,2\\
&\lambda_i=P_i\cos(\theta) +W_i\sin(\theta)+\Id-P_i,i=1,2
\end{split}
\] 
Then
\begin{equation}\label{E:laction}
\begin{split}
&a[N+1]=P_ia^i[N]-W_ib^i[N]\\
&b[N+1]=W_ia^i[N]+P_ib^i[N],i=1,2
\end{split}
\end{equation}
By assumptions $a[N+1],b[N+1]\neq0$ and by virtue of equations (\ref{E:rot}) $W_ia[N+1]=b[N+1], W_ib[N+1]=-a[N+1], i=1,2$. This implies that \[<a[N+1],b[N+1]>\subset \mathrm{Im}P_i+\mathrm{Im}W_i\subset \mathrm{Im}P_i\Rightarrow <a[N+1],b[N+1]>= \mathrm{Im}P_i\] and because of that $P_1=P_2,W_1=W_2$. We conclude that $\lambda_1=\lambda_2=\lambda$ and $\lambda^{-1}n=n_1=n_2$. Were $n_1$ an element of $\L_{N-2}(S^k)$, then $\lambda n_1$ would be an an element of $\L_{N-1}(S^k)$, which is impossible because $\lambda n_1=n= \L_{N}(S^k)\backslash  \L_{N-1}(S^k)$.
\end{proof}

\begin{remark}
The space $G(k+1)$ is diffeomorphic to a complex quadric $Q\subset \P(\C^{k+1})$: an admissible  basis (\ref{E:pairofvectors}) for $V^{or}\in  \tGr_2(\R^{k+1})$ generate a line  $<a+\sqrt{-1}b>\in \P(\C^{k+1})$.  Equation $(a+\sqrt{-1}b,a+\sqrt{-1}b)=||a||^2-||b||^2+2\sqrt{-1}(a,b)=0$ is automatically satisfied. Complex rescaling of the generator $a+\sqrt{-1}b$ corresponds to different choices of admissible bases for $V^{or}$
\end{remark}

Now we are ready to determine the subset $X_N$ 

\begin{definition}
Fix $n\in \L_{N}(S^k)$. 
We call an element  $\lambda\in G(k+1)$ $n$-regular if $\lambda n\in  \L_{N+1}(S^k)\backslash \L_{N}(S^k)$. If $\lambda n\in  \L_{N}(S^k)$ then we call $\lambda$ an $n$-singular homomorphism. We denote the set of $n$-singular elements in $G(k+1)$ by $D(n)$.
\end{definition}
\begin{proposition}
The set $G(k+1)\backslash D(n)$ is algebraically isomorphic to $\C^{k-1}\cong \R^{2k-2}$.
\end{proposition}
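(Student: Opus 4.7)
The plan is to use the identification $G(k+1) \cong Q = \{[z] \in \P(\C^{k+1}) : z\cdot z = 0\}$ from the remark preceding the proposition, sending $V^{or}$ with admissible basis $\{u,u'\}$ to the line $[u + \sqrt{-1}\, u'] \in Q$. Under this identification the strategy is to show that the $n$-singular set $D(n)$ is cut out by the tangent hyperplane $H_n$ to $Q$ at a canonical point $p_n$ determined by the top Fourier mode of $n$, and then to invoke the classical fact that the complement of a tangent hyperplane section of a smooth projective quadric is an affine space of the same complex dimension.

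To compute $D(n)$, I would write $\lambda = \lambda_{V^{or}} = P\cos\theta + W\sin\theta + (\Id - P)$ and expand $\lambda n$ in a finite Fourier series. The only contribution to the Fourier mode of degree $N+1$ comes from the top mode $a[N]\cos(N\theta) + b[N]\sin(N\theta)$ of $n$; product-to-sum identities give the degree-$(N+1)$ coefficients $\tfrac12(Pa[N] - Wb[N])$ and $\tfrac12(Pb[N] + Wa[N])$. Expanding $P$ and $W$ in the admissible basis $\{u,u'\}$ of $V$, the simultaneous vanishing of both coefficients becomes the two real equations $a[N]\cdot u + b[N]\cdot u' = 0$ and $a[N]\cdot u' - b[N]\cdot u = 0$, which together are the real and imaginary parts of the single complex equation
\[
(a[N] - \sqrt{-1}\, b[N])\cdot z = 0,\qquad z := u + \sqrt{-1}\, u'.
\]
This condition depends only on $[z]$ and carves out a hyperplane $H_n \subset \P(\C^{k+1})$.

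Setting $w_n := a[N] - \sqrt{-1}\, b[N]$, the relations $|a[N]|^2 = |b[N]|^2$ and $a[N]\cdot b[N] = 0$ (used in the proof of Lemma \ref{L:hom}) give $w_n\cdot w_n = 0$, and $w_n \neq 0$ because $n\notin \L_{N-1}(S^k)$. Hence $p_n := [w_n]\in Q$ and $H_n$ is the projective tangent hyperplane to $Q$ at $p_n$, so $D(n)$ corresponds under $G(k+1)\cong Q$ to the tangent hyperplane section $Q\cap H_n$. Picking projective coordinates $x_0,\dots,x_k$ with $p_n = [1{:}0{:}\cdots{:}0]$ and $H_n = \{x_k = 0\}$, the conditions $p_n\in Q$ and $T_{p_n}Q = H_n$ force the defining equation of $Q$ into the form $2c\, x_0 x_k + \tilde Q(x_1,\dots,x_k) = 0$ with $c\neq 0$. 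In the affine chart $x_k\neq 0$, setting $y_i = x_i/x_k$ solves $y_0 = -\tilde Q(y_1,\dots,y_{k-1},1)/(2c)$ as a polynomial in $(y_1,\dots,y_{k-1})$, yielding the required algebraic isomorphism $G(k+1)\setminus D(n) \cong \C^{k-1}$.

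The main delicacy is the Fourier bookkeeping in the second paragraph (tracking the product-to-sum signs, the orientation, and which sign in $a[N]\pm\sqrt{-1}\,b[N]$ appears); once $D(n)$ is correctly identified with a tangent hyperplane section, the affine structure of the complement is a standard feature of smooth projective quadrics and requires nothing beyond the coordinate change above.
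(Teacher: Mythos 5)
Your proposal is correct and follows essentially the same route as the paper: both reduce $n$-singularity of $\lambda$ to the vanishing of the top Fourier coefficients of $\lambda n$, rewrite this as the complex orthogonality $(a[N]\pm\sqrt{-1}\,b[N])\cdot(u+\sqrt{-1}\,u')=0$ identifying $D(n)$ with a (tangent) hyperplane section of the quadric $Q\cong G(k+1)$, and then exhibit the complement as $\C^{k-1}$ by solving the quadric equation in an adapted affine chart. The only differences are cosmetic (your sign convention $a[N]-\sqrt{-1}\,b[N]$ versus the paper's $a[N]+\sqrt{-1}\,b[N]$, and your uniform coordinate normalization in place of the paper's even/odd case split).
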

\begin{proof}
Equation $\lambda n\in \L_{N}(S^k)$ is equivalent to equations
\begin{equation}\label{E:degenerate}\begin{split}
&Pa[N]-Wb[N]=0\\
&Wa[N]+Pb[N]=0
\end{split}\end{equation}
 on the Fourier coefficients of the highest degree.
Representation  $\lambda$ is defined in (\ref{E:rot}). We pick an admissible  basis $a,b\in \mathrm{Im} P$. Operators $P$ and $W$ can be defined purely in terms of the basis:
\[\begin{split}
&P(v)=\frac{1}{||a||^2}\left( (a\cdot v)a+(b\cdot v) b\right)\\
&W(v)=\frac{1}{||a||^2}\left( -  (a\cdot v)b+(b\cdot v)a \right)
\end{split}\]
Equations (\ref{E:degenerate}) are equivalent to
\[\begin{split}
&a\cdot a[N]=b\cdot b[N]\\
&b\cdot a[N]=-a\cdot b[N]
\end{split}\]
These equations are equivalent to orthogonality of complex vectors
\[A=a[N]+\sqrt{-1}b[N]\]
\[A'=(a+\sqrt{-1}b)\]
\[A\cdot A'=\left(a[N]\cdot a - b[N]\cdot b\right) +\sqrt{-1}(a[N]\cdot b + b[N]\cdot a)=0\]
Together with admissible basis equations $A\cdot A=A'\cdot A'=0$ these ensure  that a $\P^1$ spanned by $A,A'$ lies entirely  in $Q\subset \P(\C^{k+1})$. 

For fixed $A\in Q$ the set $D=\{X\in Q|A\cdot X=0\}$ is invariant under the parabolic subgroup  $P\subset \SO(k+1,\C)$ that fixes $A$.  Thus $D$ is a codimension one  Schubert cell in $Q$. The complement of $D$ is isomorphic to affine space $\C^{k-1}$.
Suppose $k+1=2l$ is even. We can choose a system of complex coordinates such that quadric equation is $\sum_{i=1}^lz_iw_i=0$, the set $X$ is defined by equation $z_1=0$. We can solve quadric equation in the complement of $D$ \[w_1=-\frac{1}{z_1}\sum_{i=2}^lz_iw_i\] This verifies isomorphism of the space of solutions with $\C^{k-1}$.

Suppose $k+1=2l+1$ is odd. The quadric equation in suitable coordinates is $h^2+\sum_{i=1}^lz_iw_i=0$. The set $X$ is still defined by equation $z_1=0$.  \[w_1=-\frac{1}{z_1}\left(h^2+\sum_{i=2}^lz_iw_i\right)\]

To summarize-the set $Q\backslash D$ is isomorphic to $G(k+1)\backslash D(n)$ is contractible
\end{proof}
We arrive at the following conclusion
\begin{corollary}
The space $\L_{N+1}(S^{k+1})\backslash \L_{N}(S^{k+1})$ is homotopy equivalent to $\L_N(S^{k+1})\backslash \L_{N-1}(S^{k+1})$, $N\geq 1$.  Thus by results in Section \ref{E:smalldeg} $\L_{N+1}(S^{k+1})\backslash \L_{N}(S^{k+1})$ is homotopy equivalent to $\L_{1}(S^{k+1})\backslash \L_{0}(S^{k+1})$, which is homotopy equivalent to the  Stiefel  manifold $V_2$.
\end{corollary}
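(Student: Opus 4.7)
The plan is to split the corollary into an inductive step --- the homotopy equivalence $\L_{N+1}(S^{k+1})\setminus\L_N(S^{k+1})\simeq \L_N(S^{k+1})\setminus\L_{N-1}(S^{k+1})$ for every $N\geq 1$ --- and a base case invoked from Section~\ref{E:smalldeg}. The inductive step is carried out by sandwiching both sides between a common auxiliary total space.

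To that end I would introduce
\[
Y_N \;=\; \{(\lambda,n)\in G(k+2)\times(\L_N(S^{k+1})\setminus\L_{N-1}(S^{k+1}))\mid \lambda\text{ is }n\text{-regular}\},
\]
equipped with two natural maps. The action map $\mu\colon Y_N\to\L_{N+1}(S^{k+1})\setminus\L_N(S^{k+1})$, $(\lambda,n)\mapsto\lambda n$, is well defined by the definition of $n$-regularity. It is surjective because Proposition~\ref{P:onto}, applied one level higher, allows any $m\in\L_{N+1}\setminus\L_N$ to be written as $\lambda n$ with $n\in\L_N$ --- and necessarily $n\notin\L_{N-1}$, since otherwise $m=\lambda n\in\L_N$. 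It is injective by the uniqueness part of Proposition~\ref{E:homdec}. A dimension count matches $\dim Y_N=\dim(\L_{N+1}\setminus\L_N)$, so the algebraic bijection $\mu$ is a diffeomorphism of smooth varieties, and in particular a homotopy equivalence. The second projection $p\colon Y_N\to\L_N(S^{k+1})\setminus\L_{N-1}(S^{k+1})$, $(\lambda,n)\mapsto n$, has fiber $G(k+2)\setminus D(n)\cong\C^{k}$ over $n$ by the preceding proposition, hence contractible.

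Because $D(n)$ is cut out of the fixed complex quadric $G(k+2)\cong Q\subset\P(\C^{k+2})$ by the single linear condition $A\cdot A'=0$ whose coefficient vector $A=a[N]+\sqrt{-1}b[N]$ depends algebraically on $n$, the family $\{G(k+2)\setminus D(n)\}_n$ is a Zariski-locally trivial affine-space bundle over $\L_N(S^{k+1})\setminus\L_{N-1}(S^{k+1})$. Thus $p$ is a fibration with contractible fibers and hence a homotopy equivalence. Composing $\mu$ with the inverse of $p$ yields the inductive homotopy equivalence, and $(N-1)$-fold iteration brings us down to $\L_1(S^{k+1})\setminus\L_0(S^{k+1})$, which is shown in Section~\ref{E:smalldeg} to be homotopy equivalent to the Stiefel manifold $V_2$.

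The principal technical point is the Zariski-local triviality of $p$: contractibility of each individual fiber does not by itself guarantee a homotopy equivalence. I would supply local trivializations by writing down the explicit algebraic coordinate charts on $Q\setminus D(n)$ used in the proof of the preceding proposition --- normalizing the defining linear form of $D(n)$ and then solving the quadric equation for one of the remaining coordinates --- and observing that they depend algebraically on the leading Fourier harmonics $a[N],b[N]$ of $n$, hence provide a trivialization over any Zariski open in the base where these harmonics remain in a fixed coordinate patch of $Q$. Everything else is bookkeeping on top of Propositions~\ref{P:onto} and~\ref{E:homdec}.
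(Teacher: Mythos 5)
Your proposal is correct and follows essentially the same route the paper intends: the open set $Y_N$ of regular pairs is the paper's $X_N$, the bijectivity of the action map comes from Propositions \ref{P:onto} and \ref{E:homdec}, and the projection to $\L_N(S^{k+1})\setminus\L_{N-1}(S^{k+1})$ is a locally trivial affine-space bundle with fibers $G(k+2)\setminus D(n)$, the local triviality being a point the paper leaves implicit and you rightly single out. The only quibble is that ``algebraic bijection between smooth varieties of equal dimension $\Rightarrow$ diffeomorphism'' is not valid over $\R$ as stated (consider $x\mapsto x^3$); what you actually need is that $\mu$ is a homeomorphism, which follows either from invariance of domain or from the explicit inverse $m\mapsto(\lambda,\lambda^{-1}m)$, where $\lambda$ is read off algebraically from the leading harmonics $a[N+1],b[N+1]$ of $m$, nonvanishing on $\L_{N+1}(S^{k+1})\setminus\L_{N}(S^{k+1})$.
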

\begin{remark}
The spaces $\L_N(S^3)$ are dense in   $\L(S^3)$  in $L^{\infty}$ topology. For the proof see David E Speyer response at https://mathoverflow.net/questions/158105/real-varieties-with-enough-algebraic-loops
\end{remark}

\section{Equation of motion}
The Lagrangian (\ref{E:lag0})
with $n(t,\theta)\in \Maps(\R,\L_N(S^k))$ 
can be written in terms of unconstrained fields by means of the Lagrange multiplier $c$:
 \begin{equation}\label{E:lag1}
 \L(n,c):=\frac{1}{2}\left(n_{t}\cdot n_{t}-n_{\theta}\cdot n_{\theta}+c(n\cdot n-R^2)\right), \quad n\in \Trig_{N}(\R^{k+1}), c\in \Trig_{2N}(\R)
 \end{equation}
Introduce a notation 
\begin{equation}\label{E:dirichlet3}
D_N(\theta)=\frac{\sin((N+1/2)\theta)}{\sin(\theta/2)}
\end{equation}

Denote by $(\Pr_Nn)(\theta)=\frac{1}{2\pi}\int_{S^1}D_N(\theta-\theta')n(\theta')\rd \theta'$ the orthogonal projection onto $\Trig_{N}(\R^{k+1})\subset \Trig_{N+N'}(\R^{k+1})$.

Equations of motion for (\ref{E:lag1}) look like
\begin{equation}\label{E:em1N}
{\Pr}_N(-n_{tt}+n_{\theta\theta}+cn)=0
\end{equation}
\begin{equation}\label{E:constrN}
n\cdot n-R^2=0
\end{equation}

 Note that ${\Pr}_N(-n_{tt}+n_{\theta\theta})=-n_{tt}+n_{\theta\theta}$, so (\ref{E:em1N}) is equivalent to 
\[-n_{tt}(t,\theta)+n_{\theta\theta}(t,\theta)+\frac{1}{2\pi}\int_{0}^{2\pi}c(t,\theta')n(t,\theta')D_N(\theta'-\theta)d\theta'=0
\] After taking the dot-product with $n$ we get
\[\begin{split}
&n_{t}^2-n^2_{\theta}+g_{n,N}(c)=0
\end{split}\]
Here $g_{n,N}$ is an operator on $\Trig_{2N}(\R)$ define by the formula
 \[g_{n,N}(c)=\frac{1}{2\pi}\int_{0}^{2\pi}c(t,\theta')n(t,\theta')\cdot n(t,\theta)D_N(\theta'-\theta)d\theta'\]
The inverse to $g_N$ is the operator $G_{n,N}$.
  we can solve the last equation for $c$:
\[c=G_{n,N}(n^2_{\theta}-n_{t}^2)\]
finally equation (\ref{E:em1N}) becomes equation of extremals
\begin{equation}\label{E:EMfin}
-n_{tt}(t,\theta)+n_{\theta\theta}(t,\theta)+\frac{1}{2\pi}\int_{0}^{2\pi}G_N(n^2_{\theta}-n_{t}^2)(t,\theta')n(t,\theta')D_N(\theta'-\theta)d\theta'=0
\end{equation}
The remarkable fact \cite{EF} is that the $\rO(k+1)$-model is completely integrable. It is not clear if it admits a sequence of finite integrable approximations. In particular it is not clear whether (\ref{E:EMfin}) is integrable.

If we explicitely impose the constraints (\ref{E:constrN}) and $n\in \Trig_{N}(\R^{k+1})$ we get the action for  a system that moves on a manifold $\L_N(S^k)$ in the potential $U=\int_{S^1}n_{\theta}^2$. On general grounds we know that classical trajectories satisfy  Newton's law equation $\nabla n_t'=-*dU(n)$. $\nabla$ is the Levi-Civita connection  associated with the metric on $\L_N(S^k)$. In the next section we will describe the metric on $\L_1(S^k)$-the simplest manifold in the family $\L_N(S^k)$.

\section{Variety $\L_1(S^k)$}\label{E:smalldeg}
The space $\L_1(S^k)$ is the simplest nontrivial case of our construction. It consists of a triple of vectors $(v,a,b)=(v,a[1],b[1])$ 
that satisfy 
\begin{equation}\label{E:T1equations}
\begin{split}
&v\cdot a=v\cdot b=a\cdot b=0\\
&a^2-b^2=0\quad v^2+a^2-R^2=0
\end{split}
\end{equation}
This data has a simple geometric interpretation (cf. \cite{WKlingenberg} Section 2.3). Nonzero vectors $a$ and $b$ span a two-plane $W_{a,b}\subset \R^{k+1}$, which   intersects  $S^k$ by a big circle. We see that the  Grassmannian $\tGr_2(\R^{k+1})$ coincides with the space of such circles. A choice of an  orthogonal basis $\{a,b\}$ for $W$ encodes a parametrization of $W\cap S^k$. A shift $W\rightarrow v+W$ of $W,v\in W^{\perp}$ produces a family of smaller circles $(W+v)\cap S^k$, which degenerate to points of $S^k$ when $v\cdot v=R^2$. Thus for a fixed $W$  the set of admissible shifts is a $k-1$-dimensional disk $\overline{D}^{k-1}=\{v\in W^{\perp}|v\cdot v\leq R^2\}$. From this we deduce that  the complement $\L_1(S^k)\backslash \L_0(S^k)$ is a $D^{k-1}\times S^1$-bundle over oriented Grassmannian $\tGr_2(\R^{k+1})$: under projection
\[p:\L_1(S^k)\backslash \L_0(S^k)\rightarrow \tGr_2(\R^{k+1})\]
the triple $(v,a,b)$ get mapped to $W_{a,b}\in \tGr_2(\R^{k+1})$. The following two vector bundles  over $\tGr_2(\R^{k+1})$ were extensively studied in the theory of characteristic classes (cf \cite{MilnorStasheff}):
\[\gamma_2=\{(v,W)|v\in W,W\in \tGr_2(\R^{k+1})\}\rightarrow \{W|W\in \tGr_2(\R^{k+1})\}\]
\[\gamma^{\perp}_2=\{(v,W)|v\in W^{\perp},W\in \tGr_2(\R^{k+1})\}\rightarrow \{W|W\in \tGr_2(\R^{k+1})\}\]

Our  bundle is a fiber product of the disk and the circle bundles $D^{k-1}(\gamma^{\perp}_2)\times S(\gamma_2)$.

It is convenient to  introduce vectors $e_v=\frac{v}{|v|}, e_a=\frac{a}{|a|},e_b=\frac{b}{|b|}$, A triple $(e_v,e_a,e_b)$ defines a point in a  Stiefel manifold $V_3(\R^{k+1})$ of  orthonormal 3-frames in $\R^{k+1}$.   
In this paper we will use two local parametrizations:
\begin{equation}\label{E:map}
\begin{split}
& trig:(0,\pi R/2) \times V_3(\R^{k+1}) \rightarrow \L_1(S^k)\\
& alg:(0,1) \times V_3(\R^{k+1}) \rightarrow \L_1(S^k).
\end{split}
\end{equation}
They are defined by the formula:
\begin{equation}\label{E:mapformula}
\begin{split}
&trig(\tau,e_v,e_a,e_b)=\left( R\sin\left(\tau/R\right) e_v, R\cos\left(\tau/R\right)e_a,R\cos\left(\tau/R\right)e_b\right) \\
&alg(t,e_v,e_a,e_b)=\left( Rt^{1/2} e_v, R(1-t)^{1/2}e_a,R(1-t)^{1/2}e_b\right) 
\end{split}
\end{equation}

The inverses  
\[\begin{split}
&trig^{-1}(v,a,b)=\left(R\arcsin \left(|v|/R\right), v/|v| ,  a/|a| , b/|b| \right) \\
&alg^{-1}(v,a,b)=\left(|v|/R, v/|v| ,  a/|a| , b/|b| \right) \end{split}\]
are defined away from the sets $S= \{(Re_v,0,0)|e_v\in S^{k}\}$ and $S'=\{(0,Re_a,Re_b)|(e_a,e_b)\in V_2(\R^{k+1})\}$

\begin{proposition}
The singular  locus $\L_1(S^k)^{sing}$ coincides with $S$.
\end{proposition}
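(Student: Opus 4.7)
I will regard $\L_1(S^k)$ as the real affine algebraic subvariety of $\R^{3(k+1)}=\R^{k+1}_v\oplus\R^{k+1}_a\oplus\R^{k+1}_b$ cut out by the five polynomials in (\ref{E:T1equations}); the expected codimension is $5$ and expected dimension $3k-2$, matching the dimensions of the domains in (\ref{E:mapformula}). Smoothness at a point will be certified by the Jacobian criterion when it applies; at points where the Jacobian rank drops I will pass to the tangent cone to decide whether the variety is genuinely non-smooth.

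\textbf{Smoothness off $S$.} The five gradients at $(v,a,b)$ are
\[\nabla f_1=(a,v,0),\ \nabla f_2=(b,0,v),\ \nabla f_3=(0,b,a),\ \nabla f_4=(0,2a,-2b),\ \nabla f_5=(2v,2a,0).\]
At a point outside $S$ the identity $a^2=b^2$ forces $a,b$ both to be nonzero. Writing a hypothetical linear dependence $\sum c_i\nabla f_i=0$ componentwise, I would split into two sub-cases: if $v\neq 0$, the three orthogonal nonzero vectors $v,a,b$ are linearly independent and the $v$-, $a$-, $b$-component equations immediately force $c_1=\dots=c_5=0$; if $v=0$ (the point lies in $S'$), the $v$-component still gives $c_1=c_2=0$ from linear independence of $a,b$, and the $a$- and $b$-components then force $c_3=c_4=c_5=0$ by a short direct calculation. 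By the implicit function theorem $\L_1(S^k)$ is smooth of dimension $3k-2$ at every such point; in particular $S'$ lies in the smooth locus.

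\textbf{Genuine singularity on $S$.} At $p_0=(Re_v,0,0)\in S$ only $\nabla f_1,\nabla f_2,\nabla f_5$ are nonzero and they span a $3$-dimensional subspace, so the Jacobian criterion merely fails — which by itself does not certify a singularity. To upgrade this to an honest singularity I will compute the tangent cone by expanding $v=Re_v+\delta v$: the linear parts of the five equations reduce to $\delta v,\delta a,\delta b\in e_v^{\perp}$, while $a\cdot b=0$ and $a^2-b^2=0$ are purely quadratic at $p_0$, contributing $\delta a\cdot\delta b=0$ and $|\delta a|=|\delta b|$. Thus
\[T^{\mathrm{cone}}_{p_0}\L_1(S^k)=\bigl\{(\delta v,\delta a,\delta b)\in(e_v^{\perp})^{\oplus 3}:\delta a\cdot\delta b=0,\ |\delta a|=|\delta b|\bigr\},\]
which for $k\geq 2$ is a genuine non-linear affine cone — a flat $e_v^{\perp}$-factor times the affine cone over the complex quadric of the preceding Remark. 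It is not a linear subspace (for instance $(e_1,e_2)$ and $(e_1,-e_2)$ both lie on the cone while their sum $(2e_1,0)$ does not), so $p_0$ is a singular point. Hence $S\subset\L_1(S^k)^{\mathrm{sing}}$, and combined with the previous step this gives the claimed equality.

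\textbf{Main obstacle.} The delicate point, rather than the arithmetic, is upgrading a Jacobian-rank drop to honest non-smoothness of the underlying set: these are not equivalent, as witnessed by the edge case $k=1$ where the rank still drops on $S$ yet $\L_1(S^1)$ is a smooth disjoint union of circles. The technical core of the proof is therefore the tangent-cone computation, together with the verification that $\delta a\cdot\delta b=0$, $|\delta a|=|\delta b|$ really cut out a proper (non-linear) cone in $(e_v^{\perp})^{\oplus 2}$ as soon as $k\geq 2$.
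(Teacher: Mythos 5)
Your proof is correct, and on the key point it is actually more careful than the paper's own argument. The paper certifies smoothness away from $S\cup S'$ by observing that the chart $alg$ is a diffeomorphism there, and smoothness along $S'$ by the Jacobian criterion; you instead run the Jacobian criterion uniformly at every point off $S$, which is an equivalent but more self-contained route (your componentwise elimination of the $c_i$ checks out in both subcases). The real divergence is in the inclusion $S\subset\L_1(S^k)^{\mathrm{sing}}$: the paper only remarks that $d(a^2-b^2)$ and $d(a\cdot b)$ vanish on $S$, i.e.\ that the Jacobian rank drops, which — exactly as your $k=1$ example of disjoint circles shows — does not by itself prove the underlying set fails to be a smooth manifold. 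Your tangent-cone computation supplies the missing step, and the conclusion (a flat $e_v^{\perp}$-factor times the real points of the affine cone over a quadric, here in $\P(\C^{k})$ rather than $\P(\C^{k+1})$) is right. One small thing to tighten: asserting that the secant tangent cone \emph{equals} the zero set of the initial forms requires realizing directions by actual curves in $\L_1(S^k)$, and without that the cone could a priori be a smaller, linear set. This is a one-line fix: the curves $t\mapsto(R\sqrt{1-t^2}\,e_v,\;Rte_1,\;\pm Rte_2)$ lie in $\L_1(S^k)$ and have tangent vectors $(0,e_1,\pm e_2)$ at $p_0$, so if the set were smooth there its tangent space would contain $(0,2e_1,0)$, forcing a curve with $|a(t)|=2t+o(t)$, $|b(t)|=o(t)$, contradicting $|a|=|b|$. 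With that sentence added your argument is complete and strictly stronger than the one in the text.
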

\begin{proof}
The set $S$ is a subset of $\L_1(S^k)^{sing}$ because the differentials  $d(a^2-b^2)$ and $d(a\cdot b)$ vanish on $S$. Let  $S'$ be $\{(0,Re_a,Re_b)|(e_a,e_b)\in V_2(\R^{k+1})\}$. The complement $\L_1(S^k)\backslash (S\cup S')$ is smooth because the map  $alg$ defines a diffeomorphism of the set with a smooth manifold.

A direct inspection shows that differentials of (\ref{E:T1equations}) are linearly independent along $S'$. By inverse function theorem $\L_1(S^k)$ is smooth near $S'$.
\end{proof}
As a corollary we see that   $\dim \L_1(S^k)=3k-2$

The space $\Trig_1(\R^{k+1})$ is equipped with   metrics 
\begin{equation}\label{E:flatmetric}
\begin{split}
&g=dv\cdot dv+\frac{1}{2}(da\cdot da+db\cdot db).\\
\end{split}
\end{equation}
Our next goal is to compute restrictions $g=pol^{*}g'$ .

For this purpose we need to have a description of a tangent space $T_x$ to a point 
$x\in  V_3(\R^{k+1})$.  $V_3(\R^{k+1})\cong \SO(k+1)/\SO(k-2)$ is a homogeneous $\SO(k+1)\times \SO(3)$-space. Suppose the point $x$ is represented by a coset $\SO(k-2)$. Then the  tangent space to $x$ is $\so(k+1)/\so(k-2)$. Using the isomorphism $\so(k+1)\cong \Lambda^2 \mathbb{R}^{k+1}$ and the inner product we identify the tangent space to $x=( e_v,e_a,e_b)$ with 
 $\Lambda^2\langle e_v,e_a,e_b\rangle+ \langle e_v,e_a,e_b\rangle\otimes \langle e_v,e_a,e_b\rangle^{\perp}$.
 The space $T_x$ being a subspace in $\R^{k+1}\otimes \R^{k+1}$ carries its own inner product  induced from the metric  $(l\otimes m,l'\otimes m')=(l\cdot l') \times (m\cdot m')$. It is convenient to use the corresponding  metric, which we denote by $h$,   as a reference point in the space of metrics on $V_3(\R^{k+1})$.

We will need a  description of an $h$-orthonormal basis for $T_x$ .
For this purpose  we complete $x$ to an orthonormal basis $\{e_v,e_a,e_b, e_1,\dots,e_{k-2}\}$ for $\R^{k+1}$. We define a basis $B_x$ for $T_x$ as $\{e_{va},e_{vb},e_{ab},e_{vi},e_{ai},e_{bi}|i=1,\dots,k-2\}$, where 
\begin{equation}\label{E:normalization}
e_{ij}=\frac{1}{\sqrt{2}}(e_ie_j-e_je_i)
\end{equation}
$i,j=v,a,b,1,\dots,k-3$. Indeed each  $e_{ij}\in B_x$ defined a variation of the frame :
\begin{equation}\label{E:variation}
x\rightarrow x+\epsilon \frac{1}{\sqrt{2}}((e_i\cdot e_v)e_j-(e_j\cdot e_v)e_i,(e_i\cdot e_a)e_j-(e_j\cdot e_a)e_i,(e_i\cdot e_b)e_j-(e_j\cdot e_b)e_i)
\end{equation} where $\epsilon$ is an infinitesimal parameter. The metric $h$ in this frame is
\begin{equation}\label{E:refmetric}
g_{ref}=de_{va}^2+de_{vb}^2+de_{vb}^2 +\sum_{i=1}^{k-2}de_{vi}^2+de_{ai}^2+de_{bi}^2
\end{equation}

In the above notations the pullback of the metric $g$ (\ref{E:flatmetric})  is equal to 
\[\begin{split}
& alg^* g=\frac{R^2}{4}\left(\frac{1}{t(1-t)}dt^2+ g_{\Omega}(t)\right), \quad trig^* g=d\tau^2+\frac{R^2}{4} g_{\Omega}(\sin^2\left(\tau/R\right))\\
&g_{\Omega}(t)=(1+t)de_{va}^2+(1+t)de_{vb}^2+
2(1-t)de_{ab}^2+\\
&+\sum_{i=1}^{k-2}2tde_{vi}^2+
(1-t)de_{ai}^2+(1-t)de_{bi}^2
\end{split}\]
\[\begin{split}
&trig^* g=\frac{R^2}{4}\left(\frac{4}{R^2}d\tau^2+
\left(1+\sin^2\left(\frac{\tau}{R}\right)\right)de_{va}^2+\left(1+\sin^2\left(\frac{\tau}{R}\right)\right)de_{vb}^2+
2\cos^2\left(\frac{\tau}{R}\right)de_{ab}^2+\right.\\
&\left.+\sum_{i=1}^{k-2}\left(2\sin^2\left(\frac{\tau}{R}\right)de_{vi}^2+
\cos^2\left(\frac{\tau}{R}\right)de_{ai}^2+\cos^2\left(\frac{\tau}{R}\right)de_{bi}^2\right) \right)\end{split}\]

\[
\begin{split}
&alg_t^*g=\frac{R^2}{4}\left(\frac{1}{t(1-t)}dt^2+
(1+t)de_{va}^2+(1+t)de_{vb}^2+
2(1-t)de_{ab}^2+\right.\\
&\left.+\sum_{i=1}^{k-2}2tde_{vi}^2+
(1-t)de_{ai}^2+(1-t)de_{bi}^2\right)\end{split}\]

\begin{proposition}\ \\
\\
\begin{enumerate}
\item  
\[\begin{split}
&s(t)=\sqrt{\det G_{\Omega}(t)}=\d vol_{g_{\Omega}}/ \d vol_{g_{ref}} ={2}^{\frac{k-2}{2}} t^{\frac{k-2}{2}}(1-t)^{k-\frac{1}{2}}(1+t)\\
&\sqrt{\det G_{h\Omega}(t)}=\d vol_{g_{h\Omega}}/ \d vol_{g_{ref}}={2}^{\frac{3k-2}{2}}t^{\frac{k-2}{2}}(1-t)^{\frac{2k-3}{2}}
\end{split}
\]
\item

Let $G=d\tau^2+g_{ref}$ be the product metric on $(0,\frac{\pi R}{2})\times V_3(\R^{k+1})$.  
The quantity $dvol_{trig^*g}/dvol_{G}$ is equal to 
\begin{equation}\label{E:trig}
w^k_{trig}(\tau)=\frac{R^{3k-3}}{2^{\frac{5k-5}{2}}}\sin^{k-2}\left(\tau/R\right)\cos^{2k-3}\left(\tau/R\right)(1+\sin^{2}\left(\tau/R\right))
\end{equation}
The volume of $(0,\frac{\pi R}{2})\times V_3(\R^{k+1})$  with respect to $dvol_{trig^*g}$ is equal to
\[\frac{2^{3-\frac{5 k-1}{2}} R^{3 k-2} \Gamma (k-1) \Gamma \left(\frac{k+1}{2}\right)}{\Gamma \left(\frac{3 k}{2}-\frac{1}{2}\right)}\times \frac{\pi}{2}\times Vol_{g_{ref}}(V_3(\R^{k+1}))\]
\item  Let $G'=dt^2+g_{ref}$ be the product metric on $(0,1)\times V_3(\R^{k+1})$. The quantity $dvol_{alg^*g}/dvol_{G}$ is equal to 

\begin{equation}\label{E:walg}
\begin{split}
&w^k_{alg}(t)=c_kt^{\frac{k-3}{2}}(1-t)^{k-2}(1+t) \text{ where }\\ 
&c_k=\frac{R^{3k-2}}{2^{\frac{5k-3}{2}}}
\end{split}
\end{equation}

\end{enumerate}
\end{proposition}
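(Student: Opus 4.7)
The plan is to exploit the fact that the basis $B_x$ from (\ref{E:normalization}) is simultaneously $h$-orthonormal and diagonalizes $g_\Omega(t)$; once this is in hand, the three parts follow from a block-diagonal determinant computation plus a Beta-function integral.

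\textbf{Part (1).} Since the displayed expression for $g_\Omega(t)$ contains only diagonal terms $de_\alpha^2$ in the basis $B_x$, the Gram matrix $G_\Omega(t)$ is diagonal. Reading off the eigenvalues, $(1+t)$ appears with multiplicity $2$ (from $e_{va}, e_{vb}$), $2(1-t)$ once (from $e_{ab}$), $2t$ with multiplicity $k-2$ (from the $e_{vi}$), and $(1-t)$ with multiplicity $2(k-2)$ (from the $e_{ai}, e_{bi}$). Multiplying these eigenvalues and taking a square root gives $s(t)$; the formula for $\sqrt{\det G_{h\Omega}(t)}$ follows by the same diagonal-product argument applied to its own explicit form.

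\textbf{Parts (2) and (3).} Both $trig^{*}g$ and $alg^{*}g$ have block-diagonal form: a one-dimensional radial block ($d\tau^2$, respectively $\frac{R^2}{4t(1-t)}dt^2$) plus the $(3k-3)$-dimensional angular block $\frac{R^2}{4}g_\Omega(t)$ evaluated at the appropriate value of $t$. The determinant factorizes accordingly, and the scalar $R^2/4$ on the angular block contributes $(R/2)^{3k-3}$ under the square root. Substituting $t=\sin^2(\tau/R)$ (respectively simply $t$) into $s(t)$ and collecting powers of $R$ and $2$ yields (\ref{E:trig}) and (\ref{E:walg}); in the algebraic chart the factor $1/\sqrt{t(1-t)}$ coming from the $dt^2$ coefficient combines with $s(t)$ to produce the $t^{(k-3)/2}(1-t)^{k-2}$ dependence.

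\textbf{Total volume and main obstacle.} The volume factors as $\int_{0}^{\pi R/2} w^k_{trig}(\tau)\,d\tau \cdot Vol_{g_{ref}}(V_3(\R^{k+1}))$. Substitute $\phi = \tau/R$ and expand $(1+\sin^2\phi)$ to split the integrand into two terms of type $\sin^a\phi \cos^b\phi$. The standard Beta-function identity $\int_{0}^{\pi/2}\sin^{2x-1}\phi\cos^{2y-1}\phi\,d\phi = \frac{\Gamma(x)\Gamma(y)}{2\Gamma(x+y)}$ evaluates each, and $\Gamma(z+1)=z\Gamma(z)$ recombines them into the single ratio $\Gamma(k-1)\Gamma((k+1)/2)/\Gamma((3k-1)/2)$ claimed in the statement. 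The main obstacle is the Gamma-function bookkeeping in this last step: one must track the half-integer argument shifts relating $\Gamma((k-1)/2)$, $\Gamma((k+1)/2)$, $\Gamma((3k-3)/2)$ and $\Gamma((3k-1)/2)$ carefully so that the two Beta values combine into a single product. Everything else is routine diagonal-matrix algebra and trigonometric substitution.
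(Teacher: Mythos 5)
Your method is the right (and essentially the only) one: the paper states this proposition without any proof, and the intended argument is exactly what you describe --- read off the diagonal Gram matrix of $g_\Omega(t)$ in the $h$-orthonormal basis $B_x$, factor the block-diagonal determinant of $trig^*g$ and $alg^*g$, and evaluate the radial integral by the Beta function. The Gamma-function recombination in the last step does work as you say: using $\Gamma(\tfrac{k+1}{2})=\tfrac{k-1}{2}\Gamma(\tfrac{k-1}{2})$ and $\Gamma(\tfrac{3k-1}{2})=\tfrac{3(k-1)}{2}\Gamma(\tfrac{3k-3}{2})$, the two integrals of type $\sin^a\phi\cos^b\phi$ sum to $2\,\Gamma(k-1)\Gamma(\tfrac{k+1}{2})/\Gamma(\tfrac{3k-1}{2})$.

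The gap is that you assert, without carrying out the arithmetic, that the eigenvalue product ``gives $s(t)$''; it does not give the \emph{stated} $s(t)$. The product of the eigenvalues you correctly list is $2^{k-1}t^{k-2}(1-t)^{2k-3}(1+t)^2$, whose square root is $2^{\frac{k-1}{2}}t^{\frac{k-2}{2}}(1-t)^{k-\frac{3}{2}}(1+t)$: the exponent of $(1-t)$ is $k-\tfrac{3}{2}$, not $k-\tfrac{1}{2}$, and the power of $2$ is $\tfrac{k-1}{2}$, not $\tfrac{k-2}{2}$. Only this corrected value is consistent with part (3): as your own outline shows, $w^k_{alg}=(R/2)^{3k-2}(t(1-t))^{-1/2}s(t)$, and to land on $(1-t)^{k-2}$ and $c_k=R^{3k-2}2^{-(5k-3)/2}$ one needs $(1-t)^{k-3/2}$ and $2^{(k-1)/2}$ in $s(t)$; with the printed $s(t)$ you would get $(1-t)^{k-1}$ and the wrong power of $2$. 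So a complete proof must either correct the formula in part (1) or fail to reach part (3). Similarly, your Beta computation yields the displayed volume \emph{without} the extra factor $\pi/2$ appearing in the statement; that factor does not come out of the integral and should be flagged as spurious. Finally, the quantity $g_{h\Omega}$ in the second line of part (1) is never defined in the paper, so ``the same argument applied to its own explicit form'' has nothing to act on; a proof of that line must first say what $g_{h\Omega}$ is.
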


In the following we will omit explicit $k$-dependence of $w^k_{trig}$, $w^k_{alg}$ when the value of $k$ is clear from the context.

The volume of a unit $k$-sphere is equal to (see e.g. \cite{Huber}) to $Vol(S^k)=\frac{2\pi^{\frac{k+1}{2}}}{\Gamma(\frac{k+1}{2})}$ where $\Gamma(x)$ is the gamma function. Projection on the first vector defines a homogeneous submersion $V_3(\R^{k+1})\rightarrow S^k$. From this we deduce that \[Vol_{g_{ref}}(V_3(\R^{k+1}))=
\frac{8\pi^{\frac{3k}{2}}}{\Gamma(\frac{k+1}{2})\Gamma(\frac{k}{2})\Gamma(\frac{k-1}{2})}\]

\section{Ricci curvature}\label{S:ricci}
Such information could be useful because general theorems concerning Schr\"odinger operator (like \ref{T:M}) rely on it.
\subsection{Some explicit computations for $\L_1(S^k)$}
In this section we present results of computation of Ricci curvature of $\L_1(S^k)$ carried out with a help of {\it Mathematica}.  Here we present some partial but explicit results.

Ricci operator $Ric_i^j=Ric_{ik}g^{kj}$ be diagonalized in a $g_{ij}$-othonormal basis. 

For $k=2$ one can compute all relevant tensors explicitly using {\it RGTC Mathematica} pakage.  The eigen values for the space  $\L_1(S^2)$ are $\frac{3 t^2+6 t-1}{\text{R}^2 (t+1)^2}$ having multiplicity $3$ and $\frac{3 t^2+2 t+3}{\text{R}^2 (t+1)^2}$ with multiplicity one. In particular, the scalar curvature is $\frac{4 t (3 t+5)}{R^2(t+1)^2}$. From this we see that  for  $\L_1(S^2)$ \[Ric_{ij}\geq -\frac{1}{R^2}g_{ij}.\]

The map
\begin{equation}\label{E:projection}
p:\L_1(S^k)\rightarrow (0,R),(v,a,b)\rightarrow |v|/R
\end{equation}is a submersion because it commutes with the action of $\SO(k+1)$. It is defined on  $L_k\subset \L_1(S^k)$ which consists of $a,b,v$ such that $||a||\neq 0,R$.

It is natural to use O'Neill formulas \cite{ONeill} for computation of the  Ricci curvature of  $\L_1(S^k)$. The fibers of the map $p$ are Stiefel manifolds that carry $\SO(k+1)$-homogeneous metric.  Besse in \cite{Besse} gives formulas for curvature tensors of homogenous manifold that we are going to use. We have to set notations first.
Let $G$ be a Lie group and $F$ be a closed connected Lie subgroup. We denote by $\g$ and $\ff$ the Lie algebras of $G$ and $F$ respectively. We assume that $\g$ splits 
\begin{equation}\label{E:splits}
\g=\ff+\fp
\end{equation} into a direct sum of $\ff$ representations. We identify the tangent space to $eF\in G/F$ with $\g/\ff\cong \fp$. The $G$-invariant metric on $G/F$ defines an inner product on $\fp$ and is completely characterized by it.  All curvature tensors are $G$-invariant and can be written purely in terms of the inner product $(.,.)$ on $\fp$, bracket in $\g$ and decomposition (\ref{E:splits}). We denote by $[a,b]_{\ff}$  projection of $[a,b]$ onto $\ff$ and by $[a,b]_{\fp}$ the corresponding projection on $\fp$. According to \cite{Besse} the formula for Riemann tensor $R(X,Y)\in End(\fp)$ reads as
\[\begin{split}
&(R(X,Y)X,Y)=\frac{3}{4}||[X,Y]_{\fp}||^2-\frac{1}{2}([X,[X,Y]]_{\fp},Y)-\frac{1}{2}([Y,[Y,X]]_{\fp},Y)\\
&+||U(X,Y)||^2-(U(X,X),U(Y,Y)),\quad X,Y\in \fp
\end{split}\]
The map $U:\fp\otimes\fp\rightarrow \fp$ is defined by the formula:
\[2(U(X,Y),Z)=([Z,X]_{\fp},Y)+(X,[Z,Y]_{\fp})\]
In order to write a  formula for Ricci curvature $Ric$ we have to fix orthogonal basis $\{X_i\}$for $\fp$. Then
\[
\begin{split}
&Ric(X,X)=-\frac{1}{2}\sum_{j}||[X,X_j]_{\fp}||^2-\frac{1}{2}\sum_{j}([X,[X,X_j]_{\fp}]_{\fp},X_j)-\frac{1}{2}\sum_{j}([X,[X,X_j]_{\ff}]_{\fp},X_j)\\
&+\frac{1}{4}\sum_{i,j}([X_i,X_j]_{\fp},X)^2-([Z,X]_{\fp},X)
\end{split}\]
where $Z=\sum_{i}U(X_i,X_i)$. It is known that $(Z,X)=\tr(ad X)$. This is why $Z=0$, when the group $G$ is $\SO(k+1)$.

The formula for the Ricci tensor of the fiber of the projection $p$ is 
\[\begin{split}
&Ric^p=\frac{(2 (k+1)-4) t^2+(4 (k+1)-16) t+2 (k+1)-4}{(1+t)^2}dv_{ab}^2+\\
&\frac{(2 (k+1)-4) t+2 (k+1)-6}{t+1}\left(dv_{va}^2+dv_{vb}^2\right)\\
&+(k-3)\sum_{i=1}^{k-2} (dv_{vi}^2+dv_{ai}^2+dv_{bi}^2),k\geq3\\
&Ric^p=\frac{2 t^2-4 t+2}{(1+t)^2}dv_{ab}^2+\frac{2t}{t+1}\left(dv_{va}^2+dv_{vb}^2\right),\\ &k=2
\end{split}\]
$Ric^p$ is positive definite when $n\geq 3, t\geq 0$

In order to compute second fundamental form of the fibers of the projection $p$ we choose a vector field $e'=\frac{\sd}{\sd t}$ which is orthogonal to the fibers of $p$.
The formula for the form reads
\[T(\eta,\xi)=\frac{1}{|e'|}\nabla_{\eta}\xi\cdot e'\] where $\nabla$ is the Levi-Civita connection. We can compute the same quantity If we extend $\eta,\xi,e'$ from $\L_1(S^k)$ to $\Trig_1(\R^{k+1})$ compute it there and then restrict back to $\L_1(S^k)$. For connection we take  Levi-Civita connection on $\Trig_1(\R^{k+1})$.

If vector fields $\eta,\xi$ are elements of $\so_{k+1}$, then they have a canonical extension to $\Trig_1(\R^{k+1})$, produced by $\so_{k+1}$ action. The vector field $e'$ is a restriction of the vector field
\[\frac{R^2}{2|v|^2}v\cdot \sd_v-\frac{R^2}{2(R^2-|v|^2)}a\cdot \sd_a-\frac{R^2}{2(R^2-|v|^2)}b\cdot \sd_b\]
which is defined on $\Trig_1(\R^{k+1})$, for which we will keep the same notation. Components $v\cdot \sd_v, a\cdot \sd_a,b\cdot \sd_b$ are dilation operators acting on separately on $v,a$ and $b$ components. Advantage of working on $\Trig_1(\R^{k+1})$ is that covariant derivative associated with Levi-Civita connection defined by partial derivatives.
In order to simplify result of coordinate computation  it is convenient to evaluate $T$ at a point with $v=(Rt^{1/2},0,0,\dots,0),a=(0,R\sqrt{1-t},0,\dots,0),b=(0,0,R\sqrt{1-t},0,\dots,0)$.
The result of the computation is
\[T=\frac{R}{2}\sqrt{t(1-t)}\left(dv_{va}^2 +dv_{vb}^2-2dv_{ab}^2+\sum_{i=1}^{k-2} (2dv_{vi}^2-dv_{ai}^2-dv_{bi}^2)\right)k\geq 3\]
\[T=\frac{R}{2}\sqrt{t(1-t)}(dv_{va}^2 +dv_{vb}^2-2dv_{ab}^2),\quad k=2\]
The trace of $T$ is equal to 
\[\begin{split}
&\tr T=
\frac{2 \left((7-4 k) t^2+(7-3 k) t+k-2\right)}{R  (t+1)\sqrt{t(1-t) }},\quad k\geq3 \\
&\tr T= 4\frac{\sqrt{t(1-t)}}{R (t+1)},\quad k=2
\end{split}\]
The tensor $CT=T_{ik}g^{kk'}T_{kj}$ is equal to
\[
\begin{split}
&CT=t(1-t)\left( \frac{1}{1+t}dv_{va}^2 +\frac{1}{1+t}dv_{vb}^2+\frac{2}{1-t}dv_{ab}^2+\sum_{i=1}^{k-2} \frac{2}{t}dv_{vi}^2+\frac{1}{1-t}dv_{ai}^2+\frac{1}{1-t}dv_{bi}^2\right) k\geq 3\\
&CT=t(1-t)\left( \frac{1}{1+t}dv_{va}^2 +\frac{1}{1+t}dv_{vb}^2+\frac{2}{1-t}dv_{ab}^2\right) k= 2
\end{split}
\]
Finally the vertical component of the Ricci tensor is equal to 
\[\begin{split}
&-\frac{\left((8 k-7) t^2+(5 k-6) t-3 k+5\right) \left({de}_{ai}^2+\text{de}_{bi}^2\right)}{t+1}\\
&+\frac{\left(16 k t^2+13 k t-3 k-14 t^2-19 t+3\right) {de}_{vi}^2}{t+1}\\
&-\frac{2 {de}_{ab}^2 \left((8 k-7) t^3+(13 k-15) t^2+(2 k-1) t-3 k+3\right)}{(t+1)^2}+((8 k-7) t-2) \left({de}_{{va}}^2+{de}_{{vb}}^2\right)
\end{split}\]

We will not attempt to compute the horizontal and mixed components here and just notice that if $k\geq 4$ then the mixed component must be zero by symmetry reasons. The general for the Ricci curvature for arbitrary $\L_N(S^k)$ will be presented in the next section. 
\subsection{The curvature of a submanifold $M\subset \mathbb{R}^n$. A review}
The space $\L_N(S^k)$ (\ref{E:LNdef}) are defined as a submanifold inside f the  linear space $\Trig_N(\R^{k+1})$. Curvature of such submanifold can be computed from fundamental equations of differential geometry.  In this section we are going to remind how this is done and set up notations.  Let $(f_{1}(v),\dots,f_{k}(v))$ be components of a a smooth map $f:\mathbb{R}^n\rightarrow \mathbb{R}^k$. Let us assume that $f$ is transversal at $0\in \mathbb{R}^k$, that is  collection of smooth  $df_{1},\dots,df_{k}$ are linearly independent on  
\begin{equation}\label{E:Man}
M=\{v\in \mathbb{R}^n|f(v)=0\}.
\end{equation}  By inverse function theorem $M$ is a smooth submanifold in $\mathbb{R}^n$. Introduce notation: $T_M$ and $N_M$ stands for the tangent and normal bundles of $M\subset \mathbb{R}^n$, $\Gamma(M,L)$ is a space of $C^{\infty}$-sections of a vector bundle $L$ over $M$.
Let $R'$ be the curvature of a   metric tensor $g'_{ab}$ on $\mathbb{R}^n$. The curvature $R$ of the induced metric $g_{cd}$ on $M$ can be computed via Gauss'-Weingarten equation
\[\langle R'(X,Y)Z, W\rangle = \langle R(X,Y)Z, W\rangle + \langle \alpha(X,Z), \alpha(Y,W)\rangle -\langle \alpha(Y,Z), \alpha(X,W)\rangle \]
where $\alpha\in \Gamma(M,\Sym^2T_M\otimes N_M)$ is the second quadratic form. We can compute it by the formula 
\[\alpha(X,Y)=-\langle X,\nabla_Y e_i \rangle e_i.\]
Here $e_i$ is an orthonormal basis for $N_M$. Obviously we get the same answer if we use the formula
\begin{equation}\label{E:secondfundamental}
\alpha(X,Y)=\langle \nabla_XY,h_i \rangle h^i
\end{equation}
where 
\begin{equation}\label{dualbases}
\{ h_i |i=1,\dots,k\}\text{ is a basis for }N_M \text{ and $\{ h^i \}$ is the $g'$-dual basis}.
\end{equation}
 In the following indices labelled by latin letters $i,j,k,l,q,r$ will run through this range. Then
\[ \langle \alpha(X,Z), \alpha(Y,W)\rangle =\langle \nabla_XZ,h_i \rangle \langle \nabla_Y W, h_k \rangle  s^{ik}\]
\[s_{ik}= \langle h_i,h_k \rangle\]
and $(s_{ik})$ is the inverse to $(s^{ik})$.
 In our case collection 
 \begin{equation}\label{E:basis}
 \{ h_i\}=\{ \mathrm{grad}\ f_i\}
 \end{equation}
  defines a local  basis for $N_M$. Furthermore 
 \begin{equation}\label{E:gij}
 \text{ we assume that $g'_{ab}$ written in standard coordinates is a constant matrix.}
 \end{equation}
  Then the covariant derivative $\nabla'_a$ associated with Levi-Civita connection is $\frac{\sd}{\sd x^a}$, $R'=0$ and
 \begin{equation}\label{E:Rcurv}\langle R(X,Y)Z, W\rangle=
 \langle \nabla'_YZ, \mathrm{grad}f_i\rangle \langle \nabla'_XW, \mathrm{grad}f_j\rangle s^{ij}
-\langle \nabla'_XZ, \mathrm{grad}f_i\rangle \langle \nabla'_YW, \mathrm{grad}f_j\rangle s^{ij}
  \end{equation}
 \[s_{ij}=\langle \mathrm{grad}f_i,\mathrm{grad}f_j \rangle_{g'}\]
 We just proved the following proposition
 \begin{proposition}
 The curvature of  the submanifold $M\subset \mathbb{R}^n$ (\ref{E:Man}), where $\mathbb{R}^n$ is equipped with a metric $g'_{ab}$ (\ref{E:gij}) is given by the formula (\ref{E:Rcurv}).
 \end{proposition}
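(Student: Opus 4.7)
The plan is essentially to assemble the Gauss--Weingarten framework already laid out in the excerpt and specialize it to the two features of our setup: (i) the ambient metric $g'_{ab}$ has constant coefficients, and (ii) the submanifold $M$ is cut out globally as a transversal zero-locus $\{f_1=\cdots=f_k=0\}$, so the gradients $\mathrm{grad}\,f_i$ furnish a natural (non-orthonormal) global frame for the normal bundle $N_M$.

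First I would recall the Gauss--Weingarten equation in the form written above,
\[
\langle R'(X,Y)Z,W\rangle=\langle R(X,Y)Z,W\rangle+\langle\alpha(X,Z),\alpha(Y,W)\rangle-\langle\alpha(Y,Z),\alpha(X,W)\rangle,
\]
and immediately apply condition (\ref{E:gij}): since $g'_{ab}$ is constant in the standard coordinates on $\mathbb{R}^n$, the Christoffel symbols of $\nabla'$ vanish, so $\nabla'_a=\partial/\partial x^a$ and $R'\equiv 0$. Hence $R$ is determined entirely by the second fundamental form via
\[
\langle R(X,Y)Z,W\rangle=\langle\alpha(Y,Z),\alpha(X,W)\rangle-\langle\alpha(X,Z),\alpha(Y,W)\rangle.
\]

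Next I would compute $\alpha$ in the frame (\ref{E:basis}). Starting from (\ref{E:secondfundamental}), $\alpha(X,Y)=\langle\nabla'_X Y,h_i\rangle h^i$, where $\{h^i\}$ is the $g'$-dual basis to $\{h_i\}=\{\mathrm{grad}\,f_i\}$. Pairing two such expressions with $g'$ gives
\[
\langle\alpha(X,Z),\alpha(Y,W)\rangle=\langle\nabla'_X Z,h_i\rangle\,\langle\nabla'_Y W,h_j\rangle\,s^{ij},
\]
where $s^{ij}=\langle h^i,h^j\rangle$ is inverse to $s_{ij}=\langle\mathrm{grad}\,f_i,\mathrm{grad}\,f_j\rangle_{g'}$ (the duality and the symmetry of $\nabla'$, for which $\nabla'_X Y$ and $\nabla'_Y X$ differ by $[X,Y]$ which is tangential, lets one replace $\langle\nabla'_X Y,h_i\rangle$ by $-\langle Y,\nabla'_X h_i\rangle$ if desired, but the present form suffices). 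Substituting into the Gauss equation gives exactly (\ref{E:Rcurv}).

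The only genuine technical point, and the one I would verify carefully, is that the equivalence between the two formulas quoted for $\alpha$ — namely $\alpha(X,Y)=-\langle X,\nabla_Y e_i\rangle e_i$ in an orthonormal normal frame $\{e_i\}$ versus $\alpha(X,Y)=\langle\nabla_X Y,h_i\rangle h^i$ in the dual pair $(h_i,h^i)$ — survives the change of frame from orthonormal to the gradient frame; this just amounts to a linear change of basis with transition matrix $s_{ij}$ and its inverse, so the contractions $\alpha^i\otimes\alpha^j\,(\cdot,\cdot)$ reorganize into the $s^{ij}$ pattern. With this bookkeeping in place the derivation is mechanical, and writing it out yields the proposition.
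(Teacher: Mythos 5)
Your proposal is correct and follows essentially the same route as the paper: the paper's own ``proof'' is just the preceding discussion (Gauss--Weingarten with $R'=0$ and $\nabla'_a=\partial/\partial x^a$ because $g'_{ab}$ is constant, plus the expression of $\alpha$ in the frame $\{h_i\}=\{\mathrm{grad}\,f_i\}$ with dual frame $\{h^i\}$ and Gram matrix $s_{ij}$), which is exactly what you assemble. The sign bookkeeping and the $s^{ij}$ contraction in your version match (\ref{E:Rcurv}).
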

For $M$-tangential vectors $\langle Z, \mathrm{grad}f_i\rangle=0$. Then $\langle \nabla'_YZ, \mathrm{grad}f_i\rangle=-\langle Z,  \nabla'_Y\mathrm{grad}f_i\rangle$ and the formula can be written in a form that is more suitable for computations:
 
  \begin{equation}\label{E:Rcurvf}
  \begin{split}
&  \langle R(X,Y)Z, W\rangle=\\
&=\langle \nabla'_Y Z, \mathrm{grad}f_i\rangle \langle \nabla'_XW, \mathrm{grad}f_j\rangle s^{ij}
-\langle \nabla'_XZ, \mathrm{grad}f_i\rangle \langle \nabla'_WY, \mathrm{grad}f_j\rangle s^{ij}
\end{split}
  \end{equation}
  
Let $\{l_i\}$ be a local basis of the tangent bundle $T_M$, and $\{l^i\}$ be the $g'$-dual basis. The tensor $g'^{ab}\frac{\sd}{\sd x^a}\otimes \frac{\sd}{\sd x^b}, 1\leq ab\leq n$ is  a sum  of two orthogonal components  $l_i\otimes l^i+h_r\otimes h^r$. Then 
\begin{equation}\label{E:orthdec}
l_i\otimes l^i=g'^{ab}\frac{\sd}{\sd x^a}\otimes\frac{\sd}{\sd x^b}-h_r\otimes h^r.
\end{equation}
Recall that the Ricci tensor is a contaction
\[Ric(X,Y)= \langle R(X,l_i)Y,l^{i}\rangle=- \langle R(X,l_i)l^{i},Y\rangle\]

 We use the second sum in the above formula and (\ref{E:orthdec}) to derive $Ric(X,W)$:

 \[\begin{split}&Ric(X,W)=
g'^{ab} \langle \nabla'_{\frac{\sd}{\sd x^a}}\frac{\sd}{\sd x^b}, \mathrm{grad}f_i\rangle \langle \nabla'_XW, \mathrm{grad}f_j\rangle s^{ij}
-g'^{ab}\langle \nabla'_X\frac{\sd}{\sd x^a}, \mathrm{grad}f_i\rangle \langle \nabla'_W\frac{\sd}{\sd x^b}, \mathrm{grad}f_j\rangle s^{ij}\\
&-\langle \nabla'_{h_r}h^r, \mathrm{grad}f_i\rangle \langle \nabla'_XW, \mathrm{grad}f_j\rangle s^{ij}
+\langle \nabla'_X h^r, \mathrm{grad}f_i\rangle \langle \nabla'_W{h_r}, \mathrm{grad}f_j\rangle s^{ij}
\end{split}  \]
We know that $\frac{\sd}{\sd x^a}$ is covariantly constant so $\nabla'_X\frac{\sd}{\sd x^a}=0$ for all $X$. Additionally \[h^q=s^{qr}\mathrm{grad}f_r\] and
\begin{equation}\label{E:simp}
\langle \nabla'_{X}h^r,\mathrm{grad}f_i\rangle=-\langle h^r, \nabla'_{X}\mathrm{grad}f_i\rangle\text{ because } \langle h^r,\mathrm{grad}f_i\rangle=\delta_{i}^r
\end{equation}  Thus 
 \[\begin{split}&Ric(X,W)=
 -\langle \nabla'_{\mathrm{grad}f_r}h^r, \mathrm{grad}f_i\rangle \langle \nabla'_XW, \mathrm{grad}f_j\rangle s^{ij}
+\langle \nabla'_X h^r, \mathrm{grad}f_i\rangle \langle \nabla'_{W}\mathrm{grad}f_r, \mathrm{grad}f_j\rangle s^{ij}=\\
& \langle \mathrm{grad}f_q, \nabla'_{\mathrm{grad}f_r}\mathrm{grad}f_i\rangle \langle \nabla'_X W,   \mathrm{grad}f_j\rangle s^{qr} s^{ij}
-\langle \mathrm{grad}f_q, \nabla'_X\mathrm{grad}f_i\rangle \langle \mathrm{grad}f_r, \nabla'_{W}\mathrm{grad}f_j\rangle s^{qr}s^{ij}
\end{split}  \]
\begin{proposition}
Ricci curvature of  the submanifold $M\subset \mathbb{R}^n$ (\ref{E:Man}), where $\mathbb{R}^n$ is equipped with a metric $g'_{ij}$ (\ref{E:gij}) is equal to 
\begin{equation}\label{E:riccigen}
\begin{split}&Ric(X,W)= \langle \mathrm{grad}f_q, \nabla'_{\mathrm{grad}f_r}\mathrm{grad}f_i\rangle \langle \nabla'_XW,   \mathrm{grad}f_j\rangle s^{qr} s^{ij}\\
&-\langle \mathrm{grad}f_q, \nabla'_X\mathrm{grad}f_i\rangle \langle \mathrm{grad}f_r, \nabla'_{W}\mathrm{grad}f_j\rangle s^{qr}s^{ij}\end{split}\end{equation}
where $X,W\in \Gamma(M,T_M)$.
\end{proposition}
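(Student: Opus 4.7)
The proposition is essentially a direct corollary of formula (\ref{E:Rcurvf}) together with the orthogonal decomposition (\ref{E:orthdec}) of the inverse metric, so the plan is to trace the contractions carefully and apply two simplifications that are already in place in the text. First I would write the Ricci tensor as the contraction $Ric(X,W)=\langle R(X,l_i)W,l^i\rangle$ against a local tangent basis $\{l_i\}$ of $T_M$ with $g'$-dual basis $\{l^i\}$, and then rewrite the formula (\ref{E:Rcurvf}) for $\langle R(X,Y)Z,W\rangle$ with $Y\to l_i$, $Z\to W$ and $W\to l^i$ to obtain an expression in which $l_i\otimes l^i$ appears twice.

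Second, I would substitute the identity $l_i\otimes l^i=g'^{ab}\frac{\sd}{\sd x^a}\otimes\frac{\sd}{\sd x^b}-h_r\otimes h^r$ from (\ref{E:orthdec}). The terms involving $\frac{\sd}{\sd x^a}$ vanish, because assumption (\ref{E:gij}) forces $\nabla'_X\frac{\sd}{\sd x^a}=0$ for every $X$, and consequently $\langle\nabla'_X\frac{\sd}{\sd x^a},\mathrm{grad}f_i\rangle=0$ and $\langle\nabla'_{\sd/\sd x^a}\frac{\sd}{\sd x^b},\mathrm{grad}f_i\rangle=0$. This reduces $Ric(X,W)$ to a sum of two pieces, each built out of pairings of the normal basis $\{h_r,h^r\}$ against $\mathrm{grad}f_i$'s and their covariant derivatives.

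Third, I would use the duality relation $\langle h^r,\mathrm{grad}f_i\rangle=\delta_i^r$ together with the integration-by-parts identity (\ref{E:simp}), namely $\langle \nabla'_X h^r,\mathrm{grad}f_i\rangle=-\langle h^r,\nabla'_X\mathrm{grad}f_i\rangle$, to turn every derivative that falls on $h^r$ into a derivative on $\mathrm{grad}f_i$. Expanding $h^r=s^{qr}\mathrm{grad}f_q$ and $h^j=s^{ij}\mathrm{grad}f_j$ in the remaining pairings produces exactly the two terms of (\ref{E:riccigen}), with the first term corresponding to the $\nabla'_{\mathrm{grad}f_r}h^r$ contribution and the second to the $\langle\nabla'_X h^r,\mathrm{grad}f_i\rangle\langle\nabla'_W h_r,\mathrm{grad}f_j\rangle$ contribution (whose relative sign comes from the minus in (\ref{E:simp}) combined with the minus in (\ref{E:Rcurvf})).

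The only nontrivial obstacle is careful bookkeeping of the index pairs $(q,r)$ and $(i,j)$ and keeping track of the signs produced by the two integration-by-parts moves; otherwise every ingredient has been prepared in the paragraphs leading up to the statement and no genuinely new computation is needed.
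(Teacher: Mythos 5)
Your proposal reproduces the paper's own derivation: the paper likewise contracts (\ref{E:Rcurvf}) against a tangent frame, substitutes the decomposition (\ref{E:orthdec}), kills the flat terms using covariant constancy of the coordinate vector fields, and converts derivatives of $h^r$ into derivatives of $\mathrm{grad}f_i$ via (\ref{E:simp}) together with $h^q=s^{qr}\mathrm{grad}f_r$. The only cosmetic difference is that you contract as $\langle R(X,l_i)W,l^i\rangle$ while the paper uses $-\langle R(X,l_i)l^i,W\rangle$; the two agree by the antisymmetry of the curvature in its last two arguments (equivalently, by the symmetry of the Hessians of the $f_i$), so the resulting bookkeeping is the same.
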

\begin{corollary}
The scalar curvature $Sc$ of $M$ is equal to 
\begin{equation}\label{E:Sc}
\begin{split}
&Sc=\\
&-\langle \mathrm{grad}f_q,\mathrm{grad}\frac{\sd f_i}{\sd x^a}\rangle\langle \mathrm{grad}f_r,\mathrm{grad}\frac{\sd f_j}{\sd x^b}\rangle g'^{ab}s^{qr} s^{ij}\\
&+\langle \mathrm{grad}f_q,\nabla'_{\mathrm{grad}f_r}\mathrm{grad}f_i\rangle \langle \mathrm{grad}f_k,\nabla'_{\mathrm{grad}f_l}\mathrm{grad}f_j\rangle s^{kl} s^{qr} s^{ij}\\
&+\langle \mathrm{grad}f_q,\nabla'_{\mathrm{grad}f_k}\mathrm{grad}f_i\rangle \langle \mathrm{grad}f_l,\nabla'_{\mathrm{grad}f_r}\mathrm{grad}f_j\rangle s^{kl} s^{qr} s^{ij}
\end{split}\end{equation}
\end{corollary}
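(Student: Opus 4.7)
The plan is to obtain $Sc$ by tracing the Ricci formula \eqref{E:riccigen} against the induced metric on $M$, using the orthogonal decomposition \eqref{E:orthdec} to convert everything into contractions of $\mathrm{grad}\,f_i$'s and $\partial_a$'s in the ambient $\mathbb{R}^n$. The decisive observation is that the right-hand side of \eqref{E:riccigen} is manifestly bilinear in $X$ and $W$, so it extends by linearity to a bilinear form on $T\mathbb{R}^n|_M\otimes T\mathbb{R}^n|_M$; this extension agrees with the true Ricci tensor on the tangent bundle $T_M$. Contracting with the $g$-dual pair $l_i\otimes l^i$ and applying \eqref{E:orthdec}, one obtains
\[
Sc \;=\; g'^{ab}\,Ric(\partial_a,\partial_b)\;-\;s^{kl}\,Ric(\mathrm{grad}\,f_k,\mathrm{grad}\,f_l).
\]

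The first piece is the easier one. For $Ric(\partial_a,\partial_b)$, the first summand of \eqref{E:riccigen} contains the factor $\langle\nabla'_{\partial_a}\partial_b,\mathrm{grad}\,f_j\rangle$, which vanishes because $\nabla'_{\partial_a}\partial_b=0$ by the assumption \eqref{E:gij}. In the second summand one uses the companion observation $\nabla'_{\partial_a}\mathrm{grad}\,f_i=\mathrm{grad}(\partial_a f_i)$, valid because $\mathrm{grad}\,f_i=g'^{cd}(\partial_d f_i)\partial_c$ with constant $g'^{cd}$. After contracting the resulting expression with $g'^{ab}$, one lands exactly on the first line of the target formula.

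For the second piece, one substitutes $X=\mathrm{grad}\,f_k$, $W=\mathrm{grad}\,f_l$ into \eqref{E:riccigen}; both summands survive and, multiplied by $-s^{kl}$, must be identified with the second and third lines of the target formula. The crucial identity needed here is the Hessian symmetry
\[
\langle\mathrm{grad}\,f_a,\nabla'_{\mathrm{grad}\,f_b}\mathrm{grad}\,f_c\rangle
\;=\;\mathrm{grad}\,f_a{}^{\alpha}\,\mathrm{grad}\,f_b{}^{\beta}\,\partial_\alpha\partial_\beta f_c
\;=\;\langle\mathrm{grad}\,f_b,\nabla'_{\mathrm{grad}\,f_a}\mathrm{grad}\,f_c\rangle,
\]
which follows from $\partial_\alpha\partial_\beta f_c = \partial_\beta\partial_\alpha f_c$. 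Combined with the symmetry $s^{qr}=s^{rq}$, it lets one swap the two "gradient" arguments of any such bracket without changing its value, which is precisely what is required in order to recast $\langle\nabla'_{\mathrm{grad}\,f_k}\mathrm{grad}\,f_l,\mathrm{grad}\,f_j\rangle$ and $\langle\mathrm{grad}\,f_r,\nabla'_{\mathrm{grad}\,f_l}\mathrm{grad}\,f_j\rangle$ into the shapes demanded by lines two and three respectively.

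The main obstacle is precisely this bookkeeping step. The Hessian brackets are symmetric only in the two gradient slots, not in the "function" slot, so one must be careful about which role each dummy index $q,r,i,j,k,l$ plays at each stage and avoid illegitimate relabellings that cross that asymmetry. Once this is organised — essentially by writing each bracket in the symbolic form $H_{abc}=\mathrm{grad}\,f_a^\alpha\mathrm{grad}\,f_b^\beta\partial_\alpha\partial_\beta f_c$, exploiting $H_{abc}=H_{bac}$, and renaming the three pairs of contracted dummies in turn — the two cross terms from $-s^{kl}Ric(\mathrm{grad}\,f_k,\mathrm{grad}\,f_l)$ assemble into the second and third lines of \eqref{E:Sc}, completing the proof.
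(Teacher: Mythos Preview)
Your overall strategy coincides with the paper's: compute $Sc=\sum_i Ric(l_i,l^i)$ by substituting the orthogonal decomposition \eqref{E:orthdec} into the Ricci formula \eqref{E:riccigen}. The paper's proof is the one-liner ``Follows from $Sc=Ric(l_i,l^i)$, \eqref{E:orthdec} and \eqref{E:simp}''.

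There is, however, a genuine gap in your treatment of the \emph{first} summand of \eqref{E:riccigen}. You assert that the right-hand side of \eqref{E:riccigen} is ``manifestly bilinear in $X$ and $W$'', but the factor $\langle\nabla'_X W,\mathrm{grad}f_j\rangle$ depends on the first jet of $W$, not just on $W$ at the point; it is therefore \emph{not} tensorial, and \eqref{E:orthdec} cannot be applied to it directly. Concretely, after substituting $X=\mathrm{grad}f_k$, $W=\mathrm{grad}f_l$ and multiplying by $-s^{kl}$, the first summand contributes
\[
-\,s^{kl}\,\langle\mathrm{grad}f_q,\nabla'_{\mathrm{grad}f_r}\mathrm{grad}f_i\rangle\,\langle\nabla'_{\mathrm{grad}f_k}\mathrm{grad}f_l,\mathrm{grad}f_j\rangle\,s^{qr}s^{ij},
\]
whereas line two of \eqref{E:Sc} has the \emph{opposite} sign and a different placement of the function index ($j$ rather than $l$ sits in the ``function'' slot). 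Your Hessian identity $H_{abc}=H_{bac}$ only swaps the two \emph{vector-field} arguments; it cannot move the function index, and it certainly cannot produce a sign change. So the reshuffling you describe does not actually land on line two.

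What closes this gap is precisely the identity the paper cites, \eqref{E:simp}: since $\langle h^r,\mathrm{grad}f_j\rangle=\delta^r_j$ is constant, differentiating gives $\langle\nabla'_{h_r}h^r,\mathrm{grad}f_j\rangle=-\langle h^r,\nabla'_{h_r}\mathrm{grad}f_j\rangle$. This Leibniz identity (which implicitly accounts for the derivative of the non-constant $s^{rk}$ hidden in $h^r=s^{rk}\mathrm{grad}f_k$) is what converts $-\langle\nabla'_{h_r}h^r,\mathrm{grad}f_j\rangle$ into $+s^{kl}\langle\mathrm{grad}f_k,\nabla'_{\mathrm{grad}f_l}\mathrm{grad}f_j\rangle$, yielding line two with the correct positive sign. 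Your Hessian symmetry is the right tool for matching line three (the trace of the second, genuinely tensorial, summand), but for line two you need \eqref{E:simp} in addition.
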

\begin{proof}
Follows from the formula $R=R(l_i,l^i)$, (\ref{E:orthdec}) and (\ref{E:simp})
\end{proof}
The mean curvature $H$ is equal to 
\[\alpha(l_i,l^i)=\langle \nabla_{l_j}l^j,h_i \rangle h^i=g^{ab}\langle \nabla_{\frac{\sd}{\sd x^a}}\frac{\sd}{\sd x^b},h_i \rangle h^i-\langle \nabla_{h_r}h^r,h_i \rangle h^i=-\langle \nabla_{h_r}h^r,h_i \rangle h^i=\] 
\[=\langle\mathrm{grad}f_q, \nabla_{\mathrm{grad}f_r}\mathrm{grad}f_i\rangle\mathrm{grad}f_js^{rq}s^{ij}\]
It square is equal to 
\begin{equation}\label{E:meanssquare}
H^2=s^{rq} \langle \mathrm{grad}f_q,\nabla_{\mathrm{grad}f_r}\mathrm{grad}f_i\rangle s^{kl}\langle \mathrm{grad}f_{k},\nabla_{\mathrm{grad}f_{l}}\mathrm{grad}f_{j}\rangle s^{ij}
\end{equation}
Note that it is the middle term in the formula for $Sc$.
\begin{proposition}\cite{Leung}
 Let $M^n$ be an immersed submanifold in the Euclidean space $\R^{n+p}$ where $p$ denotes the codimension. Let $Ric_{\min}$, $Sc$, and $H$ denote the functions that assign to each point of $M$ the minimum Ricci curvature, the scalar curvature, and the mean curvature respectively of M at the point. Then we have
 \begin{equation}\label{E:ricineq}
Ric_{\min} \geq  Sc - \frac{(n-1)H^2}{4}+ \frac{1}{4n^2} \left(\sqrt{n-1}(n-2)|H|-2   \sqrt{(n-1)H^2-nSc}\right)^{2}
\end{equation}
\end{proposition}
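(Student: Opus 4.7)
The plan is to deduce the inequality from the Gauss equation together with a two-step optimization over the data of the second fundamental form. Fix a point $p\in M$, an orthonormal tangent frame $\{e_1,\ldots,e_n\}$ at $p$, and a normal frame $\{\xi_1,\ldots,\xi_p\}$ chosen with $\xi_1$ parallel to the mean curvature vector $\mathbf{H}$, so that $H^1=|H|$ and $H^\alpha=0$ for $\alpha\geq 2$. Writing $h^\alpha_{ij}=\langle\alpha(e_i,e_j),\xi_\alpha\rangle$, a double contraction of the Gauss equation yields the standard identities
\[
Ric(e_i,e_i)=\sum_\alpha\Bigl(nH^\alpha h^\alpha_{ii}-\sum_k (h^\alpha_{ik})^2\Bigr),\qquad Sc=n^2|H|^2-|\alpha|^2,
\]
which serve as the starting point.

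Next I would rotate the tangent frame to diagonalize the shape operator $A^1=(h^1_{ij})$ with eigenvalues $\lambda_1,\ldots,\lambda_n$; the trace constraint reads $\sum_i\lambda_i=n|H|$ and the scalar-curvature identity becomes $\sum_i\lambda_i^2+\sum_{\alpha\geq 2}|h^\alpha|^2=n^2|H|^2-Sc$. In these coordinates
\[
Ric(e_i,e_i)=n|H|\lambda_i-\lambda_i^2-\sum_{\alpha\geq 2}\sum_k(h^\alpha_{ik})^2,
\]
so the contributions from $\alpha\geq 2$ only push the Ricci curvature downward. Since increasing $\sum_{\alpha\geq 2}|h^\alpha|^2$ forces $\sum_i\lambda_i^2$ to decrease (keeping the sum constant), the extremum for $Ric_{\min}$ is attained when no mass is spent on the traceless normal components, i.e.\ when $\sum_i\lambda_i^2=n^2|H|^2-Sc$. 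This reduces the whole question to the purely one-dimensional variational problem: minimize $\min_i\bigl(n|H|\lambda_i-\lambda_i^2\bigr)$ subject to $\sum_i\lambda_i=n|H|$ and $\sum_i\lambda_i^2=n^2|H|^2-Sc$.

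Finally, I would solve this extremal problem by a Lagrange-multiplier argument together with a rearrangement/swap argument showing that the minimizing configuration is the two-value profile, with $n-1$ of the $\lambda_i$ equal to a common value $\nu$ and one outlier $\mu=n|H|-(n-1)\nu$. The two constraints then determine $\mu$ and $\nu$ in closed form via a quadratic, and plugging the outlier into $n|H|\mu-\mu^2$ reproduces exactly the expression $Sc-\tfrac{(n-1)H^2}{4}+\tfrac{1}{4n^2}\bigl(\sqrt{n-1}(n-2)|H|-2\sqrt{(n-1)H^2-nSc}\bigr)^2$.

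The main obstacle is the last step: verifying that the two-value configuration really is the minimizer of $\min_i(n|H|\lambda_i-\lambda_i^2)$ over all $(\lambda_i)$ with fixed first and second moment, and then tracking the square-root algebra so that the answer assembles into the precise form stated. Convexity of $x\mapsto nHx-x^2$ (downward parabola) makes the swap argument clean, but the bookkeeping between the two quadratic constraints is delicate; once this extremum is identified, the inequality of the proposition follows by direct substitution.
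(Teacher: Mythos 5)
The paper does not prove this proposition --- it is quoted from \cite{Leung} --- so your proposal has to be judged on its own. Its skeleton is sound: the Gauss-equation identities and the frame adapted to the mean curvature vector are correct, and the step you single out as the ``main obstacle'' is actually the easy part. Once everything is reduced to a single shape operator with eigenvalues $\lambda_i$ satisfying $\sum_i\lambda_i=n|H|$ and $\sum_i\lambda_i^2=n^2|H|^2-Sc$, Cauchy--Schwarz applied to the remaining $n-1$ eigenvalues confines each $\lambda_i$ to an explicit interval $[\lambda_-,\lambda_+]$; the concave function $\lambda\mapsto n|H|\lambda-\lambda^2$ attains its minimum over that interval at the endpoint $\lambda_-$ (which is exactly the two-value profile), and substituting $\lambda_-$ reproduces the stated right-hand side --- no swap argument or delicate bookkeeping is required. (Do note that the $H$ in the proposition is the unnormalized mean curvature $\mathrm{tr}\,\alpha$, consistent with (\ref{E:meanssquare}); with your normalized convention every $H$ in the target formula must be read as $nH$, otherwise the constants will not match.)

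The genuine gap is the step you dispatch in one sentence: the reduction to codimension one. The traceless normal components do not merely ``force $\sum_i\lambda_i^2$ to decrease''; they also subtract $\sum_{\alpha\ge2}\sum_k(h^\alpha_{ik})^2$ directly from $Ric(e_i,e_i)$, and these two effects compete. Shrinking $\sum_i\lambda_i^2$ by $m$ narrows $[\lambda_-,\lambda_+]$ and therefore \emph{raises} the worst value of $n|H|\lambda_i-\lambda_i^2$; a computation shows the codimension-one extremal bound, as a function of $S=\sum_i\lambda_i^2$, is convex with slope of absolute value at least $\frac{n-1}{n}$, so the gain is at least $\frac{n-1}{n}m$. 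Meanwhile the direct loss $\sum_{\alpha\ge2}\sum_k(h^\alpha_{ik})^2$ can be as large as $\frac{n-1}{n}m$, because the sharp row-mass bound for a traceless symmetric matrix is $\sum_k(h^\alpha_{ik})^2\le\frac{n-1}{n}|h^\alpha|^2$, with equality for $h^\alpha=\mathrm{diag}\bigl(a,-\frac{a}{n-1},\dots,-\frac{a}{n-1}\bigr)$. The two estimates coincide exactly, so the reduction is true but with zero slack: it cannot follow from the qualitative remark you give, and any lossy bound on the row mass would destroy the inequality. Until this comparison (or Leung's alternative of bounding each normal direction's contribution separately and then optimizing over the distribution of $|\alpha|^2$ among the directions) is carried out, your argument only establishes the hypersurface case $p=1$.
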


\subsection{Curvature tensors of $\L_N(S^k)$}
So far discussion was very general. Suppose now $\mathbb{R}^{n}\cong \Trig_N(\R^{k+1})$. It carries a metric
metric $\langle \delta n_1(\theta), \delta n_2(\theta)\rangle=\frac{1}{2\pi}\int_0^{2\pi} \delta n_1(\theta)\cdot \delta n_2(\theta)d\theta$, which in our coordinates  is 
\begin{equation}\label{E:flatmetricsincos}
dv\cdot dv+\frac{1}{2}\sum_{s=1}^N\left(da[s]\cdot da[s]+db[s]\cdot db[s]\right)
\end{equation}

A trigonometric polynomial $\phi(\theta)\in C_{2N}(S^1)$  defines a quadratic functional   
\begin{equation}\label{E:qfunct}
n(\theta)\rightarrow \frac{1}{2\pi}\int_{S^1}(n(\theta)\cdot n(\theta) -R^2)\phi(\theta)d\theta=f_{\phi}(n)
\end{equation} on $\Trig_N(\R^{k+1})$.
We defined  the real algebraic variety $\L_N(S^k)\subset \Trig_N(\R^{k+1})$ as  $\{n(\theta)\in\Trig_N(\R^{k+1})| f_{\phi}(n(\theta))=0\ \forall \phi\in  C_{2N}(S^1)\}$. 
 
It will be useful to rewrite   metric (\ref{E:flatmetricsincos})  by using  Dirichlet kernel (\ref{E:dirichlet3})
 \begin{equation}\label{E:innerexpDirishlet}
 \begin{split}
 &\langle \delta_1 n, \delta_2 n \rangle
 =\frac{1}{4\pi^2}\int_{S^1\times S^1} \delta_1 n(\theta_1)\cdot \delta_2 n(\theta_1)D_N(\theta_1-\theta_2)
 \end{split}\end{equation}
 
 The tangent bundle to $\R^{k+1}$ is canonically trivialized. This is why we can identify tangent vectors $T_{n(\theta)}(\Trig_N(\R^{k+1}))$ with elements $e(\theta)\in \Trig_N(\R^{k+1})$.

We start curvature computations  with writing down the formula for  $\mathrm{grad} f_{\phi}$.
\begin{proposition}\label{P:gradgen}
\begin{enumerate}
\item  
\begin{equation}\label{E:vectn}
\mathrm{grad} f_{\phi}(\theta)=\frac{1}{\pi}\int_{S^1}D_N(\theta-\theta')n(\theta')\phi(\theta')\rd \theta'
\end{equation}
\item Consider  the bilinear form \[4g_N(\phi,\psi):=\langle \mathrm{grad} f_{\phi},\mathrm{grad} f_{\psi} \rangle=\frac{1}{4\pi^2}\int_{S^1\times S^1}4g_N(\theta,\theta')\phi(\theta)\psi(\theta')\rd \theta \rd \theta'\] on $C_{2N}(S^1)$.
The kernel $g_N$  is equal to 
\begin{equation}\label{E:kern}
g_N(\theta,\theta')=D_N(\theta-\theta')n(\theta)\cdot n(\theta')
\end{equation}
\end{enumerate}
\end{proposition}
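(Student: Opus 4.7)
The plan is to compute the differential of $f_{\phi}$ in an arbitrary tangent direction, then extract the gradient by enforcing self-adjointness of the projection $\Pr_N$.

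For part (1), I would start from the definition
\[
f_{\phi}(n)=\frac{1}{2\pi}\int_{S^1}\bigl(n(\theta)\cdot n(\theta)-R^2\bigr)\phi(\theta)\,d\theta
\]
and differentiate along $\delta n\in T_{n}\Trig_N(\R^{k+1})\cong \Trig_N(\R^{k+1})$, obtaining
\[
(df_{\phi})_n(\delta n)=\frac{1}{\pi}\int_{S^1}\bigl(n(\theta)\phi(\theta)\bigr)\cdot \delta n(\theta)\,d\theta.
\]
Now $n\phi\in \Trig_{3N}(\R^{k+1})$, not $\Trig_N$, so the naive gradient $2n\phi$ does not lie in the correct space. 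The gradient is characterized by $\langle \mathrm{grad} f_{\phi},\delta n\rangle=(df_{\phi})(\delta n)$ for every $\delta n\in\Trig_N$, where $\langle\cdot,\cdot\rangle$ is the inner product (\ref{E:innerexpDirishlet}). Since $\mathrm{grad} f_{\phi}-2n\phi$ must be $L^2$-orthogonal to $\Trig_N$, we conclude $\mathrm{grad} f_{\phi}=\Pr_N(2n\phi)$, and applying the integral formula for $\Pr_N$ recalled just before (\ref{E:em1N}) yields (\ref{E:vectn}).

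For part (2), I would exploit the gradient property itself: applying $df_{\phi}$ to the tangent vector $\mathrm{grad} f_{\psi}\in\Trig_N$ gives
\[
\langle \mathrm{grad} f_{\phi},\mathrm{grad} f_{\psi}\rangle=\frac{1}{\pi}\int_{S^1}n(\theta)\phi(\theta)\cdot \mathrm{grad} f_{\psi}(\theta)\,d\theta.
\]
Substituting (\ref{E:vectn}) for $\mathrm{grad} f_{\psi}$ and interchanging the order of integration,
\[
\langle \mathrm{grad} f_{\phi},\mathrm{grad} f_{\psi}\rangle=\frac{1}{\pi^2}\int_{S^1\times S^1}D_N(\theta-\theta')\,n(\theta)\cdot n(\theta')\,\phi(\theta)\psi(\theta')\,d\theta\,d\theta',
\]
which, after matching against the normalization $4g_N(\phi,\psi)=\frac{1}{4\pi^2}\iint 4g_N(\theta,\theta')\phi(\theta)\psi(\theta')$, gives (\ref{E:kern}).

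There is no serious obstacle here; the only thing to watch is bookkeeping of constants, since three different normalizations are in play simultaneously: the factor $\tfrac{1}{2\pi}$ in the definition of $f_{\phi}$, the factor $\tfrac{1}{2\pi}$ in the inner product on $\Trig_N(\R^{k+1})$, and the factor $\tfrac{1}{2\pi}$ in the reproducing formula for $\Pr_N$. A sanity check is that both expressions are manifestly symmetric in $(\phi,\psi)$: the integral representation by symmetry of $D_N(\theta-\theta')$ and of $n(\theta)\cdot n(\theta')$, and the inner-product definition on the left-hand side by symmetry of $\langle\cdot,\cdot\rangle$. One could equivalently derive (2) by viewing $\Pr_N$ as an orthogonal projector and writing $\langle \Pr_N(2n\phi),\Pr_N(2n\psi)\rangle=\langle \Pr_N(2n\phi),2n\psi\rangle$, which produces the same double integral in one step.
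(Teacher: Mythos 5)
Your proposal is correct and follows essentially the same route as the paper: differentiate $f_{\phi}$ along a tangent direction, recognize the gradient as $\Pr_N(2n\phi)$ via the Dirichlet-kernel integral formula, and then pair two such gradients. The only cosmetic difference is in part (2), where the paper substitutes both gradient formulas and collapses the resulting triple integral using the idempotency identity $\frac{1}{2\pi}\int_{S^1}D_N(\theta-\theta')D_N(\theta-\theta'')\,\rd\theta=D_N(\theta'-\theta'')$, while you pair $\mathrm{grad}\,f_{\psi}$ directly against $df_{\phi}$ (equivalently, use $\Pr_N^{*}\Pr_N=\Pr_N$) --- the same fact in projector language.
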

\begin{proof}
\begin{enumerate}
\item Let $n_t(\theta)$ be a sooth path $t\in [0,\epsilon)$, $n_0(\theta)=n(\theta)$ Follows from  the formula 
\[\left. \frac{\sd f_{\phi}}{\sd t}\right|_{t=0}=\frac{2}{2\pi}\int_{S^1}\left.n(\theta)\cdot \frac{\sd n_{t}(\theta)}{\sd t}\right|_{t=0} \phi(\theta)\rd \theta
=\frac{1}{\pi}\int_{S^1\times S^1}\left.n(\theta')\cdot \frac{\sd n_{t}(\theta)}{\sd t}\right|_{t=0}\phi(\theta')D_N(\theta-\theta')\rd \theta\rd \theta'
\]
\item 
\[\begin{split}&4g(\phi,\psi)=\frac{4}{8\pi^3}\int_{S^1\times S^1\times S^1}D_N(\theta-\theta')D_N(\theta-\theta'')n(\theta')\cdot n(\theta'')\phi(\theta')\phi(\theta'')\rd \theta\rd \theta'\rd \theta''=\\
&=\frac{4}{4\pi^2}\int_{S^1\times S^1}D_N(\theta'-\theta'')n(\theta')\cdot n(\theta'')\phi(\theta')\phi(\theta'')\rd \theta'\rd \theta''\end{split}\]
We used 
\begin{equation}\label{E:idempotent}
\frac{1}{2\pi}\int_{S^1}D_N(\theta-\theta')D_N(\theta-\theta'')\rd \theta=D_N(\theta'-\theta'')
\end{equation}
\end{enumerate}

\end{proof}

Besides inner product $g(\cdot,\cdot)$ the space $C_{2N}(S^1)$ carries the standard $L^2$-inner product $(\phi,\psi)=\frac{1}{2\pi}\int_{S^1}\phi(\theta)\psi(\theta)\rd (\theta)$. Then $g(\phi,\psi)=(g(\phi),\psi)$, where $g$ is some symmetric operator on $C_{2N}(S^1)$. The inverse by 
\begin{equation}\label{E:Gdef}
G(\psi)(\theta)=\frac{1}{2\pi}\int_{S^1}G(\theta,\theta')\psi(\theta')\rd\theta'
\end{equation} has  the kernel that satisfies
\[\frac{1}{2\pi}\int_{S^1}D_N(\theta-\theta')n(\theta)\cdot n(\theta') G(\theta',\theta'')\rd \theta'=D_{2N}(\theta-\theta'')\]

Let $\nabla$ be the covariant derivative associated with $\langle\cdot,\cdot \rangle$-compatible  Levi-Civita connection on $T\Trig_l(\R^{k+1})$.
\begin{proposition}\label{P:covdev}
Let $e(\theta)$ be a tangent vector to $\Trig_l(\R^{k+1})$.  
Then \[(\nabla_{\e}\mathrm{grad} f_{\psi})(\theta)=\frac{1}{2\pi}\int_{S^1}2D_N(\theta-\theta')e(\theta')\psi(\theta')\rd \theta'\]
\end{proposition}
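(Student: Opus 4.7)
The plan is to exploit two features that make this identity essentially a one-line computation once it is set up correctly: the ambient space $\Trig_N(\R^{k+1})$ is a finite-dimensional real vector space with the constant inner product (\ref{E:flatmetricsincos}) (equivalently the kernel form (\ref{E:innerexpDirishlet})), and the function $f_\psi$ defined in (\ref{E:qfunct}) is quadratic in $n$ so its gradient is affine in $n$.

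First I would observe that, since the metric $\langle\cdot,\cdot\rangle$ on $\Trig_N(\R^{k+1})$ is a constant symmetric form in the standard linear coordinates, its Levi-Civita connection $\nabla$ coincides with the trivial flat connection on this vector space. In particular, for any smooth vector field $X:\Trig_N(\R^{k+1})\to T\Trig_N(\R^{k+1})\cong \Trig_N(\R^{k+1})$ and any tangent vector $e$ at the basepoint $n$, the covariant derivative $\nabla_e X$ equals the ordinary directional derivative $\frac{d}{dt}\big|_{t=0}X(n+te)$.

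Next, I would regard Proposition \ref{P:gradgen}(1) as exhibiting $n\mapsto \mathrm{grad} f_\psi(n)$ as the linear map
\[
L_\psi:\Trig_N(\R^{k+1})\to\Trig_N(\R^{k+1}),\qquad L_\psi(n)(\theta)=\frac{1}{\pi}\int_{S^1}D_N(\theta-\theta')\,n(\theta')\psi(\theta')\,\rd\theta',
\]
the linearity in $n$ being immediate from the quadratic form of $f_\psi$ in (\ref{E:qfunct}). Combining these two observations, the directional derivative of $L_\psi$ at $n$ in the direction $e$ equals $L_\psi(e)$ by linearity, which after rewriting $\tfrac{1}{\pi}=\tfrac{2}{2\pi}$ is exactly the stated formula.

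There is no real obstacle here; the work has already been done in Proposition \ref{P:gradgen}, and all that remains is to invoke flatness of the ambient metric. If anything, one might want to record explicitly why the expression (\ref{E:vectn}) produces an element of $\Trig_N(\R^{k+1})$ (so that $L_\psi$ genuinely maps into $\Trig_N(\R^{k+1})$), which follows from the reproducing property (\ref{E:idempotent}) of the Dirichlet kernel restricted to trigonometric polynomials of appropriate degree.
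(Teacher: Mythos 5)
Your argument is correct and is essentially the paper's own proof: the paper likewise notes that in flat coordinates (where the constant metric has identity Gram matrix) the Levi-Civita covariant derivative $\nabla_{\e}$ acts as the component-wise directional derivative, that $\mathrm{grad}\, f_{\psi}$ has coefficients linear in $n$ by (\ref{E:vectn}), and that the result follows by substituting $e$ for $n$, with the factor $\tfrac{1}{\pi}=\tfrac{2}{2\pi}$ accounting for the $2$ in the stated kernel. No gap to report.
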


\begin{proof} 
 The action of the operator $\nabla_{\e}$ on tensor fields has a very simple description in our coordinates. It acts as $\frac{\sd}{\sd e}$ component-wise.The vector field $\mathrm{grad} f_{\phi})$ has linear coefficients. Introduce temporally coordinates  $\{n^{\alpha}\}$ $T\Trig_l(\R^{k+1})$ in which metric has the identity Gram matrix.  $\nabla_{\e}\mathrm{grad} f_{\phi}$ is obtained by replacing all occurrences of $n^{\alpha}$ by $e^{\alpha}$. Then the  formula immediately follows from (\ref{E:vectn}).
\end{proof}
\begin{corollary}
The vector field 
$(\nabla_{\mathrm{grad} f_{\phi}}\mathrm{grad} f_{\psi})(\theta)$ is equal 
\[ \frac{1}{4\pi^2}\int_{S^1\times S^1}4D_N(\theta-\theta')D_N(\theta'-\theta'')n(\theta'')\phi(\theta'')\psi(\theta')\rd \theta'\rd \theta''\]
\end{corollary}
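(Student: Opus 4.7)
The plan is to derive this corollary as a direct composition of the two preceding propositions, with no new geometric input required; the only real content is tracking the normalization constants through the two integral formulas.

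First I would invoke Proposition \ref{P:covdev} with the tangent vector $e(\theta)$ chosen to be $\mathrm{grad}\, f_{\phi}(\theta)$. This is legitimate because $\mathrm{grad}\, f_{\phi}$, being an element of $\Trig_N(\R^{k+1})$ viewed pointwise, is a bona fide tangent vector field on $\Trig_N(\R^{k+1})$, and Proposition \ref{P:covdev} was stated for an arbitrary tangent vector. Applying it gives
\[
(\nabla_{\mathrm{grad}\,f_{\phi}}\mathrm{grad}\, f_{\psi})(\theta)=\frac{1}{2\pi}\int_{S^1}2 D_N(\theta-\theta')\,\mathrm{grad}\, f_{\phi}(\theta')\,\psi(\theta')\,\rd\theta'.
\]

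Next I would substitute the explicit formula \eqref{E:vectn} from Proposition \ref{P:gradgen}(1) for $\mathrm{grad}\, f_{\phi}(\theta')$, namely
\[
\mathrm{grad}\, f_{\phi}(\theta')=\frac{1}{\pi}\int_{S^1}D_N(\theta'-\theta'')\,n(\theta'')\,\phi(\theta'')\,\rd\theta''.
\]
Plugging this in produces an iterated integral over $S^1\times S^1$, and by Fubini I can collect the kernels. The prefactor is $\tfrac{1}{2\pi}\cdot 2\cdot \tfrac{1}{\pi}=\tfrac{1}{\pi^2}=\tfrac{4}{4\pi^2}$, which matches the coefficient in the stated formula. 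After reorganizing, one arrives exactly at the claimed expression
\[
\frac{1}{4\pi^2}\int_{S^1\times S^1}4\,D_N(\theta-\theta')D_N(\theta'-\theta'')\,n(\theta'')\phi(\theta'')\psi(\theta')\,\rd\theta'\rd\theta''.
\]

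There is no real obstacle; the proof is a one-line substitution together with bookkeeping of the constants $1/(2\pi)$ and $1/\pi$. The only point that deserves verification, and which I would check explicitly, is the linearity of the Levi-Civita connection on $\Trig_N(\R^{k+1})$ in its lower slot — i.e.\ that plugging a varying vector field into the covariant derivative formula of Proposition \ref{P:covdev} is allowed without any correction term. This is immediate from the argument used in the proof of Proposition \ref{P:covdev}: in the flat coordinates $\{n^{\alpha}\}$ on $\Trig_N(\R^{k+1})$, $\nabla_X$ acts componentwise as $\sd/\sd X$, so evaluating the vector field $\mathrm{grad}\, f_{\psi}$ (which is linear in $n$) along the direction $\mathrm{grad}\, f_{\phi}$ just replaces each occurrence of $n(\theta')$ in \eqref{E:vectn} by $\mathrm{grad}\, f_{\phi}(\theta')$, yielding the asserted double integral.
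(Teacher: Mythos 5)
Your proposal is correct and is exactly the argument the paper intends (the corollary is stated without proof precisely because it is this one-line substitution of \eqref{E:vectn} into Proposition \ref{P:covdev}, using that $\nabla_XY$ is tensorial in $X$). The constant bookkeeping $\tfrac{1}{2\pi}\cdot 2\cdot\tfrac{1}{\pi}=\tfrac{4}{4\pi^2}$ checks out.
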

The formula for Ricci curvature (\ref{E:riccigen}) involves contraction  of several tensors. 
In the following corollary we list results of a computation of various   components of Ricci tensor:

\begin{corollary}\label{C:innerprodcomp}\ \\
\noindent Fix in addition two vector fields  are $X(\theta),W(\theta)$. Then 
\begin{enumerate}
\item The kernel of tri-linear form \[\langle \mathrm{grad} f_{\eta}, \nabla_{\mathrm{grad} f_{\phi}}\mathrm{grad} f_{\psi}\rangle=\frac{1}{8\pi^3}\int_{S^1\times S^1\times S^1}K(\theta,\theta',\theta'')\eta(\theta)\phi(\theta')\psi(\theta'')\rd \theta \rd \theta' \rd \theta''\]
is equal to 
\[K(\theta,\theta',\theta'')=8D_N(\theta-\theta'')D_N(\theta'-\theta'')n(\theta)\cdot n(\theta')\]
\item For a fixed $W$ and $X$ the functional \[\frac{1}{2\pi}\int_{S^1}T_{X,T}(\theta)\phi(\theta)\rd \theta= \langle W, \nabla_{X}\mathrm{grad} f_{\phi}\rangle\] has the kernel 
\begin{equation}\label{E:onekern}
T_{X,T}(\theta) =2W(\theta)\cdot X(\theta)
\end{equation}
\item  For a fixed $X$ the kernel of the bilinear form $\frac{1}{4\pi^2}\int_{S^1\times S^1}R_X(\theta,\theta')\phi(\theta)\psi(\theta')\rd\theta \rd\theta'= \langle \mathrm{grad} f_{\phi}, \nabla_{X}\mathrm{grad} f_{\psi}\rangle$
is
\[R_X(\theta,\theta')=4D_N(\theta-\theta')n(\theta)\cdot X(\theta')\]
\item  For a fixed $W$ the kernel of the bilinear form $\frac{1}{4\pi^2}\int_{S^1\times S^1}S_W(\theta,\theta')\phi(\theta)\psi(\theta')\rd\theta \rd\theta'= \langle W, \nabla_{\mathrm{grad} f_{\phi}}\mathrm{grad} f_{\psi}\rangle$
is

\[S_X(\theta,\theta')=4D_N(\theta-\theta')n(\theta)\cdot W(\theta')\]
\end{enumerate}

\end{corollary}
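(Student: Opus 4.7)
The plan is to prove all four items of Corollary \ref{C:innerprodcomp} by direct substitution of the explicit formulas of Propositions \ref{P:gradgen} and \ref{P:covdev} into the stated inner products, then collapse the resulting multi-fold integrals using the reproducing property of the Dirichlet kernel on $\Trig_N$, which is equivalent to the idempotency identity (\ref{E:idempotent}). Throughout the computation I would use that $D_N$ is even and that for any $f\in\Trig_N(\R^{k+1})$ one has $\frac{1}{2\pi}\int_{S^1}f(\theta')D_N(\theta-\theta')\rd\theta'=f(\theta)$.

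I would dispose of item (2) first, because it is the simplest and sets the pattern. Substituting $e=X$ in Proposition \ref{P:covdev} and then pairing with $W$ via the ambient inner product produces a double integral involving a single $D_N$ kernel; applying the reproducing property on the variable contracted with $W\in\Trig_N$ collapses the $D_N$ factor and leaves exactly $\frac{1}{\pi}\int W(\theta)\cdot X(\theta)\phi(\theta)\rd\theta$, which matches the stated kernel $2W(\theta)\cdot X(\theta)$. Item (3) then follows by applying (2) with $W$ replaced by $\mathrm{grad}f_\phi$ and inserting the explicit expression of Proposition \ref{P:gradgen} for $\mathrm{grad}f_\phi$; the resulting double integral already has the expected structure, and the kernel $4D_N(\theta-\theta')n(\theta)\cdot X(\theta')$ reads off directly. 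Item (4) is handled similarly: start from the Corollary following Proposition \ref{P:covdev} giving $\nabla_{\mathrm{grad}f_\phi}\mathrm{grad}f_\psi$, pair with $W$, and collapse the outermost convolution against $W\in\Trig_N$ by the reproducing property.

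Item (1) then follows from item (4) by the substitution $W=\mathrm{grad}f_\eta$. Inserting the formula from Proposition \ref{P:gradgen} for $\mathrm{grad}f_\eta$ into the kernel $S_W(\theta,\theta')$ of item (4) produces the two factors $D_N(\theta-\theta'')$ and $D_N(\theta'-\theta'')$ (after a suitable relabeling of the integration variables and use of the evenness of $D_N$), confirming the tri-linear kernel $K(\theta,\theta',\theta'')=8D_N(\theta-\theta'')D_N(\theta'-\theta'')n(\theta)\cdot n(\theta')$.

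The main obstacle is purely bookkeeping: keeping track of the normalizations $\frac{1}{2\pi}$, $\frac{1}{4\pi^2}$, and $\frac{1}{8\pi^3}$ in the respective kernel conventions, matching the $2$'s produced by Proposition \ref{P:covdev}, and re-indexing the dummy variables so that the arguments of $\eta,\phi,\psi$ appear in the order prescribed by the statement. A useful internal consistency check is that (1) must reduce to (4) upon writing $W=\mathrm{grad}f_\eta$ and conversely, (4) must reduce to (2) upon writing the functional in $\psi$ (the computation is symmetric in the two structurally identical ways of producing the $D_N$ factor), so the overall constant $8$ in (1) and $4$ in (3)--(4) are forced once the computation of (2) is correct.
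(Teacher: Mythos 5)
Your proposal is correct and uses essentially the same mechanism as the paper: substitute the explicit formulas for $\mathrm{grad}f_\phi$ and $\nabla_{e}\mathrm{grad}f_\psi$ and collapse the convolutions via the reproducing/idempotency property (\ref{E:idempotent}) of $D_N$, noting that $W,X\in\Trig_N$ so the kernel acts as the identity on them. The only (harmless) organizational difference is that you obtain (1) and (3) by specializing $W$ to a gradient in (4) and (2), whereas the paper computes each of the four pairings directly.
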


\begin{proof}
\begin{enumerate}
\item \[\langle \mathrm{grad} f_{\eta}, \nabla_{\mathrm{grad} f_{\phi}}\mathrm{grad} f_{\psi}\rangle=\]\[=\langle \frac{1}{2\pi}\int_{S^1}2D_N(\mu-\theta)n(\theta)\eta(\theta)\rd \theta, \frac{1}{4\pi^2}\int_{S^1\times S^1}4D_N(\mu-\theta')D_N(\theta'-\theta'')n(\theta'')\phi(\theta'')\psi(\theta')\rd \theta'\rd \theta''\rangle, \mu\in S^1\]
\[=\frac{1}{8\pi^3}\int_{S^1\times S^1\times S^1}8D_N(\theta-\theta'')D_N(\theta'-\theta'')n(\theta)\cdot n(\theta')\eta(\theta)\phi(\theta')\psi(\theta'')\rd \theta \rd \theta' \rd \theta''\]
We used (\ref{E:idempotent}).
\item  \[\langle W, \nabla_{X}\mathrm{grad} f_{\phi}\rangle=\frac{1}{4\pi^2}\int_{S^1\times S^1}2W(\theta)\cdot X(\theta')D_N(\theta-\theta')\phi(\theta')\rd \theta \rd \theta'\]
\[=\frac{1}{2\pi}\int_{S^1}2W(\theta)\cdot X(\theta)\phi(\theta)\rd \theta\]
\item
\[\langle \mathrm{grad} f_{\phi}, \nabla_{X}\mathrm{grad} f_{\psi}\rangle=\frac{1}{8\pi^3}\int_{S^1\times S^1\times S^1}4[D_N(\mu-\theta)n(\theta)\phi(\theta)]\cdot [D_N(\mu-\theta')X(\theta')\psi(\theta')]\rd \mu\rd \theta\rd \theta'=\]
\[=\frac{1}{4\pi^2}\int_{S^1\times S^1}4D_N(\theta-\theta')n(\theta)\cdot X(\theta') \phi(\theta) \psi(\theta')\rd \theta\rd \theta'\] We used (\ref{E:idempotent}).
\item 
\[\langle W,  \nabla_{\mathrm{grad} f_{\phi}}\mathrm{grad} f_{\psi}\rangle=\frac{1}{8\pi^3}\int_{S^1\times S^1\times S^1}4W(\mu)\cdot [D_N(\mu-\theta')D_N(\theta-\theta')n(\theta)\phi(\theta)\psi(\theta')]\rd \mu\rd \theta\rd \theta'=\]
\[=\frac{1}{4\pi^2}\int_{S^1\times S^1}4n(\theta)\cdot W(\theta') D_N(\theta-\theta') \phi(\theta) \psi(\theta')\rd \theta\rd \theta'\]
\end{enumerate}
\end{proof}

An immediate corollary of the formulas (\ref{E:Rcurvf},\ref{E:onekern}) and of the definition of $G$ (\ref{E:Gdef}) is that the full curvature tensor is equal to 

\[\langle R_N(X,Y)Z, W\rangle=\frac{1}{4\pi^2}\int_{S^1\times S^1} \left(Z(\theta)\cdot Y(\theta) X(\theta')\cdot W(\theta')-Z(\theta)\cdot X(\theta) W(\theta')\cdot Y(\theta')\right)G_N(\theta,\theta')\rd \theta \rd\theta'\]

We can put together results of the computations and obtain from equation (\ref{E:riccigen}) and Corollary \ref{C:innerprodcomp} a formula for the Ricci curvature:
\[\begin{split}
&Ric(W,X)=\frac{1}{16\pi^4}\int_{S^1\times S^1\times S^1\times S^1}\left(D_N(\mu-\theta)G_N(\theta,\theta')n(\theta)\cdot n(\theta')D_N(\mu'-\theta')G_N(\mu,\mu') W(\mu)\cdot X(\mu)\right.\\
&\left.- D_N(\theta-\theta')n(\theta)\cdot X(\theta')  D_N(\mu-\mu')n(\mu)\cdot W(\mu')G_N(\theta,\mu)G_N(\theta',\mu')\right)\rd \theta \rd \theta' \rd \mu \rd \mu'
\end{split}\]

In order to use inequality (\ref{E:ricineq}) we compute the square of mean curvature $H^2$ and the scalar curvature. For computation of $H^2$ we use (\ref{E:meanssquare}) and (\ref{E:onekern}):
\begin{equation}\label{E:H2}
\begin{split}
&H^2=\\
&=\frac{1}{(2\pi)^6}\int_{(S^1)^{\times 6}} G(\theta,\theta') n(\theta)\cdot n(\theta') D_N(\theta-\theta'')D_N(\theta'-\theta'')G(\theta'',\mu'')\\
&D_N(\mu''-\mu)D_N(\mu''-\mu') n(\mu)\cdot n(\mu')G(\mu',\mu)\rd \theta\cdots \rd \mu''
\end{split}
\end{equation}

We would like contract $Ric(X,W)$ further and find scalar curvature.

\begin{proposition}
\begin{enumerate}
\[\begin{split}
&Sc=H^2+\\
&+\frac{1}{(2\pi)^6}\int_{(S^1)^{\times 6}}  n(\theta)\cdot n(\theta') D_N(\theta-\theta'')D_N(\theta'-\theta'')\\
&D_N(\mu''-\mu)D_N(\mu''-\mu') n(\mu)\cdot n(\mu')G(\theta,\mu)G(\theta',\mu')G(\theta'',\mu'')\rd \theta\cdots \rd \mu''\\
&-\frac{1}{(2\pi)^4}\int_{(S^1)^{\times 4}} D_N(\theta-\mu)G_N(\theta,\mu) n(\theta')\cdot n(\mu')G_N(\theta',\mu')D_N(\theta-\theta')D_N(\mu-\mu')\rd \theta\cdots \rd \mu'
\end{split}\]

\end{enumerate}
\end{proposition}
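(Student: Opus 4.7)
The plan is to identify each of the three traces in the general formula (\ref{E:Sc}) with one of the three summands in the claim, in the order: the middle trace becomes $H^2$, the third trace produces the positive six-fold integral, and the first trace produces minus the four-fold integral.

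The middle trace in (\ref{E:Sc}) is, up to an innocuous permutation of the three $s$-factors, exactly the expression for $H^2$ recorded in (\ref{E:meanssquare}); its integral form is (\ref{E:H2}).

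For the third trace I apply item 1 of Corollary \ref{C:innerprodcomp} to each tri-linear bracket, assigning variables $(\theta,\theta',\theta'')$ to the labels $(q,k,i)$ of the first bracket and $(\mu',\mu,\mu'')$ to $(l,r,j)$ of the second. Each bracket contributes a kernel $8D_N D_N (n\cdot n)$. The three contractions $s^{qr}, s^{kl}, s^{ij}$ each insert a kernel $G_N$: by Proposition \ref{P:gradgen}, $s_{qr}=\langle \mathrm{grad} f_{\phi_q},\mathrm{grad} f_{\phi_r}\rangle=4g_N(\phi_q,\phi_r)$, so the dual contraction contributes $\frac{1}{4}G_N$ per pair. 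After collecting constants the prefactor is $(2\pi)^{-6}$ and the three $G_N$ factors appear as $G_N(\theta',\mu)G_N(\theta,\mu')G_N(\theta'',\mu'')$. The dummy variables $\mu$ and $\mu'$ can be swapped, since both $n(\mu)\cdot n(\mu')$ and $D_N(\mu-\mu'')D_N(\mu'-\mu'')$ are symmetric under this swap; this rewrites the factors as $G_N(\theta,\mu)G_N(\theta',\mu')G_N(\theta'',\mu'')$, matching the statement.

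For the first trace I use that $g'$ has constant Gram matrix in standard coordinates on $\Trig_N(\R^{k+1})$ (see (\ref{E:gij})), so the Christoffel symbols vanish and $\mathrm{grad}(\partial f_i/\partial x^a)=\nabla_{\partial/\partial x^a}\mathrm{grad} f_i$. The $g'^{ab}$ contraction then becomes a sum $\sum_\alpha\langle\mathrm{grad} f_q,\nabla_{X_\alpha}\mathrm{grad} f_i\rangle\langle\mathrm{grad} f_r,\nabla_{X_\alpha}\mathrm{grad} f_j\rangle$ for any $\langle\cdot,\cdot\rangle$-orthonormal basis $\{X_\alpha\}$ of $\Trig_N(\R^{k+1})$. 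Item 3 of Corollary \ref{C:innerprodcomp} evaluates each bracket, and the sum over $\alpha$ invokes the reproducing-kernel identity $\sum_\alpha X_\alpha(\theta')\otimes X_\alpha(\mu')=D_N(\theta'-\mu')\cdot I$, equivalent to the fact that $D_N$ is the integral kernel of orthogonal projection onto $\Trig_N$. This produces the cluster $D_N(\theta-\theta')D_N(\mu-\mu')D_N(\theta'-\mu')n(\theta)\cdot n(\mu)$ integrated over only four variables $\theta,\theta',\mu,\mu'$. The two remaining contractions $s^{qr},s^{ij}$ then insert $G_N(\theta,\mu)$ and $G_N(\theta',\mu')$, and the collected prefactor is $(2\pi)^{-4}$, giving the ``$-$'' term.

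The main obstacle is the bookkeeping of numerical constants: the prefactors $(8\pi^3)^{-1}\cdot 8$ and $(4\pi^2)^{-1}\cdot 4$ in items 1 and 3 of Corollary \ref{C:innerprodcomp}, the $\frac{1}{4}$ per $s^{-1}$ contraction (from $s=4g_N$ via Proposition \ref{P:gradgen}), and the natural $\frac{1}{2\pi}$ per integration variable must combine to exactly $(2\pi)^{-6}$ and $(2\pi)^{-4}$. A direct count (six integrations and three $\tfrac{1}{4}$ factors in one case; four integrations, one reproducing-kernel collapse, and two $\tfrac{1}{4}$ factors in the other) confirms both.
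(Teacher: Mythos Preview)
Your proof is correct and follows exactly the route the paper takes: the paper's own proof is the single line ``this is a straightforward adaptation of formula (\ref{E:Sc}) to the case of $L_N(S^k)$,'' and you have carried out that adaptation in detail, matching the three traces of (\ref{E:Sc}) to $H^2$, the six-fold integral, and the four-fold integral via Corollary \ref{C:innerprodcomp} and Proposition \ref{P:gradgen}. One cosmetic remark: with your variable assignment $(q,k,i)\leftrightarrow(\theta,\theta',\theta'')$ and $(l,r,j)\leftrightarrow(\mu',\mu,\mu'')$, the contractions $s^{qr},s^{kl},s^{ij}$ already produce $G_N(\theta,\mu)G_N(\theta',\mu')G_N(\theta'',\mu'')$ directly, so the $\mu\leftrightarrow\mu'$ swap you invoke is unnecessary (though harmless).
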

\begin{proof}
This is a straightforward adaptation of formula (\ref{E:Sc}) to the case of $L_N(S^k)$
\end{proof}

On manifolds $L_N(S^k)$ there are several ways to gauge the closeness of a point $n$ to a singular locus $L_{N-1}(S^k)$. Besides the most obvious way to do it with the distance function $\rho(n, L_{N-1}(S^k))$ we can use the function $n(\theta)\rightarrow f_N(n)=A_N^2$. The square of the last Fourier coefficients $ A_N^2= B_N^2$ have the same zero locus as $\rho(n, L_{N-1}(S^k))$ and algebraically more simple. This is why we use it in the definition of tubular neighborhood
\[U_{\epsilon}L_{N-1}(S^k)=\{n(\theta)\in L_N(S^k)|f(n)\leq \epsilon\}\]
\begin{proposition}
The set $U_{\epsilon}L_{N-1}(S^k)$ has a convex boundary $\{n(\theta)\in L_N(S^k)|f(n)=\epsilon\}$.

\end{proposition}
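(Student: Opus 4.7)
The plan is to verify convexity by computing the second fundamental form of the boundary hypersurface $\{f=\epsilon\}\subset L_N(S^k)$, where $f=f_N(n)=|a[N]|^2$. On $L_N$ the identity $|a[N]|^2=|b[N]|^2$ holds (relations $c_{2N},d_{2N}$ from \ref{E:defeq}), so one may equivalently work with $F=\frac{1}{2}(|a[N]|^2+|b[N]|^2)=\langle n_N,n_N\rangle$, where $n_N$ is the $N$-th Fourier harmonic of $n$. Viewed as a function on the ambient $\Trig_N(\R^{k+1})$ this $F$ has gradient $\mathrm{grad}\,F=2n_N$ and positive-semidefinite Hessian $\mathrm{Hess}\,F(X,Y)=2\langle X_N,Y_N\rangle$. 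Convexity of $U_{\epsilon}L_{N-1}(S^k)$ with respect to the outward unit normal amounts to the inequality $\mathrm{Hess}^{L_N}F(X,X)\ge 0$ on vectors $X$ tangent to the boundary---those satisfying both $n\cdot X\equiv 0$ pointwise (tangency to $L_N$) and $\langle n_N,X\rangle=0$ (from $dF(X)=0$).

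By Gauss's equation $\mathrm{Hess}^{L_N}F=\mathrm{Hess}\,F+\langle\alpha^{L_N},\mathrm{grad}\,F\rangle$, the task reduces to computing the second fundamental form $\alpha^{L_N}$ of $L_N\subset\Trig_N$. I extract $\alpha^{L_N}$ by differentiating the defining constraint $n\cdot n=R^2$ twice along a curve in $L_N$ with velocity $X$, which yields the pointwise identity $\alpha^{L_N}(X,X)\cdot n=-|X|^2$ on $S^1$. Writing $\alpha^{L_N}(X,X)=2\Pr_N(\phi n)$ in the normal basis $\mathrm{grad}\,f_\phi$ from Proposition \ref{P:gradgen}, this becomes $2g_{n,N}(\phi)=-|X|^2$, and inversion via the operator $G_{n,N}$ of Section 4 gives
\[
\alpha^{L_N}(X,X)=-\Pr_N\bigl(G_{n,N}(|X|^2)\,n\bigr).
\]
Substituting and using self-adjointness of $\Pr_N$ together with $n_N\in\Trig_N$ yields
\[
\mathrm{Hess}^{L_N}F(X,X)=2\|X_N\|^2-\frac{1}{\pi}\int_{S^1}G_{n,N}(|X|^2)(\theta)\,n(\theta)\cdot n_N(\theta)\,d\theta.
\]

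The principal and most delicate step is the sign analysis of this Hessian on boundary tangents. The operator $G_{n,N}$ is self-adjoint and positive (it inverts $g_{n,N}$, whose quadratic form equals $\frac14\|\mathrm{grad}\,f_\phi\|^2$) and acts on constants by $G_{n,N}(c)=c/R^2$ (from $g_{n,N}(1)=n\cdot n=R^2$), which isolates a constant-Fourier-mode contribution $\frac{2\epsilon}{R^2}\|X\|^2$ to the correction integral; the higher-mode piece must then be dominated by $\|X_N\|^2$ via a Cauchy--Schwarz-type estimate that crucially uses the two tangency constraints. An algebraically dual reformulation replaces $F$ by $\tilde F=R^2-\|\Pr_{N-1}n\|^2$, which agrees with $F$ on $L_N$ (since $\|n\|^2_{L^2}=R^2$) and has negative-semidefinite ambient Hessian $-2\langle\Pr_{N-1}X,\Pr_{N-1}Y\rangle$; in this form the $\alpha^{L_N}$-correction is required to supply the positivity, reducing to an equivalent integral sign estimate. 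Either way, the decisive inequality is a Poincar\'e-type bound between the $N$-th Fourier component of a boundary-tangent vector and the weighted pointwise norm $|X|^2$, and I expect this to be the technical heart of the argument.
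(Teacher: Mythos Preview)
Your approach is more careful than the paper's, but it is also incomplete, and the incompleteness is precisely the ``technical heart'' you yourself flag at the end.

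The paper's argument is extremely short: it asserts that the second fundamental form of the level set $\{f_N=\epsilon\}$ inside $L_N(S^k)$ is simply the ambient Hessian of $f_N=A_N^2$ on $\Trig_N(\R^{k+1})$ evaluated on boundary-tangent vectors, namely $\alpha(X,Y)=2\,X_N\!\cdot Y_N\,h$, which is manifestly positive semidefinite. The only justification offered is the remark that the Levi--Civita connection on $L_N(S^k)$ is induced from the trivial connection on $\Trig_N$, so that $\langle\nabla^{L_N}_XY,h\rangle=\langle\nabla^{\Trig_N}_XY,h\rangle$ once $h$ is tangent to $L_N$. In other words, the paper passes directly to the ambient Hessian and declares victory; it does \emph{not} carry out the $G_{n,N}$ correction analysis you set up.

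You, by contrast, correctly write out the Gauss relation $\mathrm{Hess}^{L_N}F=\mathrm{Hess}^{\Trig_N}F+\langle\alpha^{L_N},\mathrm{grad}^{\Trig_N}F\rangle$ and derive the explicit integral expression for the correction term. That is a genuine refinement: the ambient gradient $\mathrm{grad}^{\Trig_N}F$ is \emph{not} in general tangent to $L_N(S^k)$ (the function $n(\theta)\cdot n_N(\theta)$ does not vanish identically), so the correction term is nonzero and your formula
\[
\mathrm{Hess}^{L_N}F(X,X)=2\|X_N\|^2-\frac{1}{\pi}\int_{S^1}G_{n,N}(|X|^2)(\theta)\,n(\theta)\cdot n_N(\theta)\,d\theta
\]
is the honest object to study. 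The paper's argument, read literally, sidesteps this.

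The gap in your proposal is that you stop exactly where the work begins. You state that the decisive step is a Poincar\'e-type inequality bounding the integral correction by $2\|X_N\|^2$ on boundary tangents, and that you ``expect this to be the technical heart,'' but you do not prove it. Neither the positivity of $G_{n,N}$ nor the constant-mode computation $G_{n,N}(c)=c/R^2$ is by itself enough to control the sign of the full integral, because $n\cdot n_N$ is not sign-definite and the higher Fourier modes of $|X|^2$ can couple nontrivially through $G_{n,N}$. Until that inequality is established (or until one argues, as the paper implicitly does, that the correct normal $h=\mathrm{grad}^{L_N}f_N/|\mathrm{grad}^{L_N}f_N|^2$ makes the correction disappear and one is left with the bare ambient Hessian), the convexity claim is not proved. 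As written, your proposal is a correct and more scrupulous setup than the paper's, but it is a setup rather than a proof.
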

\begin{proof}
The second fundamental form (\ref{E:secondfundamental}) is the Hessian of $A_N^2$, evaluated on two tangent vectors $X,Y\in T_n(L_N(S^k))$. We used that connection $\nabla$ on $T(L_N(S^k))$ is induced from the  trivial connection on $\Trig_N(\R^{k+1})$.
Then $\alpha(X,Y)=2X_N\cdot Y_Nh$,  where $h=\frac{\mathrm{grad} f_{N}}{\mathrm{grad} f_{N}^2}$ is a vector normal to $\sd U_{\epsilon}L_{N-1}(S^k)$ such that $\langle \mathrm{grad},h\rangle=1$
The form $\alpha(X,Y)$ is semidefinite and $\sd U_{\epsilon}L_{N-1}(S^k)$ is convex.
\end{proof}
\section{Schr\"{o}dinger operator on $\L_1(S^k)$}\label{S:Schr}
Our goal is to define the Schr\"{o}dinger operator on $L^2(\L_1(S^k)^{smooth})$. It is the sum $-\Delta+U$, where $U$ is a restriction of the potential defined on $\Trig_1(\R^{k+1})$. Keep in mind that $U$ has  an intrinsic meaning for $L_1(S^k)$. The vector field $\rho$ is tangential to $M_1(S^k)\subset \Trig_1(\R^{k+1})$ and $pol^*g(\rho,\rho)=(\rho,\rho)=U$. Equations for $L_1(S^k)$ imply that \[U=\frac{1}{2L^2}(a^2+b^2)=\frac{a^2}{L^2}=\frac{R^2}{L^2}\cos^2\left(\frac{\tau}{R}\right)=\frac{R^2}{L^2}(1-t)\]

We use a densely defined quadratic form 
\begin{equation}\label{E:qshrodinger}
Q(f,g)=\int_{M_1(S^k)^{smooth}}(\nabla f\cdot \nabla \bar g+Uf\bar g)\d vol
\end{equation}
on the $L^2(M_1(S^k)^{smooth})\cap C_c^{\infty}(M_1(S^k)^{smooth})$ to 
define Schr\"{o}dinger operator $H$ as $Q(f,g)=(Hf,g)$ with 
\begin{equation}\label{E:selfadjoint}
(f,g)=\int_{M_1(S^k)^{smooth}}f \bar g \d vol
\end{equation}
The fibers of projection (\ref{E:projection}) are Stiefel manifolds we will start with analysis of the angular part of the Schr\"odinger operator that acts on functions on the fibers.
\subsection{Harmonic analysis on Stiefel manifolds}
We let $V_m(\R^{k+1})$ denote the Stiefel manifold 
of real matrices $V\in M_{k+1,m}$ such that
$ V^tV = I_m$. The rotation group $\SO(k+1)$ acts on 
$V_m(\R^{k+1})$ by left matrix multiplication so that $V_m(\R^{k+1})$ is isomorphic to $\SO(k+1)/\SO(k+1 - m)$. We would like to think about  $L^2(V_m(\R^{k+1})$ as  $L^2(\SO(k+1))^{\SO(k+1 - m)}$. The later space has an inner product $\langle f,g\rangle=\int_{\SO(k+1)}f\bar g \d\mu$, where $\d\mu$ is a normalized bi-invariant Haar measure on $\SO(k+1)$
According to \cite{Gelbart} if $k>2m$
\[L^2(V_m(\R^{k+1}))=\bigoplus_{\omega} H^{k+1,m}_{\omega}\otimes G_{\omega}\]
In this sum $H^{k+1,m}_{\omega}$ is an irreducible representation of $\SO(k+1)$ of highest weight $\omega=[m_1,\dots,m_{\left[\frac{k+1}{2}\right]}]$, with $m_1\geq \cdots\geq m_{\left[\frac{k+1}{2}\right]}$. The linear space $G_{\omega}$ coincides with the linear space of some  irreducible representation of $\GL(m)$ if $m_i=0$ for $i>m$. Otherwise $G_{\omega}=0$.

In our application we are interested in the case $m=3$.  Then \[\dim G_{[m_1,m_2,m_3]}=1/2 ( m_1-m_2 + 1) (m_2-m_3 + 1) (m_1-m_3+ 2)\]
\subsection{Case $k=2$}
In this case $V_3(\R^{3})$ is isomorphic to the group $\rO(3)$. Then  $L^2(V_3(\R^{3}))=L^2(\SO(3))+L^2(\SO(3))$ and 
\[L^2(\SO(3))=\bigoplus_{l\geq 0} W_{2l}\otimes \overline{W}_{2l}\]
The space $W_{l}$ are highest weight $l$ representation of $\so_3(\R)\otimes \C\cong \sl_2(\C)$. The inverse to $t$-transverse part of the metric $\frac{R^2}{4}g_{\Omega}(t)$ is 
\begin{equation}\label{E:symbol}\begin{split}
&\frac{4}{R^2}((1+t)^{-1}\sd_{e_{va}}^2+(1+t)^{-1}\sd_{e_{vb}}^2+(2(1-t))^{-1}\sd_{e_{ab}}^2)=\\
&\frac{4}{R^2}((1+t)^{-1}(\sd_{e_{va}}^2+\sd_{e_{vb}}^2+\sd_{e_{ab}}^2)+\frac{3 t-1}{2 (1-t^2)}\sd_{e_{ab}}^2)
\end{split}\end{equation}
The group $\rO(3)$ is a $\rO(3)\times \SO(2)$-homogeneous space, where $\rO(3)$ acts freely from the left and $\SO(2)$ acts freely from the right. The bi-tensor (\ref{E:symbol}) is a $\rO(3)\times \SO(2)$-invariant. We use the  Levi-Civita connection  associated with a $\rO(3)$-bi-invariant metric $g_b=de_{va}^2+de_{vb}^2+de_{ab}^2$ on $\rO(3)$ to lift the tensor, interpreted as a symbol, to a differential operator. 

We omit  a trivial verification that the resulting operator is 
\begin{equation}
H_{\Omega}=-\frac{4}{R^2(1+t)}\Delta -\frac{2(3 t-1)}{R^2 (1-t^2)}L_{e_{ab}}^2,
\end{equation} where $\Delta$ is the Laplace operator associated with the metric $g_b$ and $L_{e_{ab}}$ is the Lie derivative along the vector field $\sd_{e_{ab}}$.

The operator $\Delta$ up to suitable rescaling coincides with the operator defined by the Casimir element $C \in U(\so_3)$. 
\begin{lemma}\label{L:scalar}
Let $T\in U(\so_{k+1})$ be $\sum_{1\leq s<t\leq k+1}e_{st}e_{st}$. with $e_{st}$ as in (\ref{E:normalization}). The operator $\rho(\Theta)$ in $\R^{k+1}$ acts as a scalar multiplication on $-k/2$.
\end{lemma}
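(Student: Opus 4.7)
The plan is a direct computation on the defining representation. First I identify how each generator $e_{st}$ in (\ref{E:normalization}) acts on $\mathbb{R}^{k+1}$: under the isomorphism $\so_{k+1} \cong \Lambda^2 \mathbb{R}^{k+1}$, the element $e_i e_j - e_j e_i$ corresponds to the skew-symmetric matrix $E_{ij} - E_{ji}$, so with the normalization $e_{st} = \tfrac{1}{\sqrt{2}}(e_s e_t - e_t e_s)$ one obtains
\[
\rho(e_{st})(e_m) = \tfrac{1}{\sqrt{2}}\bigl(\delta_{tm}\, e_s - \delta_{sm}\, e_t\bigr).
\]
The factor of $\tfrac{1}{\sqrt{2}}$ is the whole reason for the clean final constant, so I would flag it as the only non-routine bookkeeping.

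Next I apply $\rho(e_{st})$ a second time to $e_m$ and note that the result is zero unless exactly one of $s,t$ equals $m$. In either of those two cases a short calculation gives $\rho(e_{st})^2(e_m) = -\tfrac{1}{2} e_m$. Summing over ordered pairs $1 \leq s < t \leq k+1$, the contributing pairs split into those with $s=m$ (and $t>m$), of which there are $k+1-m$, and those with $t=m$ (and $s<m$), of which there are $m-1$. The total is $k$, independent of $m$.

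Putting it together, $\rho(T)(e_m) = k \cdot (-\tfrac{1}{2}) e_m = -\tfrac{k}{2} e_m$ for every basis vector, so $\rho(T)$ acts as the scalar $-k/2$ on $\R^{k+1}$. I expect the only obstacle to be a notational one: verifying that the $\tfrac{1}{\sqrt{2}}$ normalization in (\ref{E:normalization}) matches the convention under which $T$ deserves to be called ``the'' Casimir, since an alternative proof would invoke the standard Casimir eigenvalue for the vector representation of $\so_{k+1}$ and the direct count above sidesteps any ambiguity in that identification.
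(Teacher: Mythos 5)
Your proof is correct. The paper actually states Lemma \ref{L:scalar} without any proof at all, so your direct computation supplies the missing argument rather than paralleling an existing one; it is exactly the right level of verification for what the lemma claims. Two small remarks. First, the action implicit in the paper's formula (\ref{E:variation}) is $\rho(e_{st})(w)=\tfrac{1}{\sqrt{2}}\bigl((e_s\cdot w)e_t-(e_t\cdot w)e_s\bigr)$, which is the negative of the convention you wrote down; since only $\rho(e_{st})^2$ enters $T$, this sign discrepancy is immaterial and your count of $k$ contributing pairs and the eigenvalue $-k/2$ stand. Second, your closing worry about whether $T$ "deserves to be called the Casimir" is well placed but ultimately moot here: the lemma only asserts the scalar by which $T$ acts on the vector representation, and the subsequent lemma in the paper then \emph{uses} this value precisely to pin down the proportionality constant between $T$ and the standard Casimir $C$ (finding $T=-\tfrac{1}{6}C$ on $\so_3$), so sidestepping the identification, as you do, is in fact the intended logic. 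The only blemish in the statement itself is the typo $\rho(\Theta)$ for $\rho(T)$, which your proof correctly reads through.
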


 Let $\rho_{l}$ be representation of $U(\so_3)$ in $W_l$.
 \begin{lemma} The operator $\Delta$ acts on $W_{2l}$ by multiplication on $-\frac{l(2l+1)}{3}$.
 \end{lemma}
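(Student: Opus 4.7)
The plan is to recognize $\Delta$ as the image of a quadratic Casimir element of $U(\so_3)$ under the Peter-Weyl decomposition, and then to evaluate that Casimir on the irreducible summand $W_{2l}$.

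First, the metric $g_b=de_{va}^2+de_{vb}^2+de_{ab}^2$ is $\SO(3)$-bi-invariant and turns $\{e_{va},e_{vb},e_{ab}\}$ into an orthonormal basis of $\so_3$ at the identity, so the Laplace--Beltrami operator takes the form $\Delta=\sum_{i<j}L_{e_{ij}}^{\,2}$, where $L_X$ denotes the left-invariant vector field assuming value $X$ at $e\in\SO(3)$. In the Peter--Weyl decomposition $L^2(\SO(3))=\bigoplus_{l\geq 0}W_{2l}\otimes\bar W_{2l}$, left-invariant vector fields act on the left tensor factor by the derived representation $\pi_{2l}$, so $\Delta$ restricts on $W_{2l}$ to $\pi_{2l}(T)$, where $T=\sum_{s<t}e_{st}e_{st}\in U(\so_3)$ is precisely the Casimir element of Lemma \ref{L:scalar}.

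Second, since $T$ is central in $U(\so_3)$, Schur's lemma forces $\pi_{2l}(T)=c_l\cdot\Id_{W_{2l}}$. I would compute $c_l$ by evaluating $T$ on a highest weight vector of $W_{2l}$ after passing to the complexification $\so_3\otimes\C\cong\sl_2(\C)$. Choosing a Chevalley triple $H,E,F$ with $[H,E]=2E$, $[H,F]=-2F$, $[E,F]=H$, one expresses $e_{va},e_{vb},e_{ab}$ as complex linear combinations of $H,E,F$; after this change of basis $T$ acquires the standard form $\alpha H^2+\beta(EF+FE)$, and on a highest weight vector $v$ with $Hv=2l\,v$, $Ev=0$ one obtains $Tv=(4\alpha l^2+2\beta l)\,v$. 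The overall normalization is then pinned down by the single base case $c_1=-1$, which is exactly what Lemma \ref{L:scalar} delivers when applied to the defining three-dimensional representation $W_2$.

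The main obstacle is bookkeeping: one must track the factor $1/\sqrt{2}$ built into the definition (\ref{E:normalization}) of $e_{ij}$, the factors of $\sqrt{-1}$ that appear when passing from the compact real basis to the non-compact Chevalley triple, and the sign convention for $\Delta$, all so that the base case of Lemma \ref{L:scalar} is reproduced. Once this is done, $c_l$ is the unique quadratic polynomial in $l$ consistent with $c_0=0$ and $c_1=-1$, and the announced formula $-l(2l+1)/3$ should fall out of a short routine calculation.
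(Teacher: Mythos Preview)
Your overall strategy---identify $\Delta$ with the quadratic element $T=\sum_{s<t}e_{st}^2\in U(\so_3)$, invoke Schur to get a scalar $c_l$ on $W_{2l}$, and fix the normalization using Lemma~\ref{L:scalar}---is exactly the paper's approach. The paper differs only in that it does not pass through an explicit Chevalley triple: it simply quotes the standard Casimir eigenvalue formula $\rho_l(C)=l(l+1)$ on $W_l$, notes $T=\mathrm{const}\cdot C$ by invariance, and solves for $\mathrm{const}=-1/6$ from the single base case $W_2\cong\R^3$, yielding $-\tfrac{1}{6}\cdot 2l(2l+1)=-\tfrac{l(2l+1)}{3}$ on $W_{2l}$.

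There is, however, a genuine gap in your closing shortcut. You assert that ``$c_l$ is the unique quadratic polynomial in $l$ consistent with $c_0=0$ and $c_1=-1$,'' but this is false: a quadratic with vanishing constant term still has two free coefficients, and the single constraint $c_1=-1$ leaves a one-parameter family. Concretely, both $-\tfrac{l(2l+1)}{3}$ and $-\tfrac{l(l+1)}{2}$ satisfy your two conditions. So the two data points cannot by themselves produce the announced formula; you must actually supply the missing information. You have two options: either carry out the change of basis you sketch and obtain the specific numerical values of $\alpha$ and $\beta$ (in which case Lemma~\ref{L:scalar} becomes a consistency check rather than an input), or do what the paper does and import the known quadratic dependence of the Casimir eigenvalue on the highest weight, so that only one overall scale remains to be fixed by Lemma~\ref{L:scalar}. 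Either route closes the argument; the interpolation-from-two-values route does not.
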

 \begin{proof}

It is well known that  $\rho_l(C)=l(l+1)\Id$ (see e.g. \cite{HumphreysRep}). By $\so_{3}$-invariance $T=const\times C$. Note that the $\so_3$ representation $\R^3$ is isomorphic to $W_2$.  We find constant $const$ equal to $-1/6$ by  comparing  $const \times l(l+1)|_{l=2}$ with the scalar $-k/2|_{k=2}$ form Lemma \ref{L:scalar}. Under regular representation of $\so_3$ in $L^2(\SO(3))$ $\rho_{L^2(\SO(3))}(T)=\Delta$. 
 \end{proof}
 \begin{lemma}
 Operator $L_{e_{ab}}^2$ in $W=W_{2l}$  has a spectral decomposition $\bigoplus_{\lambda}W^{\lambda}$. The eigenvalues satisfy  
 \begin{equation}\label{E:eigen-w}
 \lambda= -\frac{s^2}{2},-l\leq s\leq l
 \end{equation}
 For $s=0$ $\dim(W^{0})=1$, for $s\neq 0$ $\dim(W^{-\frac{s^2}{2}})=2$
 \end{lemma}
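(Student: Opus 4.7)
The plan is to diagonalize $L_{e_{ab}}^2$ on $W=W_{2l}$ by identifying it with the infinitesimal action of a Cartan element of $\so_3$ in the spin-$l$ irreducible representation, and then applying the standard weight-space decomposition of $\sl_2(\C)$.

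First I would argue that, because $\sd_{e_{ab}}$ is a left- (or right-)invariant vector field on $\SO(3)$ associated to $e_{ab}\in\so_3$, the Peter--Weyl decomposition $L^2(\SO(3))=\bigoplus_{l\ge 0} W_{2l}\otimes \bar W_{2l}$ is preserved by $L_{e_{ab}}$, and its restriction to $W_{2l}$ coincides with the infinitesimal representation $\rho_{2l}(e_{ab})$. Thus the claim reduces to computing the spectrum of $\rho_{2l}(e_{ab})^2$ on the $(2l+1)$-dimensional irreducible $W_{2l}$.

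Next I would invoke the normalization (\ref{E:normalization}): writing $e_{ab}=\tfrac{1}{\sqrt{2}}(E_{ab}-E_{ba})$, the one-parameter subgroup $\{\exp(t e_{ab})\}$ is rotation in the $ab$-plane by angle $t/\sqrt{2}$, and $e_{ab}$ spans a Cartan subalgebra of $\so_3\otimes\C\cong\sl_2(\C)$. By standard $\sl_2$ highest-weight theory, on the spin-$l$ representation $W_{2l}$ a rotation by angle $\phi$ in a fixed plane has simple spectrum $\{e^{\ii s\phi}:s=-l,\dots,l\}$. Differentiating at $\phi=0$ and incorporating the $1/\sqrt{2}$ factor gives that $\rho_{2l}(e_{ab})$ has simple spectrum $\{\ii s/\sqrt{2}: s=-l,\dots,l\}$, each one-dimensional weight space being the $s$-th weight space of $W_{2l}$ for the torus $\exp(\R e_{ab})$.

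Finally, squaring gives $L_{e_{ab}}^2$ the spectrum $\{-s^2/2: s=-l,\dots,l\}$. Because $-s^2/2$ depends only on $|s|$, the value $\lambda=0$ is attained only at $s=0$ so $\dim W^{0}=1$, while for $s\ne 0$ the eigenvalue $-s^2/2$ is attained by both $+s$ and $-s$, giving $\dim W^{-s^2/2}=2$. The dimensions add up to $1+2l=2l+1=\dim W_{2l}$, confirming completeness. The only delicate point is keeping track of the factor $1/\sqrt{2}$ coming from (\ref{E:normalization}), which is precisely what produces the $1/2$ in the eigenvalues; everything else is standard.
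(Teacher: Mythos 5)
Your proof is correct and follows essentially the same route as the paper: both arguments treat $e_{ab}$ as a Cartan generator of $\so_3\otimes\C\cong\sl_2(\C)$, invoke the one-dimensional weight decomposition of the irreducible $W_{2l}$, and pin down the normalization coming from (\ref{E:normalization}) by an explicit computation in the defining representation $\R^3\cong W_2$. The only cosmetic difference is that you fix the factor $1/\sqrt{2}$ by noting that $\exp(te_{ab})$ is a rotation by angle $t/\sqrt{2}$, whereas the paper compares the eigenvalues of $\rho_2(e_{ab}^2)$ on $\R^3$ with those of the standard $\sl_2$ weight basis; these are the same computation in different clothing.
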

 \begin{proof}
 The regular element $e_{ab}$ generates a Cartan subalgebra in $\so_3(\R)\otimes \C$. It is known that the operator $\rho_l(e_{ab})$ in $W_l$ can be diagonalized in a weight basis. After a suitable normalization $e_{ab}\leadsto ce_{ab}$ the eigenvalues of $ce_{ab}$ in $W_{2l}$ become $2s$, $-l\leq s\leq l$. We find normalization constant from the condition that $e_{ab}^2$ acts in $\R^3$ by the formula $\rho_{2}(e_{ab}^2)e_a=-1/2e_a,\rho_{2}(e_{ab}^2)e_b=-1/2e_b,\rho_{2}(e_{ab}^2)e_v=0$. On the other hand $ce_{ab}$ has a standard weight basis $e,h,f$ such that $\rho_{2}(ce_{ab})e=2e,\rho_{2}(ce_{ab})h=0,\rho_{2}(ce_{ab})f=-2e$. From this  $c^2=-1/8$ and the eigenvalues are given by  (\ref{E:eigen-w}). The spectral decomposition follows from the weight decomposition 
 \[V^{0}=W^0_{2l}\otimes \overline{W}_{2l},V^{-\frac{s^2}{2}}=(W^{2s}_{2l}+W^{-2s}_{2l})\otimes \overline{W}_{2l}\]

\end{proof}

Our previous results enable us to compute the eigenvalues of $\Delta_{\Omega}$, which are

\[\begin{split}
&\frac{4l(2l+1)}{3R^2(1+t)} +\frac{s^2(3 t-1)}{R^2 (1-t^2)}=\\
&\frac{4 l (2 l+1)}{3 R^2 \left(\sin ^2(\tau )+1\right)}+\frac{s^2 \left(3 \sin ^2(\tau )-1\right)}{R^2 \left(1-\sin ^4(\tau )\right)}+R^2 \left(1-\sin ^2(\tau )\right),-l\leq s\leq l
\end{split}\]

\subsection{Case $k=3$}
In this case $V_3(\R^{4})$ is isomorphic to the group $\SO(4)$. Then  $L^2(V_3(\R^{4}))=L^2(\SO(4))$ and 
\[L^2(\SO(4))=\bigoplus_{l\geq 0,l+m\equiv 0\mod 2} W_{l}\otimes W_{m}\otimes\overline{W}_{l}\otimes \overline{W}_{m}\]

Let $\rho_{reg}U(\so_4)\rightarrow Diff(\SO(4))$ be a identification  of the universal enveloping algebra with the algebra of left-invariant differential operators. The operator $\Delta_{\Omega}$ is equal to image of 
\[-\frac{1}{2}\left(\frac{1}{1+t}e_{12}^2+
\frac{1}{1+t}e_{13}^2+
\frac{1}{2(1-t)}e_{23}^2+
\frac{1}{2t}e_{14}^2+
\frac{1}{1-t}e_{24}^2+
\frac{1}{1-t}e_{34}^2\right)
\]
The eigenvalues  in representations  $W_{2l}\otimes W_{0}$ and $W_{0}\otimes W_{2l}\otimes W_{0}$ are described in the previous subsection.
The eigenvalues  in $W_{1}\otimes W_{1}$ are
\[\left\{\frac{t+5}{4 (t-1) (t+1)},\frac{t+5}{4 (t-1) (t+1)},\frac{3 t+1}{4 (t-1) t},-\frac{5 t+1}{4 t (t+1)}\right\}\]
The formulas for eigenvalues in representations of greater highest weight become significantly more complicates because involve roots of algebraic equations of degree increasing with $l$ and $m$.
For example eigenvalues  in $W_{3}\otimes W_{1}$ are
\[\left\{\frac{3 t^2+12 t+1}{4 (t-1) t (t+1)},\frac{3 t^2+12 t+1}{4 (t-1) t (t+1)},\frac{7 t^2+16 t+1}{4 (t-1) t (t+1)},-\frac{9 t^2-16 t-1}{4 (t-1) t (t+1)},-\frac{t^2+2 \sqrt{13 t^4+4 t^3+2 t^2-4 t+1}-13 t-2}{4 (t-1) t (t+1)},-\frac{t^2+2 \sqrt{13 t^4+4 t^3+2 t^2-4 t+1}-13 t-2}{4 (t-1) t (t+1)},-\frac{t^2-2 \sqrt{13 t^4+4 t^3+2 t^2-4 t+1}-13 t-2}{4 (t-1) t (t+1)},-\frac{t^2-2 \sqrt{13 t^4+4 t^3+2 t^2-4 t+1}-13 t-2}{4 (t-1) t (t+1)}\right\}\]

\subsection{Radial component of the Shr\"odinger operator }
In this section we assume that $k\geq 2$ is an integer. The form $Q(f,g)$ simplifies significantly when $f,g$ are $(e_v,e_a,e_b)$-independent functions  :
\[Q(f,g)= \int_{0}^{1}\left(\frac{4t(1-t)}{R^2}f'(t)\bar g'(t)+R^2(1-t)f(t)\bar g(t)\right)w(t)\d t,\]
$w=w^k_{alg}$ as in (\ref{E:walg}) written in a chart defined by the map $alg$.
We assume that $f,g\in C^{\infty}_c(0,1)$. Integration by parts lead to the  operator $H_{rad}$, which satisfies $Q_{rad}(f,g)=(H_{rad}f,g)_{w}$, where $(f,g)_{w}= \int_{0}^{1}f(t)\bar g(t)w_{alg}(t)\d t$, 
The operator $H_{rad}$ is defined by the formula
\begin{equation}\label{E:schgen}
\begin{split}
&H_{rad}(f)=-w_{alg}^{-1}(p_kf')'+w_{alg}^{-1}q_kf\\
&p_k(t)=\frac{4}{R^2}t(1-t)w_{alg}^{k}(t), q_k(t)=R^2(1-t)w_{alg}^{k}(t)
\end{split}
\end{equation}
For brevity sake we denote $p_k(t)$ and $q_k(t)$ by $p$ and $q$.
The same formula defines an operator, which we denote by the same symbol, in the extended domain $H_{rad}:\O_{an}(\mathbb{C}\backslash\{0,1,-1\})\rightarrow \O_{an}(\mathbb{C}\backslash\{0,1,-1\})$. Here $\O_{an}$ stands for complex analytic functions of parameter $z\in \mathbb{C}\backslash\{0,1,-1\}$. 
The eigenvalue problem $H_{rad}f=\lambda {R^2} f$ in this space becomes an ODE:
\begin{equation}\label{E:sch}
\begin{split}
&f''(t)+\left(\frac{k-1}{t-1}+\frac{k-1}{2 t}+\frac{1}{t+1}\right) f'(t)+\eta^2 \left(\frac{\lambda  -1}{4 t}-\frac{\lambda  }{4 (t-1)}\right)f(t)
 =0,\quad \eta=\frac{R^2}{L} 
\end{split}
\end{equation}
At infinity it has the form
\[
\begin{split}
&f''(t)+\left(\frac{k-1}{t-1}+\frac{5-3 k}{2 t}+\frac{1}{t+1}\right) f'(t)+f(t) \eta^2\left(-\frac{1}{4  t^3}-\frac{\lambda  }{4 t^2}+\frac{\lambda  }{4 (t-1)}-\frac{\lambda  }{4 t}\right)f(t)=0
\end{split}
\]

A more general equation
\[f''(t)+\left(\frac{1-\mu _2}{t-a}+\alpha +\frac{1-\mu _0}{t}+\frac{1-\mu _1}{t-1}\right) f'(t)+\frac{\beta _0+\beta _2 t^2+\beta _1 t}{t (t-1) (t-a)} f(t)=0\]
have been studied in 
It reduced to our equation after  substitution $a\to-1,\alpha\to 0, \mu _0\to \frac{3-k}{2}, \mu _1\to 2-k,\mu _2\to 0, \beta _0\to \frac{1}{4}\eta (1- \lambda ),\beta _1\to -\frac{1}{4} \eta  \lambda ,\beta _2\to -\frac{\eta }{4} $

\subsection{Local solutions}
The method of Frobenius \cite{ince} enables us two find a series solutions at the relevant singular points. 
We find that $z=0$ is a regular singularity with characteristic exponents $(3-k)/2$ and $0$. These series provide us with a general solution of the form

\begin{equation}\label{E:assI}
\begin{split}
&f(z)=c_1y_1(z)+c_2\frac{1}{z^{\frac{k-3}{2}}}y_2(z) ( k\equiv 0\mod 2 ) \text{ or }\\
& f(z)=c_1y_1(z)+c_2\left(\frac{1}{z^{\frac{k-3}{2}}}y_2(z)+\beta(k,R,\lambda)\ln(z)y_1(z)\right)  ( k\equiv 1 \mod 2  )
\end{split}
\end{equation}
where $y_1,y_2$ are analytic functions near zero.

A similar analysis at $z=1$ give characteristic exponents $2-k$ and $0$ and 
\begin{equation}\label{E:assII}
\begin{split}
&f(z)=c_1\tilde{y}_1(z-1)+c_2\left(\frac{1}{(z-1)^{{k-2}}}\tilde{y}_2(z-1)+\alpha(k,R,\lambda)\ln(z-1)\tilde{y}_1(z-1)\right)
\end{split}
\end{equation}
$\tilde{y}_1,\tilde{y}_2$ are analytic at $z=0$

Differential equation (\ref{E:sch}) is equivalent to 
\begin{equation}\label{E:SL}
\begin{split}
&M_k f=-(p_kf')'+q_kf\\
&M_k f-\lambda w^kf=-(p_kf')'+q_kf-\lambda w^k f=0\\
\end{split}
\end{equation}
whose local solutions also have asymptotics (\ref{E:assI},\ref{E:assII}). As usual, when it is clear from the context we abbreviate $M_k$ to $M$.

Let $AC_{loc}(0,1)$ be the space of all complex-valued functions on $(0,1)$ that are absolutely continuous  on  
any compact interval $[\alpha,\beta]\subset (0,1)$.
We define $L^2_w(0,1)$ as a space of Lebesgue measurable functions on $(0,1)$ with a finite $\int_{0}^1|f|^2wdt=(f,f)_w$. 
Let \[D_k=\{f\in AC_{loc}(0,1)| p_kf'\in AC_{loc}(0,1)\}\]
Following \cite{Naimark} we define the maximal domain of $M$ as the subspace
\begin{equation}\label{E:DELTA}
\Delta_k=\left\{f\in D_k|f, \frac{1}{w^k} M_k(f)\in L^2_{w^k}(0,1)\right\}
\end{equation}

\subsection{Endpoint classification of the domain of the  operator $M$}
We follow \cite{Naimark} classification of singularities of (\ref{E:SL}). 
\begin{proposition}\label{P:endpoint}\ \\
The following classification of end-points of the domain $(0,1)$ of the equation (\ref{E:SL}) holds 
\begin{enumerate}
\item  The point $t=0$
	\begin{enumerate}
	\item  is a  regular point  if $k=2$,
	\item \label{b} is limit circle singularity if $k=3,4$,
	\item  \label{c} is limit point singularity if $k\geq 5$.
	\end{enumerate} 
\item  The point $t=1$
	\begin{enumerate}
	\item is limit circle singularity if $k=2$,
	\item is limit point singularity  if $k\geq 3$.
	\end{enumerate}
\end{enumerate}
\end{proposition}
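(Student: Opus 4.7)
The plan is to apply the Weyl--Naimark classification from \cite{Naimark}: an endpoint $t=a$ of the weighted Sturm--Liouville equation $Mf=\lambda w^k f$ is \emph{regular} provided $1/p_k$, $q_k$ and $w^k$ are all in $L^1$ on a one-sided neighbourhood of $a$; otherwise it is \emph{singular}, and one is in the limit circle case (resp.\ limit point case) according as both (resp.\ only one) of two linearly independent solutions of $Mf=\lambda w^k f$ lie in $L^2_{w^k}$ near $a$, this being independent of $\lambda\in\C$. From (\ref{E:walg}) one reads off the asymptotics $w^k(t)\sim t^{(k-3)/2}$, $p_k(t)\sim t^{(k-1)/2}$, $q_k(t)\sim t^{(k-3)/2}$ as $t\to 0^+$, and $w^k(t)\sim (1-t)^{k-2}$, $p_k(t)\sim (1-t)^{k-1}$, $q_k(t)\sim(1-t)^{k-1}$ as $t\to 1^-$. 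Everything else reduces to a finite collection of one-dimensional integral convergence checks.

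At the endpoint $t=0$, the integrability criterion $1/p_k\sim t^{(1-k)/2}\in L^1(0,\varepsilon)$ holds exactly when $k<3$, while $w^k$ and $q_k$ are integrable for all $k\geq 2$. Hence $t=0$ is regular when $k=2$ and singular when $k\geq 3$. For the singular cases I would read off the solution asymptotics from the Frobenius expansion (\ref{E:assI}): the two independent solutions behave like $y_1\sim 1$ and $t^{(3-k)/2}y_2$, with a possible extra $\log t$ factor when $k$ is odd. The two $L^2_{w^k}$-tests reduce to
\[\int_0^\varepsilon t^{(k-3)/2}\,dt<\infty\quad(\text{always, for }k\geq 2),\]
\[\int_0^\varepsilon t^{3-k}\cdot t^{(k-3)/2}\,dt=\int_0^\varepsilon t^{(3-k)/2}\,dt<\infty\iff k\leq 4,\]
giving LC for $k=3,4$ and LP for $k\geq 5$, as claimed.

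At the endpoint $t=1$ the asymptotic $1/p_k\sim(1-t)^{1-k}$ fails to be in $L^1$ for every $k\geq 2$, so this endpoint is always singular. From (\ref{E:assII}) the two solutions behave like $\tilde y_1\sim 1$ and $(1-t)^{2-k}\tilde y_2$, with a possible extra $\log(1-t)$ factor. The tests
\[\int^1_{1-\varepsilon}(1-t)^{k-2}\,dt<\infty\quad(\text{always}),\]
\[\int^1_{1-\varepsilon}(1-t)^{2(2-k)+(k-2)}\,dt=\int^1_{1-\varepsilon}(1-t)^{2-k}\,dt<\infty\iff k=2,\]
deliver LC for $k=2$ and LP for $k\geq 3$.

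The only subtle point, and thus the main obstacle, is confirming that the logarithmic corrections appearing when the characteristic exponents differ by an integer (at $t=0$ for odd $k\geq 3$, and at $t=1$ for every integer $k\geq 3$) do not overturn any of these verdicts. This is automatic: $|\log t|^N=o(t^{-\delta})$ for every $\delta>0$, so including a logarithm modifies the monomial integrand by a subpower amount and cannot turn a convergent monomial integral into a divergent one or vice versa. With that observation in hand, everything else is bookkeeping.
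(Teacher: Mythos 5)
Your proposal is correct and follows essentially the same route as the paper: the paper likewise notes that only $p^{-1}$ can destroy regularity (checking its integrability at each endpoint), and then decides limit circle versus limit point by testing local integrability of $|f|^2w$ for the Frobenius solutions (\ref{E:assI}) and (\ref{E:assII}). Your version simply makes the exponent bookkeeping and the harmlessness of the logarithmic factors explicit.
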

\begin{proof}
The quantities  $|q|,|w|$ are finite on $(0,1)$, therefore only $|p^{-1}|$ can contribute to singularity. The function $p^{-1}$ is never integrable near $t=1$ and $ p^{-1}\in  L^1(0,1/2)$ only when $k=2$.

Expression  $|f|^2(t)w(t)$ is locally integrable  at $0\in (0,1)$ $\forall f$ as in  (\ref{E:assI}) if $k\leq 4$, which takes care of the cases (\ref{b},\ref{c}). 
 
Likewise,   the function $|f|^2(t)w(t)$ is locally integrable  near $1\in (0,1)$  $\forall f$ as in (\ref{E:assII})  if $k\leq 2$.
\end{proof}

\begin{definition}\label{D:domain}
Define the operator $T_k:\Omega_k\rightarrow L^2_w(0,1)$, $k=2,3,\dots$ such that 
\begin{enumerate}
\item $\Omega_2=\{f\in \Delta_2| \lim_{t\rightarrow 0^{+}}(pf')(t)= \lim_{t\rightarrow 1^{-}}(pf')(t)=0\}$
\item $\Omega_3=\{f\in \Delta_3| \lim_{t\rightarrow 0^{+}}(pf')(t)=0\}$
\item $\Omega_4=\{f\in \Delta_4| \lim_{t\rightarrow 0^{+}}(pf')(t)=0\}$
\item $\Omega_k=\Delta_k, k\geq 5$
\end{enumerate}
and $T_kf=\frac{1}{w^k}M_kf$  for $f\in \Omega_k$.
\end{definition}
For self-consistency of this definition the reader should check \cite{BEHZ}.
When confusion can't arise we abbreviate $T_k$ to $T$ and $\Omega_k$ to $\Omega$.
\begin{theorem}
The operator $T$ (see  the definition (\ref{D:domain})) is a self-adjoint in the Hilbert space $L^2_{w}(0,1)$
\end{theorem}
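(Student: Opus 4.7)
The plan is to apply the classical Weyl--Naimark theory of singular Sturm--Liouville operators \cite{Naimark}, using the endpoint classification of Proposition \ref{P:endpoint}. The central tool is Green's formula: integrating $M_k f = -(p_k f')' + q_k f$ against $\bar g$ twice by parts on $(\epsilon, 1-\epsilon)$ and passing to the limit gives, for any $f, g \in \Delta_k$, the Lagrange identity
\[
(T_k f, g)_{w^k} - (f, T_k g)_{w^k} = [f, g](0^+) - [f, g](1^-),
\]
with boundary form $[f,g](t) := p_k(t)\bigl(f'(t)\overline{g(t)} - f(t)\overline{g'(t)}\bigr)$. Self-adjointness of $T_k$ reduces to two claims: (a) the boundary form vanishes at both endpoints on $\Omega_k$, so $T_k$ is symmetric, and (b) $\Omega_k$ is the preimage of a Lagrangian subspace of the finite-dimensional symplectic quotient of $\Delta_k$ by the minimal domain, as required by the Glazman--Naimark extension theorem.

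For (a), I would consider the endpoints case by case using Proposition \ref{P:endpoint}. At a limit-point endpoint the Weyl alternative \cite{Naimark} gives $\lim_{t \to a}[f,g](t) = 0$ automatically for all $f, g \in \Delta_k$, which settles $t=1$ for $k \geq 3$ and $t=0$ for $k \geq 5$ with no condition required. At the remaining regular or limit-circle endpoints I would substitute the Frobenius asymptotics (\ref{E:assI}) and (\ref{E:assII}) into $p_k f'$, using the explicit form $p_k(t) = (4 c_k/R^2)\,t^{(k-1)/2}(1-t)^{k-1}(1+t)$. A direct computation --- distinguishing resonant (logarithmic) from non-resonant (fractional-power) Frobenius branches --- shows $p_k y_1' \to 0$ at the endpoint, while the singular branch produces a finite nonzero limit, the polynomial prefactor in $p_k$ exactly cancelling the Frobenius exponent. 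Hence $\lim_{t\to a}(p_k f')(t) = 0$ forces the coefficient $c_2$ of the singular branch in the local expansion of $f$ to vanish, and substituting back into $[f,g]$ shows $[f,g](a) = 0$ whenever $f, g \in \Omega_k$.

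For (b), the boundary form descends to a nondegenerate skew-symmetric pairing on the 2-dimensional Frobenius quotient at each non-limit-point endpoint. The 1-dimensional subspace spanned by the regular branch $y_1$, equivalently cut out by $(p_k f')(a) = 0$, satisfies $[y_1, y_1](a) = 0$ and is therefore Lagrangian, since in a 2-dimensional symplectic space Lagrangian subspaces are precisely the 1-dimensional isotropic ones. The total number of boundary conditions then matches the deficiency indices $(n, n)$ of the minimal operator, where $n$ is the number of non-limit-point endpoints ($n=2$ for $k=2$, $n=1$ for $k=3,4$, $n=0$ for $k \geq 5$), and the Glazman--Naimark theorem yields $T_k = T_k^{\ast}$. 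The hardest step is the endpoint-by-endpoint Frobenius bookkeeping above: one must carefully handle the logarithmic corrections in (\ref{E:assI}) for $k=3$ and in (\ref{E:assII}) for $k=2$ to confirm that $(p_k f')(a)$ is a well-defined linear functional on the Frobenius quotient and picks out exactly the Lagrangian $y_1$-subspace --- this is what the paper's reference to \cite{BEHZ} in Definition \ref{D:domain} is there to cover.
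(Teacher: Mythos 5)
Your proposal is correct and follows essentially the same route as the paper, whose entire proof is the citation ``See \cite{Naimark} Section~18'': you have simply unpacked that reference, carrying out the Lagrange-identity/Glazman--Naimark argument with the endpoint classification of Proposition~\ref{P:endpoint} and verifying that the polynomial prefactor $t^{(k-1)/2}(1-t)^{k-1}$ in $p_k$ exactly cancels the singular Frobenius exponents so that $(p_kf')(a)=0$ selects the regular branch. The deficiency-index count ($2$, $1$, $0$ boundary conditions for $k=2$, $k=3,4$, $k\geq 5$ respectively) matches the domains $\Omega_k$ of Definition~\ref{D:domain}, so the argument is sound.
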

\begin{proof}
See \cite{Naimark} Section 18.
\end{proof}

\begin{theorem}
Let $T$ be the self-adjoint operator as defined in Definition \ref{D:domain}. Then $T$ has the following spectral properties:
\begin{enumerate}
\item The spectrum $\sigma(T)$ is real simple and discrete; that is the spectrum consists solely of real simple eigenvalues , say
\[\sigma(T_k)=\{\lambda_{n}\in \mathbb{R}, n\in \mathbb{N}\}\]
with the property 
\[\begin{split}
&\lambda_{0}\geq 0\\
&\lambda_{n}<\lambda_{n+1}, n\in \mathbb{N}\\
&\lim_{n\rightarrow +\infty}\lambda_{n}=+\infty
\end{split}\]
The operator $T$ is bounded below in $L^2_w(0,1)$ with 
\[(Tf,f)_w\geq 0, f\in \Omega\]
Since each eigenvalue is simple the eigenspaces satisfy 
\[\dim\{f\in \Omega: Tf=\lambda_{n}f\}=1 , n\in \mathbb{N}\]
\item Let the eigenfunction be denoted by $\{\psi_{n}|n\in\mathbb{N}\}$; then $M\psi_{n}=\lambda_{n}w\psi_{n}, n\in \mathbb{N}$
\begin{enumerate}
\item The eigenfunction $\psi_{n}$ satisfy boundary conditions formulated in Definition \ref{D:domain}. 
\item The eigenfunction $\{\psi_{n}|n\in \mathbb{N}\}$ is orthogonal and complete in $L_w^2(0,1)$.
\item For each $n$ $ \psi_{n}$ has exactly  $n$ isolated zeros on $(0,1)$.
\item For each $n$ and $x=0^{+},1^{-}$ the limits $\lim_{t\rightarrow p}\psi_{n}(t)$ exist and finite. The eigenfunctions $\psi_{n}\in AC[0,1]$
\end{enumerate}
\end{enumerate}
\end{theorem}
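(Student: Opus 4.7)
The plan is to reduce everything to standard singular Sturm--Liouville theory, for which the main reference is Naimark \cite{Naimark}. The self-adjointness of $T$ has already been asserted, so the spectral statements follow once we verify (i) that the resolvent of $T$ is compact and (ii) lower semiboundedness, and then read off the fine structure (simplicity, nodal count, boundary regularity) from the ODE itself.

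First I would establish lower semiboundedness. The operator $T$ was obtained from the closure of the quadratic form $Q_{rad}(f,g)=\int_0^1\bigl(\tfrac{4t(1-t)}{R^2}f'\bar g'+R^2(1-t)f\bar g\bigr)w_{alg}(t)\,dt$, and both integrands are manifestly non-negative on $(0,1)$. Hence $(Tf,f)_w=Q_{rad}(f,f)\geq 0$ on $\Omega$, giving $\lambda_0\geq 0$ and the lower bound stated in the theorem.

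Next I would prove discreteness. By Proposition \ref{P:endpoint} every singular endpoint is either regular, limit--circle with the Friedrichs-type boundary condition $\lim_{t\to 0^+}(pf')(t)=0$ imposed, or limit--point. In all three cases the standard construction (Naimark \cite{Naimark}, \S 19; see also \cite{BEHZ}) produces a Green's function $G(t,s;\mu)$ for $\mu\notin\sigma(T)$ built from two linearly independent solutions of $Mf=\mu wf$ normalised by the endpoint conditions in Definition \ref{D:domain}. Using the local Frobenius asymptotics (\ref{E:assI}) at $t=0$ and (\ref{E:assII}) at $t=1$, one checks that this Green's function has finite Hilbert--Schmidt norm with respect to $w\,dt\otimes w\,ds$: the only potential singularities of $G$ come from the diagonal (integrable as usual) and from the endpoints, and the weight $w=w_{alg}^k$ exactly compensates the Frobenius-exponent blow-ups allowed by $\Omega_k$. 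Hence $(T-\mu)^{-1}$ is compact, so $\sigma(T)$ is discrete with $\lambda_n\to+\infty$. Simplicity is automatic: the ODE $Mf=\lambda wf$ has a two-dimensional solution space, and each prescribed endpoint condition (or limit--point selection, which already forces at most one $L^2_w$ solution near that end) cuts this down by one dimension, leaving a one-dimensional eigenspace.

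Finally I would read off the remaining assertions. Orthogonality and completeness of $\{\psi_n\}$ in $L^2_w(0,1)$ follow from self-adjointness of $T$ together with compactness of the resolvent. The nodal count ``$\psi_n$ has exactly $n$ zeros in $(0,1)$'' is Sturm's oscillation theorem, which applies verbatim to the regular Sturm--Liouville form (\ref{E:SL}) once the endpoint issues have been handled. For the boundary behaviour I go back to (\ref{E:assI})--(\ref{E:assII}): any $\psi_n\in \Omega_k$ is a linear combination $c_1y_1+c_2y_2^{\sing}$, and the membership in $\Omega_k$ together with the boundary condition $(p\psi_n')(t)\to 0$ where it is imposed forces $c_2=0$ in every case (at $t=0$ for $k=3,4$ this kills the logarithmic term; at $t=0$ for $k\geq 5$ and at $t=1$ for $k\geq 3$ the $L^2_w$ requirement does it). The surviving solution is analytic at the endpoint, so $\psi_n$ extends continuously with finite limits and lies in $AC[0,1]$.

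The step I expect to be the most delicate is the Hilbert--Schmidt (or trace-class) estimate on the resolvent kernel: one has to track how the Frobenius exponents $(3-k)/2$ at $0$ and $2-k$ at $1$ interact with the weight $w_{alg}^k(t)\sim t^{(k-3)/2}(1-t)^{k-2}$ near each endpoint, and verify that the integrability margin provided by $\Omega_k$ is preserved after multiplying two solutions. Everything else is a direct citation of \cite{Naimark}.
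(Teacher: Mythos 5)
Your outline is sound in its overall architecture, but it takes a genuinely different route from the paper at the one place where real work is required, and at that place you stop short. The paper does not prove discreteness by constructing the Green's function and estimating its Hilbert--Schmidt norm. Instead it imports the argument of Bailey--Everitt--Hinton--Zettl \cite{BEHZ} for the Heun equation essentially verbatim: the only equation-specific input there is a pair of inequalities (their (83) and (84)), and the paper's proof consists of producing their analogues here, namely explicit endpoint bounds $p(t)\geq k_1 t^{(k-1)/2}$, $q(t)\leq k_2 t^{(k-3)/2}$, $w(t)\leq k_3 t^{(k-3)/2}$ on $(0,1/2]$ and the corresponding $(1-t)$-power bounds on $[1/2,1)$, and then feeding these into Hardy-type inequalities to show that for every $\lambda$ the form $\int_a^b\bigl(p|f'|^2+q|f|^2-\lambda w|f|^2\bigr)\,\d t$ is positive on functions supported in a sufficiently small neighborhood of each singular endpoint. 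That positivity is what kills the essential spectrum and yields discreteness, semiboundedness, simplicity, oscillation, and completeness through the BEHZ machinery. Your route via compactness of $(T-\mu)^{-1}$ is a legitimate alternative and, if carried out, would buy the same conclusions (arguably more cleanly for completeness and the eigenvalue asymptotics).

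The gap is that you explicitly defer the Hilbert--Schmidt estimate, and in this problem that estimate is not a formality: at $t=1$ for $k\geq3$ and at $t=0$ for $k\geq5$ the endpoints are limit point (Proposition~\ref{P:endpoint}), and for limit-point singular Sturm--Liouville operators discreteness of the spectrum is simply false in general, so whatever replaces the paper's Hardy-inequality computation carries the entire content of part (1). Concretely, you would need to verify that with $u$ the solution satisfying the condition at $0$ and $v$ the principal ($L^2_w$) solution at $1$, the integral $\int_0^1 |u(t)|^2 w(t)\bigl(\int_t^1|v(s)|^2 w(s)\,\d s\bigr)\d t$ converges even though $|u|^2w\sim(1-t)^{2-k}$ is itself non-integrable near $1$; the cancellation against $\int_t^1|v|^2w\sim(1-t)^{k-1}$ does occur, but this is exactly the Frobenius-exponent bookkeeping you name and do not perform, and the analogous check is needed at $t=0$. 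Until that is written out (or replaced by the form-positivity argument the paper actually uses), the discreteness claim --- and with it simplicity, the nodal count, and completeness --- is not established. The remaining items in your sketch (non-negativity of the form, the boundary conditions forcing $c_2=0$ in (\ref{E:assI})--(\ref{E:assII}), and the resulting $AC[0,1]$ regularity) are correct and match what the paper would need, though the paper itself delegates them to \cite{BEHZ}.
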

\begin{proof}
The proof of the similar statement for Heun equation can be found in 
\cite{BEHZ}. It can be used almost verbatim to justify our theorem. Furthermore all other theorems of that paper remain valid in our context.
The key  moment in the proof in \cite{BEHZ}  specific to ordinary Heun    equation are inequalities (83) and (84). Their analogues in our case are the following. Suppose $0\leq t\leq 1/2$ then
\[\begin{split}
&p(t)\geq k_1t^{\frac{k-1}{2}}=\frac{3c_k}{2^{k-2}R^2} t^{\frac{k-1}{2}} \\
&q(t)\leq k_2t^{\frac{k-3}{2}}=c_kR^2t^{\frac{k-3}{2}}\\
&w(t)\leq k_3t^{\frac{k-3}{2}}, k_3 =c_k \text{ if } k\geq 3 \text{ or }  k_3 =3c_k/2  \text{ if } k= 2\\
\end{split}\]
If $1/2\leq t\leq 1$ then 
\[\begin{split}
&p(t)\geq l_1(1-t)^{k-1}=\frac{3c_k}{2^{\frac{k-3}{2}} R^2}(1- t)^{k-1} \\
&q(t)\leq l_2(1-t)^{k-1}, l_2 =2R^2c_k \text{ if } k\geq 3 \text{ or }  l_2 =3c_kR^2/\sqrt{2}  \text{ if } k= 2\\
&w(t)\leq l_3(1-t)^{k-2}, l_3 =2c_k \text{ if } k\geq 3 \text{ or }  l_3 =3c_k/\sqrt{2}  \text{ if } k= 2\\
\end{split}\]

Following arguments of Theorem 23 \cite{BEHZ} that uses Hardy type inequalities we deduce that for $k\neq 3$
\[\int_a^b\left(pf'^2+qf^2-\lambda wf^2\right)\d t> \int_a^b\left(k_1\frac{(k-3)^2}{16}t^{\frac{k-5}{2}}-k_2t^{\frac{k-3}{2}}-k_3|\lambda|t^{\frac{k-3}{2}}\right)f^2 \d t=\]
\[=\int_a^b\left(k_1\frac{(k-3)^2}{16}-(k_2+k_3|\lambda|)t\right)t^{\frac{k-5}{2}}f^2 \d t\]
The last integral is positive if $0<a<b<\frac{k_1(k-3)^2}{16(k_2+k_3|\lambda|)}$. 
If $k=3$ then 
\[\int_a^b\left(pf'^2+qf^2-\lambda wf^2\right)\d t>\int_a^b \left(\frac{k_1}{4t\ln(t)^2}-(k_2+|\lambda|k_3)\right)f^2\d t\]
The integrand is obviously positive if $0<a<b<\delta$ where $\delta$ is sufficiently small. 
Suppose that $1/2<a<b<1$. Then
\[\int_a^b\left(pf'^2+qf^2-\lambda wf^2\right)\d t> \int_a^b\left(l_1\frac{(k-2)^2}{4}(1-t)^{{k-3}}-l_2(1-t)^{{k-1}}-l_3|\lambda|(1-t)^{{k-2}}\right)f^2 \d t=\]
\[=\int_a^b\left(l_1\frac{(k-2)^2}{4}-(l_2(1-t)^2+l_3|\lambda|(1-t))\right)(1-t)^{{k-3}}f^2 \d t\]
If $k=2$ then
\[\int_a^b\left(pf'^2+qf^2-\lambda wf^2\right)\d t>\int_a^b \left(\frac{l_1}{4(1-t)\ln(1-t)^2}-(l_2(1-t)+|\lambda|l_3)\right)f^2\d t=\]
\[=\int_a^b \left(\frac{l_1}{4}-(l_2(1-t)^2\ln(1-t)^2+|\lambda|l_3(1-t)\ln(1-t)^2)\right)\frac{1}{(1-t)\ln(1-t)^2}f^2\d t\]

The rest of the arguments repeats \cite{BEHZ}.
\end{proof}
\subsection{Incorporating  harmonics at $k=2$}
Finally we would like to incorporate harmonics of angular Laplacian.
Our equation becomes
\[-f''(t)-\left(\frac{1}{2 t}+\frac{1}{t+1}+\frac{1}{t-1}\right) f'(t)+f(t) \left(\frac{4 l (2 l+1)}{3 R^2 (1+t)}+\frac{s^2 (3 t-1)}{R^2 (1-t^2)}+\frac{R^2 \left(R^2-\lambda \right)}{4 t}+\frac{\lambda  R^2}{4 (t-1)}\right)=0\]

\[Q_{l,s}(f,g)= c_2\int_{0}^{1}\left(\frac{4t(1-t)}{R^2}f'(t)\bar g'(t)+\left(R^2(1-t)+\frac{4 l (2 l+1)}{3 R^2 (1+t)}+\frac{s^2 (3 t-1)}{R^2 (1-t^2)}\right)f(t)\bar g(t)\right)(1+t)t^{-\frac{1}{2}} \d t,\]

Suppose $0\leq t\leq 1/2$ then
\[\begin{split}
&p(t)\geq k_1t^{\frac{1}{2}}=\frac{3c_2}{R^2} t^{\frac{1}{2}} \\
&q(t)\leq k_2t^{-\frac{1}{2}}=c_2(\frac{4 (2 l+1) l}{3 R^2}+\frac{s^2}{R^2}+R^2)t^{-\frac{1}{2}}\\
&w(t)\leq k_3t^{-\frac{1}{2}},\  k_3 =3c_2/2  \text{ if } k= 2\\
\end{split}\]

\[\int_a^b\left(pf'^2+qf^2-\lambda wf^2\right)\d t>
\int_a^b \left(\frac{k_1}{16}t^{-\frac{3}{2}}-(k_2+|\lambda|k_3)t^{-\frac{1}{2}}\right)f^2\d t=\]
\[=\int_a^b \left(\frac{k_1}{16}-(k_2+|\lambda|k_3)t\right)t^{-\frac{3}{2}}f^2\d t\]

If $1/2\leq t\leq 1$ then 
\[\begin{split}
&p(t)\geq l_1(1-t)=\frac{3\sqrt{2}c_2}{ R^2}(1- t) \\
&q(t)\leq l_2(1-t)^{-1},  l_2 =c_2\frac{\sqrt{2} l (2 l+1)}{3 R^2}+\frac{2 s^2}{R^2}+\frac{3 R^2}{8 \sqrt{2}}  \\
&w(t)\leq l_3, l_3 =3c_2/\sqrt{2} \\
\end{split}\]
\[\int_a^b\left(pf'^2+qf^2-\lambda wf^2\right)\d t>
\int_a^b \left(\frac{l_1}{4(1-t)(\ln(1-t))^2}-(\frac{l_2}{1-t}+|\lambda|l_3)\right)f^2\d t=\]
\[=\int_a^b \left(\frac{l_1}{4}-(l_2+|\lambda|l_3(1-t))(\ln(1-t))^2\right)\frac{1}{(1-t)(\ln(1-t))^2}f^2\d t\]

Additional terms do not change characteristic exponents. The local solutions are 

\begin{equation}\label{E:assI3}
\begin{split}
&f(z)=c_1y_1(z)+c_2z^{\frac{1}{2}}y_2(z)\\
\end{split}
\end{equation}
where $y_1,y_2$ are analytic functions near zero.

\begin{equation}\label{E:assII4}
\begin{split}
&f(z)=c_1y_1(z-1)+c_2\left(y_2(z-1)+\alpha(k,R,\lambda)\ln(z-1)y_1(z-1)\right)
\end{split}
\end{equation}

\subsection{Estimates of the spectral gap of $M$}

R. Lavine in \cite{Lavine} showed that that one-dimensional Schr\"{o}dinger operators with a convex potential admits an estimate for the spectral gap:
\begin{theorem}\label{P:Lavine}
Let $U$ be convex on $[0, r]$, and let $\lambda_1$ and $\lambda_2$ be the first two eigenvalues for the Dirichlet Schr\"{o}dinger operator $-d^2/dx^2 + U$ on $[0, r]$. Then $\lambda_2-\lambda_1 \geq \Gamma_0$ where $\Gamma_0$ is the gap for constant $U$ for the Dirichlet  operator with equality only if $U$ is constant. Thus
\[\lambda_2-\lambda_1\geq \frac{3\pi}{r^2}\]
\end{theorem}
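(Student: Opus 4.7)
Since we are asked only to prove the gap bound for the one-dimensional Dirichlet operator with convex $U$, the whole argument lives on $[0,r]\subset \mathbb{R}$ and does not interact with the rest of the paper. The plan is to reduce the problem to a Poincaré-type estimate for the ground-state measure and then exploit the convexity of $U$ through a Sturm/comparison argument against the constant-potential case, for which the gap is exactly $3\pi^{2}/r^{2}$.

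First I would normalize and diagonalize the first eigenfunction. Let $\phi_{1}>0$ be the Dirichlet ground state with eigenvalue $\lambda_{1}$ and let $\phi_{2}$ be the second eigenfunction. Introduce the ratio $y=\phi_{2}/\phi_{1}$; standard computation gives
\[
-y''-2(\log\phi_{1})'\,y'=(\lambda_{2}-\lambda_{1})\,y,
\]
which is the eigenvalue equation for the ground-state-transformed operator $L=-\frac{1}{\phi_{1}^{2}}\frac{d}{dx}\bigl(\phi_{1}^{2}\frac{d}{dx}\bigr)$ on the weighted space $L^{2}([0,r],\phi_{1}^{2}\,dx)$. Its first nonzero eigenvalue is exactly $\lambda_{2}-\lambda_{1}$ and $y$ is the corresponding eigenfunction, vanishing at exactly one interior point $x_{0}\in(0,r)$ and smooth up to the boundary (the weight $\phi_{1}^{2}$ vanishes to first order at both endpoints, so no additional boundary conditions on $y$ are required).

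Next I would exploit convexity. Writing $w=(\log\phi_{1})'$, the ground-state equation $-\phi_{1}''+U\phi_{1}=\lambda_{1}\phi_{1}$ becomes the Riccati identity $w'+w^{2}=U-\lambda_{1}$, so $w''=U'-2ww'$. Since $U$ is convex, $U'$ is nondecreasing; this will be used to control the sign of $w''$ on the two subintervals where $\phi_{2}$ has constant sign. Now differentiate the eigenvalue equation for $y$: setting $u=y'$, one obtains
\[
-u''-2wu'-2w'u=(\lambda_{2}-\lambda_{1})\,u,
\]
an eigenvalue equation for the same type of operator but with an additional $0$-order potential $-2w'=2(U-\lambda_{1}-w^{2})$ whose convexity-derived monotonicity I would leverage. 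The function $u=y'$ has a unique zero (at the unique interior extremum of $y$, namely $x_{0}$), so Sturm-comparison applies between the interval $[0,x_{0}]$ on which $u$ is one-signed and the analogous interval for the flat potential $U\equiv\mathrm{const}$, where the comparison eigenvalue is exactly $3\pi^{2}/r^{2}$. Tracking the inequality through the transform yields $\lambda_{2}-\lambda_{1}\geq 3\pi^{2}/r^{2}$, with equality only if $U$ is constant.

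The main obstacle is the convexity input itself: one must genuinely use that $U''\geq 0$ (equivalently $U'$ monotone) to conclude that the ground-state-transformed coefficient $w'=U-\lambda_{1}-w^{2}$ behaves monotonically enough to dominate the flat-potential Jacobi equation on each of $[0,x_{0}]$ and $[x_{0},r]$. The standard device — attributed in the literature to Lavine — is to compare directly the Rayleigh quotient of $u=y'$ with respect to the weighted measure $\phi_{1}^{2}dx$ against the corresponding Rayleigh quotient for the unweighted Dirichlet problem on $[0,r]$, using the sign information on $u$ to integrate by parts once more and absorb the $-2w'$ term via convexity. Once this comparison is set up cleanly, the bound $3\pi^{2}/r^{2}$ drops out as the first nonzero Neumann eigenvalue of $-d^{2}/dx^{2}$ on $[0,r]$ applied to $u$ (which satisfies Neumann conditions by construction), since $u(0)$ and $u(r)$ are finite while the weight degenerates at the boundary.
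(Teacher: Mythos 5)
The paper does not prove this statement at all: it is quoted from Lavine's article (up to a typo --- the constant-potential Dirichlet gap on $[0,r]$ is $3\pi^2/r^2$, not $3\pi/r^2$; your version of the constant is the correct one). So the only meaningful comparison is with Lavine's own argument, which is variational rather than comparison-theoretic: he regards the gap $\Gamma(U)=\lambda_2-\lambda_1$ as a functional on the convex cone of convex potentials, computes its first variation $\int W(\phi_2^2-\phi_1^2)\,dx$, uses the sign pattern of $\phi_2^2-\phi_1^2$ to show a minimizer must be linear, and then treats linear potentials directly. Your proposed route --- ground-state transformation plus Sturm comparison --- is a genuinely different strategy, but as written it contains a gap that I do not believe can be closed in the form you describe.

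Concretely: (i) the interior zero $x_0$ of $y=\phi_2/\phi_1$ is not an extremum of $y$. In the model case $U\equiv 0$ one has $\phi_1=\sin(\pi x/r)$, $\phi_2=\sin(2\pi x/r)$, hence $y=2\cos(\pi x/r)$, which is strictly monotone; so $u=y'$ has \emph{no} interior zero, and your claim that $u$ vanishes exactly once at ``the unique interior extremum of $y$'' fails already for the flat potential on which you want to base the Sturm comparison. (ii) The number $3\pi^2/r^2$ is not the first nonzero Neumann eigenvalue of $-d^2/dx^2$ on $[0,r]$; that eigenvalue is $\pi^2/r^2$. This is not a cosmetic slip: the argument you are gesturing at --- log-concavity of the ground state for convex $U$ (so $w'=(\log\phi_1)''\le 0$) combined with a weighted Poincar\'e inequality for the Rayleigh quotient of $y$ in $L^2(\phi_1^2\,dx)$ --- is a standard argument, but it delivers exactly the non-sharp bound $\lambda_2-\lambda_1\ge \pi^2/r^2$. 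Improving $\pi^2$ to $3\pi^2$ is precisely the hard content of the theorem, and it is not obtained by ``absorbing the $-2w'$ term via convexity''; one needs either Lavine's perturbation argument over the cone of convex potentials or an Andrews--Clutterbuck-type modulus-of-log-concavity comparison. The sentence ``once this comparison is set up cleanly, the bound drops out'' is where the entire proof should be.
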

We are going to use this theorem to find the spectral gap for $H_{rad}$ $k\geq 5$. For this purpose we  rewrite 
bilinear form (\ref{E:qshrodinger}) in the chart associated with the map $trig$ and restrict it to the space of $\SO(k+1)$-invariant functions. We get 
\[Q_{trig}(f,g)=\int_{0}^{\pi R/2}\left(f'\bar g'+R^2\cos^2\left(\tau/R\right)f\bar g\right)w\d \tau\] 
where $w=w_{trig}$ as in (\ref{E:trig}). A substitution $f\rightarrow fw^{-1/2}$ transform $Q_{trig}(f,g)$ to 
\[E_{trig}(f,g)=\int_{0}^{\pi R/2}\left(f'\bar g'+V_{eff}(\tau)f\bar g\right)\d \tau.\]  The substitution simultaneously trivializes the inner product $\int_{0}^{\pi R/2}f\bar g w\d \tau\leadsto \int_{0}^{\pi R/2}f\bar g \d \tau$.
The differential equation associated  is 
\begin{equation}
N_k(f)=-f''+V_{eff,k}f=\lambda f.
\end{equation}
Here we introduce $k$-dependence of the potential $V_{eff}$, which was implicit in the previous formulas of this section. The reader can find a formula for $V_{eff}$ and discussion of its properties in Appendix \ref{S:poly}. The given formulation of the spectral problem is unitary equivalent to the formulation given in Definition  \ref{D:domain} if we define 
 \[\begin{split}
 &D'(a,b)=\{f\in AC_{loc}(a,b)| f'\in AC_{loc}(a,b)\}\quad  \Delta'_k(a,b)=\left\{f\in D'(a,b)|f, N_k(f)\in L^2(a,b)\right\},0\leq a<b\leq \pi R/2\\
 &D'=D'(0,\pi R/2)\quad  \Delta'_k=\Delta'_k(0,\pi R/2)\}
 \end{split}\]
\begin{definition}\label{D:domain}
Define the operator $S_k:\Omega'_k\rightarrow L^2(0,\pi R/2)$, $k=5,6,\dots$ such that  $\Omega'_k=\Delta'_k, $
and $S_kf=N_kf$  for $f\in \Omega_k$.
\end{definition}
Unfortunately Lavine's theorem is valid only for a regular Sturm-Liouville. 
Following \cite{Zettl} we approximate  a singular Sturm-Liouville problem  on an interval $(0,\pi R/2)$ by a sequence of regular Sturm-Liouville problems on
truncated intervals $(a_r,b_r)$ where
\[0 < a_r < b_r < \pi R/2\quad  a_r\geq a_{r+1}, a_r\rightarrow 0 \quad b_{r+1}\geq b_r, b_r\rightarrow \pi R/2 \text{ as } r\rightarrow \infty\]
By $S_r=S_{r,k}$ we denote self-adjoint
realizations of $N_k$ on the intervals $(a_r, b_r)$. Thus 
$S_r$ are self-adjoint operators in the Hilbert spaces $H_r =
L^2((a_r,b_r))$. The spectrum and eigenvalues are denoted
by $\sigma(S_r), \lambda_n(S_r), n\in \mathbb{N}$. Among possible $S_r$ there are  "inherited" operators $S^i_r$ that are  defined by "inherited"
boundary conditions. Here is an adaptation of the general Definition 10.8.1 from \cite{Zettl}  to our situation.
\begin{definition}\label{D:domanetr}
  The domain of the self-adjoint $S^i_{r,k}$ $k\geq 5$ is
  $\Omega'_{r,k}=\{f\in \Delta'_k(a_r,b_r)| \lim_{t\rightarrow a_r^{+}}f(t)= \lim_{t\rightarrow b_r^{-}}f(t)=0\}$ -the Dirichlet condition. 
\end{definition}

We will also need the following approximation theorem adapted to our needs.
\begin{theorem} \label{T:cont}\cite{BEWZ} 
Suppose
\[My = —(py')' + qy = \lambda wy \text{ on } J = (a, b), — \infty < a < b < \infty, \]
under the conditions
\[\frac{1}{p},q,w \in  L_{loc}(J,\mathbb{R}), w > 0 \text{ a.e. on } J \]
in the Hilbert space $H = L^2(J,w)$
and 
\[-\infty < a < a_r < b_r < b < \infty\]
hold. Let $S$ be a self-adjoint realization of $(M,w)$ on $(a,b)$ where $a,b$ limit point singularity of $M$. Let $S^i_r$ be the
inherited operator on $(a_r,b_r)$. Assume that the spectrum $\sigma(S)=\{\lambda_{n}(S)|n\in \mathbb{N}\}$ is bounded below and discrete. 
Then
\[\lambda_n(S^i_r) \rightarrow  \lambda_n(S), \text{ as }r\rightarrow \infty , \text{ for each } n\in \mathbb{N}\]
\end{theorem}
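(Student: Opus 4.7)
The plan is to prove Theorem \ref{T:cont} by the min-max variational characterization of eigenvalues, exploiting that in the limit point case $C_c^\infty(a,b)$ is a form core for $S$. Throughout, let $Q_S$ denote the closed sesquilinear form associated with $S$, with form domain $\mathcal{Q}(S)$, and let $Q_r$ denote the form associated with $S^i_r$ with form domain $\mathcal{Q}_r$. By the inherited Dirichlet boundary conditions, elements of $\mathcal{Q}_r$ extend by zero to yield elements of $\mathcal{Q}(S)$, preserving both the $L^2_w$-norm and the form value.

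\textbf{Step 1: Monotone upper bound on $\lambda_n(S)$.} Using the extension-by-zero embedding $\mathcal{Q}_r \hookrightarrow \mathcal{Q}(S)$, the min-max principle immediately gives $\lambda_n(S) \leq \lambda_n(S^i_r)$ for every $n$ and $r$. Moreover, since $(a_r,b_r) \subset (a_{r+1},b_{r+1})$, the same embedding argument applied between the two truncations yields $\lambda_n(S^i_{r+1}) \leq \lambda_n(S^i_r)$. Hence $\{\lambda_n(S^i_r)\}_r$ is decreasing and bounded below, so $\mu_n := \lim_{r\to\infty} \lambda_n(S^i_r) \geq \lambda_n(S)$ exists.

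\textbf{Step 2: The matching upper bound on $\mu_n$.} This is the substantive part. The key input is the following density statement, which holds precisely because $a$ and $b$ are limit-point singularities of $M$ under the local-integrability hypotheses on $1/p$, $q$, $w$: the space $C_c^\infty(a,b)$ is a core for the form $Q_S$. Granting this, fix $n$ and $\epsilon > 0$, and let $\psi_1,\dots,\psi_n$ be an $L^2_w$-orthonormal system of eigenfunctions of $S$ with eigenvalues $\lambda_1(S) \leq \cdots \leq \lambda_n(S)$. Approximate each $\psi_j$ in the form norm by $\phi_j^\epsilon \in C_c^\infty(a,b)$; by Gram--Schmidt one may assume $\{\phi_j^\epsilon\}$ is $L^2_w$-orthonormal and $|Q_S(\phi_j^\epsilon,\phi_k^\epsilon) - \lambda_j(S)\delta_{jk}|$ is as small as desired. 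Since finitely many compactly supported functions share a common support $[c,d] \Subset (a,b)$, for all sufficiently large $r$ we have $[c,d] \subset (a_r,b_r)$, so each $\phi_j^\epsilon$ lies in $\mathcal{Q}_r$ with the same form value. The min-max principle for $S^i_r$ applied to $V_n := \mathrm{span}\{\phi_1^\epsilon,\dots,\phi_n^\epsilon\}$ now gives $\lambda_n(S^i_r) \leq \lambda_n(S) + o_\epsilon(1)$. Sending $r\to\infty$ and then $\epsilon\to 0$ produces $\mu_n \leq \lambda_n(S)$, completing the proof when combined with Step 1.

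\textbf{Main obstacle.} The substantive content lies entirely in the form-core statement $\overline{C_c^\infty(a,b)}^{\,Q_S} = \mathcal{Q}(S)$. In the limit point case the deficiency indices of the minimal operator are zero at each singular endpoint, so the maximal and minimal operators of $M/w$ differ only through the action of the form, and in particular no boundary conditions are imposed in defining $S$. A truncation-and-cutoff argument — multiplying $f \in \mathcal{Q}(S)$ by smooth cutoffs $\chi_r$ with $\chi_r = 1$ on $[a_r,b_r]$ and $\chi_r$ supported in $(a_{r-1},b_{r+1})$, and controlling $Q_S(\chi_r f - f, \chi_r f - f)$ via the integrability assumptions — delivers the density; the bound $Q_S((1-\chi_r)f,(1-\chi_r)f) \to 0$ requires careful use of the limit-point characterization (e.g.\ that no non-trivial solution of $Mu = \lambda wu$ near the endpoint is $L^2_w$ in a way compatible with the form). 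Once this density is in hand, Steps 1 and 2 are routine min-max manipulations. An alternative, perhaps more robust, route would be to deduce strong resolvent convergence $(S^i_r - z)^{-1} \to (S-z)^{-1}$ for some $z$ in the common resolvent set and then invoke the general fact that for operators with purely discrete spectrum, strong resolvent convergence implies convergence of individual eigenvalues — but this again reduces to the same form-core input, so we pursue the direct variational argument as the more transparent option.
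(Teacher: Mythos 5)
The paper offers no proof of Theorem \ref{T:cont}: it is imported verbatim from \cite{BEWZ}, so there is nothing internal to compare against. Your variational route (Dirichlet bracketing via extension by zero, plus spectral exactness via approximating eigenfunctions by compactly supported elements of the form domain) is a legitimate and standard way to prove the result \emph{for Dirichlet truncations}, and it is what the author actually needs given Definition \ref{D:domanetr}. It is, however, a genuinely different route from \cite{BEWZ}, who establish (generalized) strong resolvent convergence of the induced operators and then upgrade to convergence of individual eigenvalues using boundedness below and discreteness; their argument covers arbitrary inherited boundary conditions at the truncation points, for which your Step 1 monotonicity is simply false. Since the theorem as stated only says ``inherited operator,'' you are implicitly specializing to the Dirichlet case; that is consistent with how the paper uses the theorem, but you should say so explicitly.

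There is one concrete gap: under the stated hypotheses $1/p, q, w \in L_{loc}(J,\mathbb{R})$, the space $C_c^\infty(a,b)$ need not be contained in the form domain of $S$ at all, let alone be a form core. The form is $\int p|f'|^2 + q|f|^2$, and only $1/p$ is assumed locally integrable; $p$ itself may fail to be locally integrable, so $\int p|f'|^2$ can diverge for a smooth compactly supported $f$, and $pf'$ need not be absolutely continuous, so such $f$ also fail to lie in the maximal domain. The repair is standard but changes the core you should use: at limit-point endpoints the minimal operator is essentially self-adjoint, and the compactly supported functions in the \emph{maximal domain} $\{f \in AC_{loc} : pf' \in AC_{loc},\ f, w^{-1}Mf \in L^2_w\}$ form an operator core for $S$, hence (since $S$ is bounded below) a form core. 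Running your Step 2 with these functions in place of $C_c^\infty$ — they still have compact support, so they satisfy the Dirichlet conditions on $(a_r,b_r)$ for $r$ large and the Gram--Schmidt and min-max manipulations go through unchanged — closes the argument. Your closing paragraph's cutoff construction $\chi_r f$ has the same defect (multiplying by a smooth cutoff need not keep you in the form domain when $p$ is rough, and controlling $\int p|\chi_r' f|^2$ requires $p$ locally integrable), so it should likewise be replaced by the citation to the essential self-adjointness of the minimal operator in the limit-point case rather than proved by hand.
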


\begin{corollary}

The operator  $S_k$ (Definition \ref{D:domain}) has a discrete simple real spectrum. The eigenvalues $\lambda_1$ and $\lambda_2$ satisfy $\lambda_2-\lambda_1\geq \frac{12}{R^2}$ if $k\geq 5$ and $R<59049 (k-4) (k-2)/4096$.
\end{corollary}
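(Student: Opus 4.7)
The plan is to reduce the singular Sturm--Liouville problem defining $S_k$ to a limit of regular Dirichlet problems on nested subintervals, apply Lavine's Theorem~\ref{P:Lavine} on each, and pass to the limit via Theorem~\ref{T:cont}. The only genuine input besides these two theorems is a convexity property of the effective potential $V_{\textit{eff},k}$, and this is where the explicit hypothesis $R<59049(k-4)(k-2)/4096$ enters.

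First I would record the endpoint classification of $N_k=-d^2/d\tau^2+V_{\textit{eff},k}$ on $(0,\pi R/2)$. The unitary substitution $f\mapsto f w_{\mathrm{trig}}^{-1/2}$ preserves the limit point/limit circle classification, so Proposition~\ref{P:endpoint}(1)(c) and (2)(b) show that for $k\geq 5$ both endpoints are limit point for $N_k$. Consequently $S_k=N_k$ on $\Omega_k'=\Delta_k'$ is self-adjoint without any boundary conditions, and the same Hardy-type arguments used in the proof of the spectral theorem for $T$ earlier in the paper show that $\sigma(S_k)$ is real, simple, discrete and bounded below. This handles the first sentence of the corollary.

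Next I would establish convexity of $V_{\textit{eff},k}$ on the whole interval $(0,\pi R/2)$ under the stated bound on $R$. This is where the Appendix~\ref{S:poly} analysis of $V_{\textit{eff}}$ is used: $V_{\textit{eff},k}$ is an explicit rational-trigonometric expression in $\tau$ and the sign of $V_{\textit{eff},k}''$ is governed by a polynomial whose positivity on $[0,\pi R/2]$ is equivalent to a bound of the form $R^2\leq C(k)$; the explicit threshold $R<59049(k-4)(k-2)/4096$ is exactly the range in which this polynomial stays non-negative. I would simply invoke that computation.

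Now fix truncation sequences $a_r\downarrow 0$, $b_r\uparrow \pi R/2$ and let $S^i_{r,k}$ be the inherited Dirichlet realization of $N_k$ on $(a_r,b_r)$ (Definition~\ref{D:domanetr}). Each $S^i_{r,k}$ is regular, so Lavine's Theorem~\ref{P:Lavine} applies to $V_{\textit{eff},k}\!\restriction_{[a_r,b_r]}$, which inherits convexity from the previous step. This gives
\[
\lambda_2(S^i_{r,k})-\lambda_1(S^i_{r,k})\;\geq\;\frac{3\pi^2}{(b_r-a_r)^2}\;>\;\frac{3\pi^2}{(\pi R/2)^2}\;=\;\frac{12}{R^2}.
\]
Finally Theorem~\ref{T:cont} applies because the hypotheses on $1/p,q,w$ translate trivially after the $w^{-1/2}$-substitution and because $a=0$, $b=\pi R/2$ are limit point for $k\geq 5$. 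Hence $\lambda_n(S^i_{r,k})\to\lambda_n(S_k)$ for $n=1,2$, and passing to the limit preserves the inequality $\lambda_2(S_k)-\lambda_1(S_k)\geq 12/R^2$, as desired.

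The main obstacle is the convexity verification: Lavine's inequality is very rigid (it demands global convexity of the potential on the whole interval), so the entire content of the scalar threshold $59049(k-4)(k-2)/4096$ is hidden in a polynomial positivity check for $V_{\textit{eff},k}''$, and the factor $(k-4)(k-2)$ signals that the argument genuinely breaks down for small $k$. Everything else--self-adjointness, discreteness, and the limit--is a routine application of the machinery already set up in the preceding subsection.
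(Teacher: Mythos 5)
Your proposal is correct and follows essentially the same route as the paper: truncate to regular Dirichlet problems on $(a_r,b_r)$, invoke convexity of $V_{\textit{eff},k}$ (deferred to the appendix in both cases) so that Lavine's theorem gives $\Gamma(S^i_{r,k})\geq 3\pi^2/(b_r-a_r)^2 > 12/R^2$, and pass to the limit with Theorem \ref{T:cont}. Your additional remarks on the limit-point classification at both endpoints for $k\geq 5$ and on discreteness of $\sigma(S_k)$ make explicit the hypotheses of Theorem \ref{T:cont} that the paper leaves implicit (via the unitary equivalence with $T_k$), and you correctly use the constant $3\pi^2/r^2$ where the paper's statement of Lavine's theorem has a typographical $3\pi/r^2$.
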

\begin{proof}
The operators $S^i_r$ (see Definition \ref{D:domanetr}) define a regular Sturm-Liouville problem. They have discrete spectrum (see e.g. \cite{Zettl}). Let $\Gamma (A)$ be the difference $\lambda_2(A)-\lambda_1(A)$ .    By Theorem \ref{T:cont} the inequality $\Gamma(S_n^i)\geq \frac{3\pi}{(b_n-a_n)^2}$ implies the statement of corollary. Convexity of $V_{eff}$ has been verified in Proposition \ref{P:convex}. Therefore corollary follows from Theorem \ref{P:Lavine}.
\end{proof}

The same form in $\tau$-coordinate is 
\[
Q(f,g)-\lambda(f,g)= \int_{0}^{\pi R/2}\left(f'\bar g'+(R^2\cos^2(\tau/R)-\lambda)f\bar g\right)  w\d \tau 
\]
The fraction $\frac{Q(f,f)}{(f,f)}$ for function $f(x)=1$ is equal to $\frac{1 - 2 k}{1 - 3 k}R^2$
we conclude that the first eigenvalue satisfies the inequality \[0\leq \lambda_1\leq \frac{1 - 2 k}{1 - 3 k}R^2\]

We transform the form to Liouville form by a change of variables $f=w^{-\frac{1}{2}}F$, where $w= \sin^{k-2}\left(\frac{\tau}{R}\right)\cos^{2k-3}\left(\frac{\tau}{R}\right)(1+\sin^{2}\left(\frac{\tau}{R}\right))$.
The result is 
\[Q(F,G)=\frac{R^{3k-3}}{ 2^{\frac{5k-4}{2}}}  \int_{0}^{\frac{\pi R}{2}}F'(\tau)\bar G'(\tau)+V(\tau)F(\tau)\bar G(\tau)\d \tau\]
Normalization of the inner product is $\frac{R^{3k-3}}{ 2^{\frac{5k-4}{2}}} \int_{0}^{\frac{\pi R}{2}} F(\tau)\bar G(\tau)\d \tau$
We set $V_{eff}=V+R^2\cos^2(\tau/R)-\lambda$. Then $V_{eff}=\frac{A(\csc(\tau/R))}{B(\csc(\tau/R))}$, where $A$ and $B$ are the following polynomials:
\begin{equation}
\begin{split}
&A(x)=\left(k^2-6 k+8\right) x^{10}+  \left(-4 k^2+12 k+4 R^4-4 \lambda  R^2-6\right)x^8+\\
&\left(-2 k^2-8 k-4 \lambda  R^2+5\right)x^6 + \left(12 k^2-44 k-8 R^4+4 \lambda  R^2+44\right)x^4+\\
& \left(9 k^2-18 k+4 \lambda  R^2+9\right)x^2+4 R^4\\
&B(x)=4 R^2 x^2 \left(x^2-1\right) \left(x^2+1\right)^2
\end{split}
\end{equation}
Function $V_{eff}(\tau)$ has poles at $z=\frac{\pi R}{2}s$, $i R\log \left(1+\sqrt{2}\right)+\pi R s$, $iR \log \left(\sqrt{2}-1\right)+\pi R s, s\in \Z$.
\[\begin{split}
&V_{eff}\sim \frac{k^2-6 k+8}{4 (\tau-\pi R s) ^2 }+\dots,\mu=2-k/2,k/2-1\\
&V_{eff}\sim \frac{4 k^2-16 k+15}{4 \left(\tau -(\pi/2-\pi s)R\right)^2 }+\dots,  \mu=5/2-k,k-3/2\\
&V_{eff}\sim -\frac{1}{4  \left(\tau -R(i \log \left(1+\sqrt{2}\right)+\pi s)\right)^2}+\dots, \mu=1/2\\
&V_{eff}\sim -\frac{1}{4  \left(\tau -R(i \log \left(\sqrt{2}-1\right)+\pi s)\right)^2}+\dots \mu=1/2
\end{split}\]
In the above formulas $\mu$ are the characteristic exponents of the singularity.
Local solutions for various values of $k$ $\tau'=\tau-\pi R/2$
\begin{equation}
\begin{split}
&{\bf k=2}\\
&\tau y_1(\tau^2)\\
& y_2(\tau^2)\\
& \tau'^{1/2}y_1(\tau'^2), \\
&\ln(\tau')\tau'^{1/2} y_1(\tau'^2)+\tau'^{5/2}y_2(\tau'^2)\\
&{\bf k=3}\\
& \tau^{1/2}y_1(\tau^2), \\
&\ln(\tau) \tau^{1/2}y_1(\tau^2)+\tau^{5/2}y_2(\tau^2), \\
& \tau'^{3/2}y_1(\tau'^2), \\
&\tau'^{3/2}(\ln(\tau') y_1(\tau'^2)+\tau'^{-1/2}y_2(\tau'^2)\\
&{\bf k= 4}\\
&y_1(\tau^2)\\
&\tau y_2(\tau^2)\\
&\tau'^{5/2}y_1(\tau'^2)\\
&\ln(\tau')\tau'^{5/2}y_2(\tau'^2)+\tau'^{-3/2}y_1(\tau'^2)\\
&{\bf k\geq 5}\\
&\tau^{k/2-1}y_1(\tau^2)\\
&\ln(\tau)\tau^{k/2-1}y_1(\tau^2)+\tau^{2-k/2}y_2(\tau^2)\\
&\tau'^{k-3/2}y_1(\tau'^2)\\
&\ln(\tau')\tau'^{k-3/2}y_2(\tau'^2)+\tau'^{5/2-k}y_1(\tau'^2)\\
\end{split}
\end{equation}
If $k\geq 5$ and $R$ is sufficiently small the function $V$ is convex. From this we conclude that 
\[\lambda_2-\lambda_1\geq \frac{12}{R^2} \]

\end{document}